\def\renewtheorem#1{%
  \expandafter\let\csname#1\endcsname\relax
  \expandafter\let\csname c@#1\endcsname\relax
  \gdef\renewtheorem@envname{#1}
  \renewtheorem@secpar
}
\def\renewtheorem@secpar{\@ifnextchar[{\renewtheorem@numberedlike}{\renewtheorem@nonumberedlike}}
\def\renewtheorem@numberedlike[#1]#2{\newtheorem{\renewtheorem@envname}[#1]{#2}}
\def\renewtheorem@nonumberedlike#1{  
\def\renewtheorem@caption{#1}
\edef\renewtheorem@nowithin{\noexpand\newtheorem{\renewtheorem@envname}{\renewtheorem@caption}}
\renewtheorem@thirdpar
}
\def\renewtheorem@thirdpar{\@ifnextchar[{\renewtheorem@within}{\renewtheorem@nowithin}}
\def\renewtheorem@within[#1]{\renewtheorem@nowithin[#1]}
\newtheoremstyle{framedthmenv}%
  {0cm}
  {0cm}
  {\@acmdefinitionbodyfont}
  {\@acmdefinitionindent}
  {\@acmdefinitionheadfont}
  {:}
  {.5em}
  {\thmname{#1}\thmnumber{ #2}\thmnote{ {\@acmdefinitionnotefont(#3)}}}
\theoremstyle{acmplain}
\theoremstyle{acmdefinition}
\theoremstyle{framedthmenv}
\newtheorem{problem}{Problem} 
\crefname{problem}{Problem}{Problems}
\Crefname{problem}{Problem}{Problems}
\newtheorem{algorithm}{Algorithm} 
\crefname{algorithm}{Algorithm}{Algorithms}
\Crefname{algorithm}{Algorithm}{Algorithms}
\theoremstyle{acmplain}
\newcommand{\myparagraph}[1]{\paragraph{\hspace{-0.35cm}\textbf{#1}}}  
\newcommand{\storeArg}{} 
\newcommand{\bigO}[1]{O(#1)} 
\newcommand{\softO}[1]{\mathchoice{\tilde{O}\left(#1\right)}{O\tilde{~}(#1)}{O\tilde{~}(#1)}{O\tilde{~}(#1)}} 
\newcommand{\expmatmul}{\omega} 
\newcommand{\algoname}[1]{{\normalfont\textsc{#1}}}
\newcommand{\problemname}[1]{{\normalfont\textsc{#1}}}
\newcommand{\algoword}[1]{\emph{\textsf{#1}}}
\newcommand{\assign}{\leftarrow}
\newcommand{\inlcomment}[1]{\texttt{\small/* #1 */}}
\renewcommand{\ge}{\geqslant} 
\renewcommand{\le}{\leqslant} 
\newcommand{\ZZ}{\mathbb{Z}} 
\newcommand{\NN}{\mathbb{Z}_{\ge 0}} 
\newcommand{\ZZp}{\mathbb{Z}_{> 0}} 
\newcommand{\tuple}[1]{\mathbf{#1}}  
\newcommand{\var}{x} 
\newcommand{\field}{\mathbb{K}} 
\newcommand{\polRing}{\field[\var]} 
\newcommand{\module}[1][M]{\mathcal{#1}} 
\newcommand{\rdim}{m} 
\newcommand{\cdim}{n} 
\newcommand{\matSpace}[1][\rdim]{\renewcommand\storeArg{#1}\matSpaceAux} 
\newcommand{\polMatSpace}[1][\rdim]{\renewcommand\storeArg{#1}\polMatSpaceAux} 
\newcommand{\matSpaceAux}[1][\storeArg]{\field^{\storeArg \times #1}} 
\newcommand{\polMatSpaceAux}[1][\storeArg]{\polRing^{\storeArg \times #1}} 
\newcommand{\idealGen}[1]{(#1)}
\newcommand{\row}[1]{\mathbf{\MakeLowercase{#1}}} 
\newcommand{\rowgrk}[1]{\boldsymbol{#1}} 
\newcommand{\mat}[1]{\mathbf{\MakeUppercase{#1}}} 
\newcommand{\matz}{\mat{0}}  
\newcommand{\sumVec}[1]{|#1|} 
\newcommand{\trsp}[1]{#1^\mathsf{T}} 
\newcommand{\matrow}[2]{{#1}_{#2,*}} 
\newcommand{\matcol}[2]{{#1}_{*,#2}} 
\newcommand{\diag}[1]{\mathrm{diag}(#1)}  
\newcommand{\idMat}[1][\rdim]{\mat{I}_{#1}} 
\newcommand{\rdeg}[2][]{\mathrm{rdeg}_{{#1}}(#2)} 
\newcommand{\cdeg}[2][]{\mathrm{cdeg}_{{#1}}(#2)} 
\newcommand{\leadingMat}[2][\unishift]{\mathrm{lm}_{#1}(#2)} 
\newcommand{\shiftSpace}[1][\rdim]{\ZZ^{#1}} 
\newcommand{\unishift}{\mathbf{0}} 
\newcommand{\shift}[2][s]{#1_{#2}} 
\newcommand{\shifts}[1][s]{\mathbf{#1}} 
\newcommand{\kerbas}{\mat{N}} 
\newcommand{\hermite}{\mat{H}} 
\newcommand{\smith}{\mat{S}} 
\newcommand{\maxDeg}{\degExp}
\newcommand{\vsdim}{D} 
\newcommand{\order}{d} 
\newcommand{\orders}{\tuple{\order}} 
\newcommand{\mulmats}{\mat{X}}
\newcommand{\minDeg}{\delta}
\newcommand{\minDegs}{\boldsymbol{\delta}}
\newcommand{\evMat}{\mat{E}} 
\newcommand{\evRow}{\row{e}} 
\newcommand{\evMatSpace}{\matSpace[\nbun][\vsdim]}
\newcommand{\mulmatSpace}{\matSpace[\vsdim]}
\newcommand{\rev}[1]{{#1}_{\mathrm{rev}}} 
\newcommand{\expandMat}{\mathcal{E}}
\newcommand{\quoExp}{\alpha}
\newcommand{\remExp}{\beta}
\newcommand{\expand}[1]{\overline{#1}}
\newcommand{\degExp}{ \delta }
\newcommand{\morphism}{\varphi_{\module,\boldsymbol{f}}}
\newcommand{\modMat}{\mat{M}}
\newcommand{\modHer}{\mat{H}}
\newcommand{\modSpace}{\polMatSpace[\nbeq]}
\newcommand{\nbeq}{\cdim} 
\newcommand{\nbun}{\rdim} 
\newcommand{\sys}{\mat{F}} 
\newcommand{\sysSpace}{\polMatSpace[\nbun][\nbeq]} 
\newcommand{\res}{\mat{G}} 
\newcommand{\rel}{\row{p}} 
\newcommand{\relSpace}{\polMatSpace[1][\nbun]} 
\newcommand{\relbas}{\mat{P}} 
\newcommand{\relbasSpace}{\polMatSpace[\nbun]} 
\newcommand{\degDet}{\vsdim_{\modMat}} 
\newcommand{\any}{\ast}
\newcommand{\anyMat}{\boldsymbol{\ast}}
\newcommand{\modRelGen}{\operatorname{\mathcal{R}}(\modMat,\sys)}  
\newcommand{\modRel}{\operatorname{\mathcal{R}}(\modHer,\sys)}     
\newcommand{\modRelCustom}[2]{\operatorname{\mathcal{R}}(#1,#2)}
\newcommand{\rem}[2]{\mathchoice{\operatorname{Rem}\left(#1,#2\right)}{\operatorname{Rem}(#1,#2)}{\operatorname{Rem}(#1,#2)}{\operatorname{Rem}(#1,#2)}}
\newcommand{\quo}[2]{\mathchoice{\operatorname{Quo}\left(#1,#2\right)}{\operatorname{Quo}(#1,#2)}{\operatorname{Quo}(#1,#2)}{\operatorname{Quo}(#1,#2)}}
\newcommand{\quoMat}{\mat{Q}}
\newcommand{\remMat}{\mat{R}}
\newcommand{\xpnt}{r}
\newenvironment{algobox}{
  \newcommand{\algoInfo}[2]{
    \begin{algorithm}
    \label{##2}
    \emph{\algoname{##1}}
  }
  \newcommand{\dataInfos}[2]{
    \algoword{##1:}
      \begin{itemize}[leftmargin=0.8cm]
          ##2
      \end{itemize}}
  \newcommand{\dataInfo}[2]{
    \algoword{##1:} ##2 }
  \newcommand{\algoSteps}[1]{
    \setlist[enumerate,1]{leftmargin=0.5cm}
    \setlist[enumerate,2]{leftmargin=0.4cm}
    \setlist[enumerate,3]{leftmargin=0.4cm}
    \begin{enumerate}[label=\textbf{\arabic*.}]
        ##1
    \end{enumerate}
  }
  \begin{figure}[ht]
  \centering
  \addtolength\fboxsep{0.1cm}
  \begin{boxedminipage}{0.99\columnwidth}
  }
  {
  \end{algorithm}
  \end{boxedminipage}
  \end{figure}
}
\newenvironment{problembox}{
  \newcommand{\problemInfo}[2]{
    \begin{problem}
    \label{##2}
    \problemname{##1}
  }
  \newcommand{\dataInfos}[2]{
    \emph{##1:}
      \begin{itemize}[leftmargin=0.8cm]
          ##2
      \end{itemize}
    }
  \newcommand{\dataInfo}[2]{
    \emph{##1:} 
      \begin{itemize}[leftmargin=0.8cm]
          \item ##2
        \end{itemize}}

  \begin{figure}[h!]
  \centering
  \addtolength\fboxsep{0.1cm}
  \begin{boxedminipage}{0.99\columnwidth}
  }
  {
  \end{problem}
  \end{boxedminipage}
  \end{figure}
}
\begin{document}
\title{Computing Canonical Bases of Modules of Univariate Relations}

\author{Vincent Neiger}
\affiliation{%
  \institution{Technical University of Denmark}
  \city{Kgs. Lyngby} 
  \state{Denmark} 
}
\email{vinn@dtu.dk}

\author{Vu Thi Xuan}
\affiliation{%
  \institution{ENS de Lyon, LIP (CNRS, Inria, ENSL, UCBL)}
  \city{Lyon} 
  \state{France} 
}
\email{thi.vu@ens-lyon.fr}

\begin{abstract}
We study the computation of canonical bases of sets of univariate relations
$(p_1,\ldots,p_m) \in \mathbb{K}[x]^{m}$ such that $p_1 f_1 + \cdots + p_m f_m
= 0$; here, the input elements $f_1,\ldots,f_m$ are from a quotient
$\mathbb{K}[x]^n/\mathcal{M}$, where $\mathcal{M}$ is a $\mathbb{K}[x]$-module
of rank $n$ given by a basis $\mathbf{M}\in\mathbb{K}[x]^{n\times n}$ in
Hermite form. We exploit the triangular shape of $\mathbf{M}$ to generalize a
divide-and-conquer approach which originates from fast minimal approximant
basis algorithms. Besides recent techniques for this approach, we rely on
high-order lifting to perform fast modular products of polynomial matrices of
the form $\mathbf{P}\mathbf{F} \bmod \mathbf{M}$.

Our algorithm uses $O\tilde{~}(m^{\omega-1}D + n^{\omega} D/m)$ operations in
$\mathbb{K}$, where $D = \deg(\det(\mathbf{M}))$ is the $\field$-vector space
dimension of $\mathbb{K}[x]^n/\mathcal{M}$, $O\tilde{~}(\cdot)$ indicates that
logarithmic factors are omitted, and $\omega$ is the exponent of matrix
multiplication. This had previously only been achieved for a diagonal matrix
$\mathbf{M}$. Furthermore, our algorithm can be used to compute the shifted
Popov form of a nonsingular matrix within the same cost bound, up to
logarithmic factors, as the previously fastest known algorithm, which is
randomized.
\end{abstract}

%
%
%


\keywords{Polynomial matrix; shifted Popov form; division with remainder;
univariate equations; syzygy module.}

\maketitle

\section{Introduction}
\label{sec:intro}

In what follows, $\field$ is a field, $\polRing$ denotes the set of univariate
polynomials in $\var$ over $\field$, and $\polRing^{\nbun\times\nbeq}$ denotes
the set of $\nbun\times\nbeq$ (univariate) polynomial matrices.

\myparagraph{Univariate relations}

Let us consider a (free) $\polRing$-submodule $\module \subseteq
\polRing^\nbeq$ of rank $\nbeq$, specified by one of its bases, represented as
the rows of a nonsingular matrix $\modMat \in \modSpace$. Besides, let some
elements $f_1,\ldots,f_\nbun \in \polRing^\nbeq / \module$ be represented as a
matrix $\sys \in \sysSpace$. Then, the kernel of the module morphism
\[
  \begin{array}{rrcl}
    \morphism: & \polRing^\nbun & \to & \polRing^\nbeq/\module \\ 
    & (p_1,\ldots,p_\nbun) & \mapsto & p_1 f_1 + \cdots + p_\nbun f_\nbun
  \end{array}
\]
consists of relations between the $f_i$'s, and is known as a \emph{syzygy
module} \cite{Eisenbud05}. From the matrix viewpoint above, we write it as
\[
  \modRelGen = \{ \rel \in \relSpace \mid \rel \sys = \matz \bmod \modMat \},
\]
where the notation $\mat{A} = \matz \bmod \modMat$ stands for ``$\mat{A} =
\mat{Q} \modMat$ for some $\mat{Q}$'', which means that the rows of $\mat{A}$
are in the module $\module$. Hereafter, the elements of $\modRelGen$ are called
\emph{relations of $\,\modRelGen$}.

Examples of such relations are the following.
\begin{itemize}
  \item \emph{Hermite-Pad\'e approximants} are relations for $\nbeq=1$ and $\module =
    \var^\vsdim \polRing$. That is, given polynomials $f_1,\ldots,f_\nbun$, the
    corresponding approximants are all $(p_1,\ldots,p_\nbun) \in
    \polRing^\nbun$ such that $p_1 f_1 + \cdots + p_\nbun f_\nbun = 0 \bmod
    \var^\vsdim$. Fast algorithms for finding such approximants include
    \cite{BarBul91,BecLab94,GiJeVi03,ZhoLab12,JeNeScVi16}.
  \item \emph{Multipoint Pad\'e approximants:} the fast computation of
    relations when $\module$ is a product of ideals, corresponding to a
    diagonal basis $\modMat=\diag{M_1,\ldots,M_\nbeq}$, was studied in
    \cite{Beckermann92,BarBul92,BecLab97,JeNeScVi17,JeNeScVi16,Neiger16}. Many
    of these references focus on $M_1,\ldots,M_\nbeq$ which split over $\field$
    with known roots and multiplicities; then, relations are known as
    multipoint Pad\'e approximants \cite{BakGraMor96}, or also
    \emph{interpolants} \cite{BecLab97,JeNeScVi17}. In this case, a relation
    can be thought of as a solution to a linear system over $\polRing$ in which
    the $j$th equation is modulo $M_j$.
\end{itemize}

\myparagraph{Canonical bases} Since $\det(\modMat) \polRing^\nbun \subseteq
\modRelGen \subseteq \polRing^\nbun$, the module $\modRelGen$ is free of rank
$\nbun$ \cite[Sec.\,12.1, Thm.\,4]{DumFoo04}. Hence, any of its bases can be
represented as the rows of a nonsingular matrix in $\relbasSpace$, which we
call a \emph{relation basis for $\modRelGen$}.

Here, we are interested in computing relation bases in \emph{shifted Popov
form} \cite{Popov72,BeLaVi99}. Such bases are canonical in terms of the module
$\modRelGen$ and of a \emph{shift}, the latter being a tuple $\shifts \in
\ZZ^\nbeq$ used as column weights in the notion of degree for row vectors.
Furthermore, the degrees in shifted Popov bases are well controlled, which
helps to compute them faster than less constrained types of bases (see
\cite{JeNeScVi16} and \cite[Sec.\,1.2.2]{Neiger16b}) and then, once obtained,
to exploit them for other purposes (see for example \cite[Thm.\,12]{RosSto16}).
Having a shifted Popov basis of a submodule $\module \subseteq \polRing^\nbeq$
is particularly useful for efficient computations in the quotient
$\polRing^\nbeq/\module$ (see \cref{sec:division}).

In fact, shifted Popov bases coincide with Gr\"obner bases for
$\polRing$-submodules of $\polRing^\nbeq$ \cite[Chap.\,15]{Eisenbud95}, for a
term-over-position monomial order weighted by the entries of the shift. For
more details about this link, we refer to \cite[Chap.\,6]{Middeke11} and
\cite[Chap.\,1]{Neiger16b}.

For a shift $\shifts = (\shift{1},\ldots,\shift{\cdim}) \in
\shiftSpace[\cdim]$, the \emph{$\shifts$-degree} of a row vector $\row{p} =
[p_1,\ldots,p_\cdim] \in \polMatSpace[1][\cdim]$ is $\max_{1\le j\le \cdim}
(\deg(p_j) + \shift{j})$; the \emph{$\shifts$-row degree} of a matrix $\mat{P}
\in \polMatSpace[\rdim][\cdim]$ is $\rdeg[\shifts]{\mat{P}} =
(d_1,\ldots,d_\rdim)$ with $d_i$ the $\shifts$-degree of the $i$th row of
$\mat{P}$. Then, the \emph{$\shifts$-leading matrix} of $\mat{P} =
[p_{i,j}]_{ij}$ is the matrix $\leadingMat[\shifts]{\mat{P}} \in
\matSpace[\rdim][\cdim]$ whose entry $(i,j)$ is the coefficient of degree $d_i
- \shift{j}$ of $p_{i,j}$. Similarly, the list of column degrees of a matrix
$\relbas$ is denoted by $\cdeg{\relbas}$.

\begin{definition}[\cite{Kailath80,BeLaVi99}]
  \label{dfn:spopov}
  Let $\relbas \in \relbasSpace$ be nonsingular, and let $\shifts \in
  \shiftSpace$. Then, $\relbas$ is said to be in
  \begin{itemize}
    \item \emph{$\shifts$-reduced form} if $\leadingMat[\shifts]{\relbas}$ is
      invertible;
    \item \emph{$\shifts$-Popov form} if $\leadingMat[\shifts]{\relbas}$
      is unit lower triangular and $\leadingMat[\unishift]{\trsp{\relbas}}$
      is the identity matrix.
  \end{itemize}
\end{definition}

\begin{problembox}
  \problemInfo
  {Relation basis}
  {pbm:relbas}

  \dataInfos{Input}{
    \item nonsingular matrix $\modMat \in \modSpace$,
    \item matrix $\sys \in \sysSpace$,
    \item shift $\shifts \in \shiftSpace$.
  }

  \dataInfo{Output}{
    the $\shifts$-Popov relation basis $\relbas \in \relbasSpace$ for $\modRelGen$.
  }
\end{problembox}

Hereafter, when we introduce a matrix by saying that it is reduced, it is
understood that it is nonsingular. Similar forms can be defined for modules
generated by the columns of a matrix rather than by its rows; in the context of
polynomial matrix division with remainder, we will use the notion of $\relbas$
in \emph{column reduced} form, meaning that
$\leadingMat[\unishift]{\trsp{\relbas}}$ is invertible. In particular, we
remark that any matrix in shifted Popov form is also column reduced.

Considering relation bases $\relbas$ for $\modRelGen$ in shifted Popov form
offers a strong control over the degrees of their entries. As shifted (row)
reduced bases, they satisfy the \emph{predictable degree property}
\cite{Forney75}, which is at the core of the correctness of a
divide-and-conquer approach behind most algorithms for the two specific
situations described above, for example
\cite{BecLab94,GiJeVi03,GioLeb14,JeNeScVi17}. Furthermore, as column reduced
matrices they have small average column degree, which is central in the
efficiency of fast algorithms for non-uniform shifts
\cite{JeNeScVi16,Neiger16}. Indeed, we will see in \cref{lem:degdetbound} that
\[
  \sumVec{\cdeg{\relbas}} = \deg(\det(\relbas)) \le \deg(\det(\mat{M})),
\]
where $\sumVec{\cdot}$ denotes the sum of the entries of a tuple.

Below, triangular canonical bases will play an important role. A matrix
$\modMat \in \modSpace$ is in \emph{Hermite form} if $\modMat$ is upper
triangular and $\leadingMat[\unishift]{\trsp{\modMat}}$ is the identity matrix;
or, equivalently, if $\modMat$ is in $(d \nbeq, d(\nbeq-1),\ldots,d)$-Popov
form for any $d \ge \deg(\det(\modMat))$.

\myparagraph{Relations modulo Hermite forms}

Our main focus is on the case where $\modMat$ is in Hermite form and $\sys$ is
already reduced modulo $\modMat$. In this article, all comparisons of tuples
are componentwise.

\begin{theorem}
  \label{thm:relbas_hermite}
  If $\modMat$ is in Hermite form and $\cdeg{\sys} < \cdeg{\modMat}$, there is
  a deterministic algorithm which solves \cref{pbm:relbas} using
  \[
    \softO{\nbun^{\expmatmul-1} \vsdim + \nbeq^\expmatmul \vsdim / \nbun}
  \]
  operations in $\field$, where $\vsdim = \deg(\det(\modMat)) =
  \sumVec{\cdeg{\modMat}}$.
\end{theorem}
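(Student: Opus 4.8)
The plan is to build a divide-and-conquer algorithm on the columns of $\modMat$, using its upper-triangular Hermite shape to reduce one relation problem to two of smaller size, glued by a polynomial matrix product and a reduction modulo a smaller Hermite matrix. First I would partition the $\nbeq$ columns into two consecutive blocks of sizes $\nbeq_1 = \lfloor\nbeq/2\rfloor$ and $\nbeq_2 = \nbeq-\nbeq_1$; the triangular shape then gives
\[
  \modMat = \begin{pmatrix} \modMat_1 & \mat{B} \\ \matz & \modMat_2 \end{pmatrix},
  \qquad \sys = \begin{pmatrix} \sys_1 & \sys_2 \end{pmatrix},
\]
with $\modMat_1,\modMat_2$ again in Hermite form and $\deg(\det(\modMat_1)) + \deg(\det(\modMat_2)) = \vsdim$. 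Expanding $\rel\sys = \matz\bmod\modMat$, say $\rel\sys = [\row{q}_1 \mid \row{q}_2]\modMat$, against this block shape shows that $\rel$ is a relation if and only if $\rel\sys_1 = \matz\bmod\modMat_1$ and, writing $\rel\sys_1 = \row{q}_1\modMat_1$, the residual $\rel\sys_2 - \row{q}_1\mat{B}$ vanishes modulo $\modMat_2$.

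This equivalence drives the recursion. I would first recursively compute the $\shifts$-Popov relation basis $\relbas_1$ for $\modRelCustom{\modMat_1}{\sys_1}$, together with the quotient $\quoMat_1$ with $\relbas_1\sys_1 = \quoMat_1\modMat_1$. Since $\relbas_1$ is a basis for a module containing $\modRelGen$, every relation factors uniquely as $\rel = \row{u}\relbas_1$, and by the block equivalence such a $\rel$ is a relation for the full problem exactly when $\row{u} \in \modRelCustom{\modMat_2}{\res}$, where $\res = (\relbas_1\sys_2 - \quoMat_1\mat{B})\bmod\modMat_2$. I would then recursively compute the $\shifts'$-Popov basis $\relbas_2$ for $\modRelCustom{\modMat_2}{\res}$ with the shift $\shifts' = \rdeg[\shifts]{\relbas_1}$, and return $\relbas_2\relbas_1$. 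Correctness of this shift bookkeeping rests on the predictable degree property of $\shifts$-reduced bases: it yields $\rdeg[\shifts]{\row{u}\relbas_1} = \rdeg[\shifts']{\row{u}}$ for all $\row{u}$, so a $\shifts'$-minimal basis $\relbas_2$ of the second module produces a $\shifts$-minimal, hence $\shifts$-reduced, basis $\relbas_2\relbas_1$ of $\modRelGen$; a final normalization turns this $\shifts$-reduced basis into the canonical $\shifts$-Popov form. The degree bound $\sumVec{\cdeg{\relbas}} = \deg(\det(\relbas)) \le \vsdim$ from \cref{lem:degdetbound} keeps all intermediate degrees under control along the recursion.

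For the cost, I would set up a recursion in which the subproblems at each level partition both the columns of $\modMat$ and its degree-sum $\vsdim$. Two contributions give the first term. The final products $\relbas_2\relbas_1$ are products of $\nbun\times\nbun$ matrices whose determinantal degree is bounded by the node's degree-sum $\vsdim'$, hence cost $\softO{\nbun^{\expmatmul-1}\vsdim'}$ each; summing over one level gives $\softO{\nbun^{\expmatmul-1}\vsdim}$, and over the $\bigO{\log\nbeq}$ levels still $\softO{\nbun^{\expmatmul-1}\vsdim}$. The base case $\nbeq=1$ is a single-modulus syzygy problem $\rel\row{f} = 0\bmod M$, which I would solve directly by the known single-modulus routine in $\softO{\nbun^{\expmatmul-1}\deg(M)}$; since the moduli degrees sum to $\vsdim$, the leaves also contribute $\softO{\nbun^{\expmatmul-1}\vsdim}$.

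The main obstacle, and the source of the second term, is computing the residual $\res = (\relbas_1\sys_2 - \quoMat_1\mat{B})\bmod\modMat_2$ fast enough: the intermediate product $\relbas_1\sys_2$ can have degree much larger than that of $\modMat_2$, so forming it and reducing naively would exceed the budget. The plan is to never form the full product, and instead extract only the required remainder through high-order lifting, recombining precomputed reductions of shifted copies of $\sys_2$ and $\mat{B}$ modulo $\modMat_2$ against the coefficients of $\relbas_1$ and $\quoMat_1$. Charged to the $\nbeq_2\times\nbeq_2$ modulus $\modMat_2$, such a modular product of the form $\relbas\sys\bmod\modMat$ costs $\softO{\nbeq^{\expmatmul}\vsdim/\nbun}$; since $\sum_j (\nbeq_j)^{\expmatmul}\vsdim_j \le \nbeq^{\expmatmul}\vsdim$ at each level and there are $\bigO{\log\nbeq}$ levels, the total over the recursion is $\softO{\nbeq^{\expmatmul}\vsdim/\nbun}$. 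Adding the two contributions gives the claimed bound $\softO{\nbun^{\expmatmul-1}\vsdim + \nbeq^{\expmatmul}\vsdim/\nbun}$, and since no random choices are made, the algorithm is deterministic.
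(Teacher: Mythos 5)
Your decomposition (splitting the Hermite matrix into two triangular blocks, recursing on $\modMat_1$, forming the residual $\res=(\relbas_1\sys_2-\quoMat_1\mat{B})\bmod\modMat_2$ via high-order lifting, recursing on $\modMat_2$ with the updated shift $\rdeg[\shifts]{\relbas_1}$) is exactly the paper's skeleton (\cref{lem:triangular_rem}, \cref{thm:basis_transitivity}, \cref{algo:residual}). But the last step of your plan --- return $\relbas_2\relbas_1$ and fix it up by ``a final normalization'' into $\shifts$-Popov form --- is a genuine gap, and it is precisely the point where the classical divide-and-conquer fails for arbitrary shifts. The product $\relbas_2\relbas_1$ is indeed $\shifts$-reduced by the predictable degree property, but for an arbitrary shift an $\shifts$-reduced basis has no useful bound on its entry degrees: only its $\shifts$-row degrees are controlled, not its column degrees. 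Even though both factors are in shifted Popov form with $\sumVec{\cdeg{\relbas_i}}\le\vsdim_i$, a single high-degree dense column of $\relbas_2$ hitting a dense row of $\relbas_1$ makes the product carry $\Theta(\nbun^2\vsdim)$ coefficients, so with $\expmatmul<3$ you cannot even write it down within the budget $\softO{\nbun^{\expmatmul-1}\vsdim}$; your cost claim of $\softO{\nbun^{\expmatmul-1}\vsdim'}$ per product, justified only by the determinantal bound of \cref{lem:degdetbound}, does not follow (that bound controls $\deg\det$ and the column degrees of \emph{reduced} bases, not of this product). Moreover, the unspecified ``normalization'' from $\shifts$-reduced to $\shifts$-Popov is not an afterthought: for arbitrary shifts it is essentially as hard as the original problem.

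The paper's way around this is the key idea you are missing: the two recursive calls are used \emph{only} to learn the $\shifts$-minimal degree $\minDegs_1+\minDegs_2$ of $\modRelGen$ (the product $\relbas_2\relbas_1$ appears in the correctness proof, never in the algorithm), and the $\shifts$-Popov basis is then recomputed from scratch by \algoname{KnownDegreeRelations} (\cref{algo:knowndeg_relbas}): partial linearization using the known output column degrees, followed by a single shifted approximant/kernel basis computation, which is where the $\softO{\nbeq^\expmatmul\vsdim/\nbun}$ term actually comes from (you instead attribute it to the residual, whose cost in the paper is $\softO{(\nbun^{\expmatmul-1}+\nbeq^{\expmatmul-1})\vsdim}$). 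A secondary omission: besides the base case $\nbeq=1$, the paper needs the linear-algebra base case $\vsdim\le\nbun$ (\cref{prop:relbas_linalg}); it guarantees $\nbun<\vsdim$ in every recursive node, an assumption without which the cost bounds for the residual and known-degree subroutines are not valid.
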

\noindent
Here, the exponent $\expmatmul$ is so that we can multiply $\rdim \times \rdim$
matrices over $\field$ in $\bigO{\rdim^\expmatmul}$ operations in $\field$, the
best known bound being $\expmatmul < 2.38$~\cite{CopWin90,LeGall14}. The
notation $\softO{\cdot}$ means that we have omitted the logarithmic factors in
the asymptotic bound.

To put this cost bound in perspective, we note that the representation of the
input $\sys$ and $\modMat$ requires at most $(\nbun+\nbeq) \vsdim$ field
elements, while that of the output basis uses at most $\nbun \vsdim$ elements.
In many applications we have $\nbeq \in \bigO{\nbun}$, in which case the cost
bound becomes $\softO{\nbun^{\expmatmul-1}\vsdim}$, which is satisfactory.

To the best of our knowledge, previous algorithms with a comparable cost bound
focus on the case of a diagonal matrix $\modMat$.

The case of minimal approximant bases $\modMat = \var^\order \idMat[\nbeq]$ has
concentrated a lot of attention. A first algorithm with cost quasi-linear in
$\order$ was given \cite{BecLab94}. It was then improved in
\cite{GiJeVi03,Storjohann06,ZhoLab12}, obtaining the cost bound
$\softO{\nbun^{\expmatmul-1} \nbeq d} = \softO{\nbun^{\expmatmul-1} \vsdim}$
under assumptions on the dimensions $\nbun$ and $\nbeq$ or on the shift.

In \cite{JeNeScVi17}, the divide-and-conquer approach of \cite{BecLab94} was
carried over and made efficient in the more general case $\modMat =
\diag{M_1,\ldots,M_\nbeq}$, where the polynomials $M_i$ split over $\field$
with known linear factors. This approach was then augmented in
\cite{JeNeScVi16} with a strategy focusing on degree information to efficiently
compute the shifted Popov bases for arbitrary shifts, achieving the cost bound
$\softO{\nbun^{\expmatmul-1}\vsdim}$.

Then, the case of a diagonal matrix $\modMat$, with no assumption on the
diagonal entries, was solved within $\softO{\nbun^{\expmatmul-1} \vsdim +
\nbeq^\expmatmul \vsdim /\nbun}$ \cite{Neiger16}. The main new ingredient
developed in \cite{Neiger16} was an efficient algorithm for the case $\nbeq=1$,
that is, when solving a single linear equation modulo a polynomial; we will
also make use of this algorithm here.

In this paper we obtain the same cost bound as \cite{Neiger16} for any matrix
$\modMat$ in Hermite form. For a more detailed comparison with earlier
algorithms focusing on diagonal matrices $\modMat$, we refer the reader to
\cite[Sec.\,1.2]{Neiger16} and in particular Table\,2 therein.

Our algorithm essentially follows the approach of \cite{Neiger16}. In
particular, it uses the algorithm developed there for $\nbeq=1$. However,
working modulo Hermite forms instead of diagonal matrices makes the computation
of \emph{residuals} much more involved. The residual is a modular product
$\relbas \sys \bmod \modMat$ which is computed after the first recursive call
and is to be used as an input replacing $\sys$ for the second recursive call.
When $\modMat$ is diagonal, its computation boils down to the multiplication of
$\relbas$ and $\sys$, although care has to be taken to account for their
possibly unbalanced column degrees. However, when $\modMat$ is triangular,
computing $\relbas \sys \bmod \modMat$ becomes a much greater challenge: we
want to compute a matrix remainder instead of simply taking polynomial
remainders for each column separately. We handle this, while still taking
unbalanced degrees into account, by resorting to high-order lifting
\cite{Storjohann03}.

\myparagraph{Shifted Popov forms of matrices}

A specific instance of \cref{pbm:relbas} yields the following problem: given a
shift $\shifts \in \ZZ^{\nbeq}$ and a nonsingular matrix $\mat{M} \in
\polMatSpace[\nbeq]$, compute the $\shifts$-Popov form of $\mat{M}$. Indeed,
the latter is the $\shifts$-Popov relation basis for
$\modRelCustom{\mat{M}}{\idMat[\nbeq]}$ (see \cref{lem:relbas_popovform}). 

To compute this relation basis efficiently, we start by computing the Hermite
form $\hermite$ of $\modMat$, which can be done deterministically in
$\softO{\nbeq^{\expmatmul} \lceil\degDet/\nbeq\rceil}$ operations
\cite{LaNeZh17}. Here, $\degDet$ is the \emph{generic determinant bound}
\cite{GuSaStVa12}; writing $\modMat = [a_{ij}]$, it is defined as
\[
  {\textstyle
    \degDet = \max_{\pi \in S_\nbeq} \sum_{1 \le i\le \nbeq} \max(0,\deg(a_{i,\pi_i}))
  }
\]
where $S_\nbeq$ is the set of permutations of $\{1,\ldots,\nbeq\}$. In
particular, $\degDet/\nbeq$ is bounded from above by both the average of the
degrees of the columns of $\modMat$ and that of its rows. For more details
about this quantity, we refer to \cite[Sec.\,6]{GuSaStVa12} and
\cite[Sec.\,2.3]{LaNeZh17}.

Since the rows of $\hermite$ generate the same module as $\modMat$, we have
$\modRelCustom{\modMat}{\idMat[\nbeq]} = \modRelCustom{\modHer}{\idMat[\nbeq]}$
(see \cref{lem:rel_operations}). Then, applying our algorithm for relations
modulo $\modHer$ has a cost of $\softO{\nbeq^{\expmatmul-1}
\deg(\det(\modHer))}$ operations, according to \cref{thm:relbas_hermite}. This
yields the next result.

\begin{theorem}
  \label{thm:spopov}
  Given a shift $\shifts \in \ZZ^\nbeq$ and a nonsingular matrix $\modMat \in
  \modSpace$, there is a deterministic algorithm which computes the
  $\shifts$-Popov form of $\modMat$ using
  \[
    \softO{\nbeq^\expmatmul \lceil\degDet/\nbeq\rceil} \;\subseteq\;
    \softO{\nbeq^\expmatmul \deg(\modMat)}
  \]
  operations in $\field$.
\end{theorem}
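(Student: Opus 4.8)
The plan is to reduce \cref{thm:spopov} to \cref{thm:relbas_hermite} by recognizing that the $\shifts$-Popov form of $\modMat$ is precisely the $\shifts$-Popov relation basis for $\modRelCustom{\modMat}{\idMat[\nbeq]}$, as noted in the excerpt via \cref{lem:relbas_popovform}. Indeed, a row vector $\rel$ lies in $\modRelCustom{\modMat}{\idMat[\nbeq]}$ exactly when $\rel \idMat[\nbeq] = \matz \bmod \modMat$, i.e.\ when $\rel$ is a $\polRing$-combination of the rows of $\modMat$; so this relation module is the row module of $\modMat$ itself, whose $\shifts$-Popov basis is the $\shifts$-Popov form of $\modMat$. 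The obstacle to applying \cref{thm:relbas_hermite} directly is that $\modMat$ is an arbitrary nonsingular matrix, not necessarily in Hermite form, and the input matrix $\idMat[\nbeq]$ trivially satisfies the degree hypothesis only once the modulus is triangular.

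To remove this obstacle, I would first compute the Hermite form $\modHer$ of $\modMat$. Using the deterministic algorithm of \cite{LaNeZh17}, this costs $\softO{\nbeq^{\expmatmul} \lceil \degDet/\nbeq \rceil}$ operations in $\field$, where $\degDet$ is the generic determinant bound. Since the rows of $\modHer$ and of $\modMat$ generate the same $\polRing$-module, we have $\modRelCustom{\modMat}{\idMat[\nbeq]} = \modRelCustom{\modHer}{\idMat[\nbeq]}$ (this is the content of \cref{lem:rel_operations} invoked in the excerpt), so the two problems share the same $\shifts$-Popov relation basis. With $\modHer$ now in Hermite form, the hypothesis $\cdeg{\idMat[\nbeq]} < \cdeg{\modHer}$ holds, because every diagonal entry of a Hermite form is monic of degree at least $1$ on each nontrivial column while the columns of $\idMat[\nbeq]$ have degree $0$; the degenerate columns where $\modHer$ has a degree-$1$ diagonal are handled since $\cdeg{\idMat[\nbeq]}=\mathbf 0 < \mathbf 1$ componentwise in those positions.

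Next I would invoke \cref{thm:relbas_hermite} with input $(\modHer, \idMat[\nbeq], \shifts)$, which deterministically produces the desired $\shifts$-Popov form. Its cost is $\softO{\nbeq^{\expmatmul-1} \vsdim + \nbeq^{\expmatmul} \vsdim/\nbeq}$ with $\vsdim = \deg(\det(\modHer)) = \deg(\det(\modMat))$ and $\nbun = \nbeq$; substituting $\nbun = \nbeq$ collapses both terms so that this bound simplifies to $\softO{\nbeq^{\expmatmul-1}\vsdim}$. The final step is to combine the two cost estimates and bound $\vsdim$ by the generic determinant bound: since $\deg(\det(\modMat)) \le \degDet \le \nbeq \lceil \degDet/\nbeq\rceil$, the relation-basis cost $\softO{\nbeq^{\expmatmul-1}\vsdim}$ is dominated by the Hermite-form cost $\softO{\nbeq^{\expmatmul}\lceil\degDet/\nbeq\rceil}$, so the total is $\softO{\nbeq^{\expmatmul}\lceil\degDet/\nbeq\rceil}$. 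The inclusion $\softO{\nbeq^{\expmatmul}\lceil\degDet/\nbeq\rceil} \subseteq \softO{\nbeq^{\expmatmul}\deg(\modMat)}$ follows from $\degDet \le \nbeq\deg(\modMat)$, which gives the stated bound.

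The main obstacle I anticipate is not the cost arithmetic but the careful verification that the Hermite-form reduction preserves the relation module together with its canonical $\shifts$-Popov basis, and that $\idMat[\nbeq]$ genuinely meets the column-degree hypothesis of \cref{thm:relbas_hermite} for every Hermite form, including the edge case of identity columns against degree-one pivots. Both points are resolved by the referenced lemmas \cref{lem:relbas_popovform} and \cref{lem:rel_operations}, so the proof reduces to assembling these facts and the two cost bounds.
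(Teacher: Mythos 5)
Your route is the same as the paper's: identify the $\shifts$-Popov form of $\modMat$ with the $\shifts$-Popov relation basis for $\modRelCustom{\modMat}{\idMat[\nbeq]}$ (\cref{lem:relbas_popovform}), compute the Hermite form $\modHer$ deterministically via \cite{LaNeZh17} in $\softO{\nbeq^\expmatmul\lceil\degDet/\nbeq\rceil}$ operations, pass to $\modRelCustom{\modHer}{\idMat[\nbeq]}$ by \cref{lem:rel_operations}, and conclude with \cref{thm:relbas_hermite}; the cost bookkeeping also matches. However, one step is genuinely wrong: your claim that $\cdeg{\idMat[\nbeq]} < \cdeg{\modHer}$ holds for every Hermite form. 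A Hermite form may have diagonal entries of degree $0$, i.e.\ identity columns: for instance $\modHer = \big[\begin{smallmatrix} 1 & 0 \\ 0 & \var \end{smallmatrix}\big]$, or $\modHer = \idMat[\nbeq]$ whenever $\modMat$ is unimodular. In such a column the required strict inequality reads $0 < 0$ and fails, so \cref{thm:relbas_hermite} cannot be invoked on $(\modHer, \idMat[\nbeq], \shifts)$ as you propose. Your edge-case discussion treats degree-$1$ diagonals, which are unproblematic ($0 < 1$ componentwise), and misses the actual degenerate case of degree-$0$ diagonals.

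The repair uses lemmas you already cite, and it is what the paper does (see the ``General relation bases'' paragraph of \cref{sec:intro} and the opening of \cref{sec:relbas}). First replace $\idMat[\nbeq]$ by $\rem{\idMat[\nbeq]}{\modHer}$: by the third identity of \cref{lem:rel_operations}, applied with $\mat{B} = -\quo{\idMat[\nbeq]}{\modHer}$, this leaves the module $\modRelCustom{\modHer}{\idMat[\nbeq]}$ --- and hence its $\shifts$-Popov basis --- unchanged, and by \cref{thm:quo_rem} the new input satisfies the strict componentwise degree inequality; in particular, its columns facing the degree-$0$ columns of $\modHer$ are necessarily zero. This is one polynomial matrix division (\cref{algo:division}), whose cost is dominated by that of the other steps. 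Then \cref{cor:cleaning} discards the identity columns of $\modHer$ together with the matching zero columns of the reduced input, leaving a Hermite-form modulus whose column degrees are all positive --- exactly the hypothesis under which \cref{prop:algo:relbas} is stated --- so that the relation-basis computation costs $\softO{\nbeq^{\expmatmul-1}\vsdim}$ with $\vsdim = \deg(\det(\modMat)) \le \degDet$, dominated by the Hermite form step. With this correction your argument coincides with the paper's proof.
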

\noindent
A similar cost bound was obtained in \cite{Neiger16}, yet with a randomized
algorithm. The latter follows the approach of \cite{GupSto11} for computing
Hermite forms, whose first step determines the Smith form $\smith$ of $\modMat$
along with a matrix $\sys$ such that the sought matrix is the $\shifts$-Popov
relation basis for $\modRelCustom{\smith}{\sys}$, with $\smith$ being therefore
a diagonal matrix. Here, relying on the deterministic computation of the
Hermite form of $\modMat$, our algorithm for relation bases modulo Hermite
forms allows us to circumvent the computation of $\smith$, for which the
currently fastest known algorithm is Las Vegas randomized \cite{Storjohann03}.
For a more detailed comparison with earlier row reduction and Popov forms
algorithms, we refer to \cite[Sec.\,1.1]{Neiger16} and Table\,1 therein.

\myparagraph{General relation bases}

To solve the general case of \cref{pbm:relbas}, one can proceed as follows:
\begin{itemize}
  \item find the Hermite form $\modHer$
    of $\modMat$, using \cite[Algo.\,1 and\,3]{LaNeZh17};
  \item reduce $\sys$ modulo $\modHer$, for example using
    \cref{algo:division};
  \item apply \cref{algo:relbas} for relations modulo a Hermite form.
\end{itemize}

\myparagraph{Outline}

We first give basic properties about matrix division and relation bases
(\cref{sec:preliminaries}). We then focus on the fast computation of residuals
(\cref{sec:division}). After that, we discuss three situations which have
already been solved efficiently in the literature (\cref{sec:building_blocks}):
when $\nbeq=1$, when information on the output degrees is available, and when
$\vsdim \le \nbun$. Finally, we present our algorithm for relations modulo
Hermite forms (\cref{sec:relbas}).

\section{Preliminaries on polynomial matrix division and modules of relations}
\label{sec:preliminaries}

\myparagraph{Division with remainder}

Polynomial matrix division is a central notion in this paper, since we aim at
solving equations modulo $\modMat$.

\begin{theorem}[{\cite[IV.\S2]{Gantmacher59},\cite[Thm.\,6.3-15]{Kailath80}}]
  \label{thm:quo_rem}
  For any $\sys \in \sysSpace$ and any column reduced $\modMat \in \modSpace$,
  there exist unique matrices $\quoMat,\remMat \in \sysSpace$ such that $\sys = \quoMat
  \modMat + \remMat$ and $\cdeg{\remMat} < \cdeg{\modMat}$.
\end{theorem}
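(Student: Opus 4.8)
The plan is to reduce everything to a single linear-algebraic fact — the invertibility of the leading coefficient matrix of $\modMat$ — and then imitate Euclidean division of univariate polynomials, performed simultaneously on all columns. I would write $(d_1,\ldots,d_\nbeq) = \cdeg{\modMat}$ and introduce $\mat{L} \in \matSpace[\nbeq]$, the matrix whose $j$th column collects the coefficients of $\var^{d_j}$ in the $j$th column of $\modMat$. Column reducedness of $\modMat$ means exactly that $\mat{L} = \trsp{\leadingMat[\unishift]{\trsp{\modMat}}}$ is invertible, and this is the single hypothesis that will drive both halves of the argument, in the role played by a nonzero leading coefficient for scalar division.

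For uniqueness, I would suppose $\sys = \quoMat\modMat + \remMat = \quoMat'\modMat + \remMat'$ with both remainders meeting the column-degree bound, set $\mat{A} = \quoMat - \quoMat'$ and $\mat{B} = \remMat' - \remMat$, and aim to show that $\mat{A}\modMat = \mat{B}$ with $\cdeg{\mat{B}} < \cdeg{\modMat}$ forces $\mat{A} = \matz$. Treating one row at a time, let $\row{q}$ be a would-be nonzero row of $\mat{A}$, let $\row{r} = \row{q}\modMat$ be the matching row of $\mat{B}$, let $t = \max_k \deg(q_k) \ge 0$, and let $\row{q}_t \in \matSpace[1][\nbeq]$ be the coefficient vector of $\var^t$ in $\row{q}$. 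The key identity is that the coefficient of $\var^{t+d_j}$ in the $j$th entry of $\row{q}\modMat$ equals the $j$th entry of $\row{q}_t\mat{L}$. Since $\mat{L}$ is invertible and $\row{q}_t \neq \matz$, the row $\row{q}_t\mat{L}$ is nonzero, so some entry $r_j$ has degree $\ge t + d_j \ge d_j$, contradicting $\cdeg{\mat{B}} < \cdeg{\modMat}$; hence $\mat{A} = \matz$.

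For existence, since the rows of $\sys$ divide independently, I would treat a single row $\row{s}$ and reduce the excess $e = \max_j(\deg(s_j) - d_j)$ step by step. If $e < 0$ then $\row{s}$ is already the remainder; otherwise, with $t = e$, let $\row{c} \in \matSpace[1][\nbeq]$ collect the coefficients of $\var^{t+d_j}$ in $s_j$ (so $\deg(s_j) \le t+d_j$ for all $j$ and $\row{c} \neq \matz$), put $\row{q}_t = \row{c}\mat{L}^{-1}$, and replace $\row{s}$ by $\row{s} - \var^t\row{q}_t\modMat$. By the same coefficient identity, now with $\row{q}_t\mat{L} = \row{c}$, this cancels the coefficient of $\var^{t+d_j}$ in every column while keeping every column degree at most $t+d_j$, so $e$ drops by at least one. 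After finitely many steps the residual satisfies the column-degree bound and is the sought $\remMat$; accumulating the successive terms $\var^t\row{q}_t$ gives the quotient, and stacking rows produces $\quoMat,\remMat \in \sysSpace$.

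The crux — and the only place the hypothesis on $\modMat$ enters — is the two-sided use of the invertibility of $\mat{L}$: surjectivity lets each reduction step solve $\row{q}_t\mat{L} = \row{c}$ and cancel the top coefficients (existence), while injectivity forbids a nonzero $\mat{A}$ from yielding a low-degree product (uniqueness). I expect no deep obstacle, only the bookkeeping of column-wise degrees during the reduction: one must check that subtracting $\var^t\row{q}_t\modMat$ never pushes a column degree above $t+d_j$, so that the measure $e$ strictly decreases and the process terminates.
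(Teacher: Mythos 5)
The paper itself offers no proof of \cref{thm:quo_rem}: the statement is imported from \cite[IV.\S2]{Gantmacher59} and \cite[Thm.\,6.3-15]{Kailath80}, so the only meaningful comparison is with those classical arguments. Your proof is correct and self-contained. Its crux, the identity that for $t = \max_k \deg(q_k)$ the coefficient of $\var^{t+d_j}$ in the $j$th entry of $\row{q}\modMat$ equals the $j$th entry of $\row{q}_t\mat{L}$ with $\mat{L}=\trsp{\leadingMat[\unishift]{\trsp{\modMat}}}$, holds because every lower-order coefficient of $q_k$ or of the $j$th column of $\modMat$ contributes only to powers strictly below $\var^{t+d_j}$; this single fact indeed drives both halves, giving uniqueness (a nonzero $\row{q}_t$ forces some $r_j$ of degree at least $t+d_j\ge d_j$) and existence (solve $\row{q}_t\mat{L}=\row{c}$ to kill the top coefficients). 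The bookkeeping point you flag at the end is not a real obstacle: each entry of $\var^t\row{q}_t\modMat$ has degree at most $t+d_j$ simply because column $j$ of $\modMat$ has degree at most $d_j$, so after cancellation every column degree drops strictly below $t+d_j$ and the excess $e$ decreases by at least one per step, ensuring termination. Compared with the cited proofs, your route is genuinely different: Kailath proves the theorem by splitting the rational matrix $\sys\modMat^{-1}$ into its polynomial part $\quoMat$ plus a strictly proper part $\mat{T}$, setting $\remMat=\mat{T}\modMat$, and then using column reducedness to translate strict properness of $\remMat\modMat^{-1}$ into the bound $\cdeg{\remMat}<\cdeg{\modMat}$. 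Your argument is the elementary matrix long division: it avoids rational matrices entirely, makes the role of the invertible leading matrix explicit on both sides (surjectivity for existence, injectivity for uniqueness), and is effective --- it is precisely the naive division procedure that \cref{algo:division} of the paper accelerates via coefficient reversal and high-order lifting, at the cost of being slower and less slick than the properness argument.
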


Hereafter, we write $\quo{\sys}{\modMat}$ and $\rem{\sys}{\modMat}$ for the
quotient $\quoMat$ and the remainder $\remMat$. We have the following
properties.

\begin{lemma}
  \label{lem:quo_rem_properties}
  We have $\rem{\relbas\rem{\sys}{\modMat}}{\modMat} =
  \rem{\relbas\sys}{\modMat}$ and $\displaystyle\rem{\begin{bmatrix} \sys \\
    \res
    \end{bmatrix}}{\modMat} =
    \begin{bmatrix}
      \rem{\sys}{\modMat} \\ \rem{\res}{\modMat}
    \end{bmatrix}$
  for any $\sys \in \sysSpace$, $\res \in \polMatSpace[\any][\nbeq]$, $\relbas
  \in \polMatSpace[\any][\nbun]$ and any column reduced $\modMat \in
  \modSpace$.
\end{lemma}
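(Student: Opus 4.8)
The plan is to derive both identities from the uniqueness asserted in \cref{thm:quo_rem}, through a single reformulation of it which I would establish first. The claim is: for matrices $\mat{A},\mat{B}$ with $\nbeq$ columns and the same number of rows, if $\mat{A} - \mat{B} = \matz \bmod \modMat$ (their difference is a left multiple of $\modMat$), then $\rem{\mat{A}}{\modMat} = \rem{\mat{B}}{\modMat}$. To see this, write $\mat{B} = \quoMat\,\modMat + \rem{\mat{B}}{\modMat}$ and $\mat{A} - \mat{B} = \mat{C}\modMat$; then $\mat{A} = (\quoMat + \mat{C})\,\modMat + \rem{\mat{B}}{\modMat}$ with $\cdeg{\rem{\mat{B}}{\modMat}} < \cdeg{\modMat}$, so by the uniqueness in \cref{thm:quo_rem} the remainder of $\mat{A}$ equals $\rem{\mat{B}}{\modMat}$. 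I would also note at this point that \cref{thm:quo_rem}, although stated for $\sys$ with $\nbun$ rows, applies verbatim to the matrices of arbitrary row dimension $\any$ occurring in the lemma, since column reducedness of $\modMat$ is all its existence-and-uniqueness argument uses.

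For the first identity I would start from the division $\sys = \quo{\sys}{\modMat}\,\modMat + \rem{\sys}{\modMat}$ and multiply on the left by $\relbas$, obtaining $\relbas\sys = (\relbas\,\quo{\sys}{\modMat})\,\modMat + \relbas\rem{\sys}{\modMat}$. This shows $\relbas\sys - \relbas\rem{\sys}{\modMat} = \matz \bmod \modMat$, so the reformulation above immediately gives $\rem{\relbas\sys}{\modMat} = \rem{\relbas\rem{\sys}{\modMat}}{\modMat}$, which is the asserted equality. The outer remainder on the right-hand side is genuinely needed, since $\relbas\rem{\sys}{\modMat}$ need not satisfy the degree constraint $\cdeg{\cdot} < \cdeg{\modMat}$.

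For the second identity I would divide each block separately, say $\sys = \quoMat\,\modMat + \remMat$ and $\res = \mat{Q}'\,\modMat + \mat{R}'$ with $\cdeg{\remMat} < \cdeg{\modMat}$ and $\cdeg{\mat{R}'} < \cdeg{\modMat}$, and stack them as $\begin{bmatrix} \sys \\ \res \end{bmatrix} = \begin{bmatrix} \quoMat \\ \mat{Q}' \end{bmatrix}\modMat + \begin{bmatrix} \remMat \\ \mat{R}' \end{bmatrix}$. This exhibits the correct quotient/remainder shape provided the stacked remainder still obeys the degree bound; since the $j$th column degree of a vertically stacked matrix is the maximum of the $j$th column degrees of its blocks, we get $\cdeg{\begin{bmatrix} \remMat \\ \mat{R}' \end{bmatrix}} < \cdeg{\modMat}$ componentwise, and uniqueness yields the block form of $\rem{\begin{bmatrix} \sys \\ \res \end{bmatrix}}{\modMat}$.

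The only point requiring care—and it is hardly an obstacle—is the reformulation of uniqueness in the first paragraph, since both identities reduce to it; the remainder of the argument is bookkeeping with column degrees, together with the elementary observation that vertical stacking takes the columnwise maximum of degrees. I would present the reformulation explicitly because it will be reused when reasoning about residuals in \cref{sec:division}.
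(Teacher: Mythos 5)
Your proof is correct. The paper states \cref{lem:quo_rem_properties} without proof, treating it as an immediate consequence of the uniqueness in \cref{thm:quo_rem}; your argument---reducing both identities to the observation that matrices differing by a left multiple of $\modMat$ have the same remainder, together with the columnwise degree bound for vertically stacked matrices---is exactly the routine argument the paper leaves implicit, so there is nothing to correct and no genuine divergence to report.
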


\myparagraph{Degree control for relation bases} We first relate the vector
space dimension of quotients and the degree of determinant of bases.

\begin{lemma}
  \label{lem:vsdim_colredbasis}
  Let $\module$ be a $\polRing$-submodule of $\polRing^\nbeq$ of rank $\nbeq$.
  Then, the dimension of $\polRing^{\nbeq}/\module$ as a $\field$-vector space
  is $\deg(\det(\modMat))$, for any matrix $\modMat \in \modSpace$ whose rows
  form a basis of $\module$.
\end{lemma}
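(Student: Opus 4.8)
The plan is to reduce the claim to one conveniently chosen basis and then count coset representatives via polynomial matrix division. First I would note that $\deg(\det(\modMat))$ is the same for every basis of $\module$: if the rows of both $\modMat$ and $\modMat'$ form bases of $\module$, then $\modMat' = \mat{U}\modMat$ for some unimodular $\mat{U} \in \modSpace$, and since $\det(\mat{U}) \in \field\setminus\{0\}$ we get $\deg(\det(\modMat')) = \deg(\det(\modMat))$. Hence it suffices to establish the identity for a single, well-chosen basis matrix, which I would take to be column reduced.

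Such a column reduced basis exists and can be reached without changing $\module$: the Hermite form $\modHer = \mat{U}\modMat$, with $\mat{U}$ unimodular, has rows spanning $\module$, and it is column reduced because the condition $\leadingMat[\unishift]{\trsp{\modHer}} = \idMat[\nbeq]$ makes $\trsp{\modHer}$ row reduced. I would then invoke \cref{thm:quo_rem} with $\nbun = 1$: every row vector $\sys \in \polMatSpace[1][\nbeq]$ admits a unique decomposition $\sys = \quoMat\modHer + \remMat$ with $\cdeg{\remMat} < \cdeg{\modHer}$. Since $\sys - \remMat = \quoMat\modHer$ has its single row in $\module$, we have $\sys \equiv \remMat$ modulo $\module$, so the remainder map $\sys \mapsto \rem{\sys}{\modHer}$ is constant on cosets and, by uniqueness of the decomposition, $\field$-linear.

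This map therefore induces a $\field$-linear isomorphism between $\polRing^{\nbeq}/\module$ and the space $V = \{\remMat \in \polMatSpace[1][\nbeq] \mid \cdeg{\remMat} < \cdeg{\modHer}\}$ of reduced row vectors: surjectivity is clear since each such $\remMat$ is its own remainder, and injectivity follows from the uniqueness in \cref{thm:quo_rem}. Writing $\cdeg{\modHer} = (c_1,\ldots,c_\nbeq)$, the $j$th entry of a vector in $V$ ranges over polynomials of degree less than $c_j$, whence $\dim_\field V = c_1 + \cdots + c_\nbeq = \sumVec{\cdeg{\modHer}}$. Finally, for a column reduced matrix the degree of the determinant equals the sum of its column degrees, its column-leading coefficient matrix being invertible \cite{Kailath80}, so $\sumVec{\cdeg{\modHer}} = \deg(\det(\modHer)) = \deg(\det(\modMat))$, which gives the claim.

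The main points requiring care are the row-versus-column convention, so that the division matches the precise form of \cref{thm:quo_rem} (the module is generated by the \emph{rows} of $\modMat$, and the remainder is taken on the right), and the justification that a column reduced basis of $\module$ can be obtained by module-preserving left-unimodular operations rather than column operations; both become routine once the Hermite form is used as the distinguished basis. The remaining verifications (well-definedness and $\field$-linearity of the induced map, and the $\dim_\field V = \sumVec{\cdeg{\modHer}}$ count) are immediate from uniqueness of the quotient–remainder pair.
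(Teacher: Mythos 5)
Your proof is correct and takes essentially the same route as the paper's: both reduce to a column reduced basis (using that the determinant degree is invariant under unimodular left-multiplication), apply \cref{thm:quo_rem} to identify $\polRing^{\nbeq}/\module$ $\field$-linearly with the space of remainders of column degree less than $\cdeg{\modMat}$, and conclude from the fact that the determinant degree of a column reduced matrix is the sum of its column degrees. The only cosmetic difference is that you instantiate the column reduced basis concretely as the Hermite form, where the paper simply assumes $\modMat$ column reduced without loss of generality.
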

\begin{proof}
  Since the degree of the determinant is the same for all bases of $\module$,
  we may assume that $\modMat$ is column reduced. Then, \cref{thm:quo_rem}
  implies that there is a $\field$-vector space isomorphism
  $\polRing^\nbeq/\module \cong \polRing/\idealGen{\var^{\order_1}} \times
  \cdots \times \polRing/\idealGen{\var^{\order_\nbeq}}$, where $(\order_1,
  \ldots, \order_\nbeq) = \cdeg{\modMat}$. Thus, the dimension of
  $\polRing^{\nbeq}/\module$ is $\order_1 + \cdots + \order_\nbeq$, which is
  equal to $\deg(\det(\modMat))$ according to \cite[Sec.\,6.3.2]{Kailath80}.
\end{proof}
%
%

This allows us to bound the sum of column degrees of any column reduced
relation basis; for example, a shifted Popov relation basis.

\begin{corollary}
  \label{lem:degdetbound}
  Let $\sys \in \sysSpace$, and let $\modMat \in \modSpace$ be nonsingular.
  Then, any relation basis $\relbas \in \polMatSpace$ for $\modRelGen$ is such
  that $\deg(\det(\relbas)) \le \deg(\det(\modMat))$. In particular, if
  $\relbas$ is column reduced, then $\sumVec{\cdeg{\relbas}} \le
  \deg(\det(\modMat))$.
\end{corollary}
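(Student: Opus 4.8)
The plan is to read both determinant degrees as $\field$-vector-space dimensions and to connect the two quotient spaces through the morphism $\morphism$ itself. Recall from the introduction that $\det(\modMat)\polRing^\nbun \subseteq \modRelGen \subseteq \polRing^\nbun$, so $\modRelGen$ is a $\polRing$-submodule of $\polRing^\nbun$ of rank $\nbun$; since the rows of $\relbas$ form a basis of it, \cref{lem:vsdim_colredbasis} applies to $\modRelGen$ and gives $\deg(\det(\relbas)) = \dim_\field(\polRing^\nbun/\modRelGen)$. Likewise, \cref{lem:vsdim_colredbasis} gives $\deg(\det(\modMat)) = \dim_\field(\polRing^\nbeq/\module)$. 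The corollary will follow once the first dimension is bounded by the second.

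For this, the key observation is that $\modRelGen$ is by definition the kernel of the $\polRing$-module morphism $\morphism$, which sends $\rel$ to $\rel\sys \bmod \modMat$. By the first isomorphism theorem, the quotient $\polRing^\nbun/\modRelGen$ is isomorphic, as a $\polRing$-module and hence as a $\field$-vector space, to the image $\morphism(\polRing^\nbun)$, which is a $\field$-subspace of $\polRing^\nbeq/\module$. Comparing dimensions thus gives $\dim_\field(\polRing^\nbun/\modRelGen) = \dim_\field(\morphism(\polRing^\nbun)) \le \dim_\field(\polRing^\nbeq/\module)$, that is, $\deg(\det(\relbas)) \le \deg(\det(\modMat))$, which is the first assertion.

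For the ``in particular'' part, I would invoke the standard fact that a column reduced matrix $\relbas$ satisfies $\deg(\det(\relbas)) = \sumVec{\cdeg{\relbas}}$ (the same property of column reduced matrices already used in the proof of \cref{lem:vsdim_colredbasis}, cf.\ \cite[Sec.\,6.3.2]{Kailath80}); combining this equality with the inequality just established yields $\sumVec{\cdeg{\relbas}} \le \deg(\det(\modMat))$.

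I do not expect a serious obstacle here; the argument is short. The main point requiring care is the bookkeeping between the two algebraic structures: the isomorphism $\polRing^\nbun/\modRelGen \cong \morphism(\polRing^\nbun)$ is one of $\polRing$-modules, but the comparison of dimensions must be carried out at the level of $\field$-vector spaces, so one should note that a $\polRing$-linear isomorphism is in particular $\field$-linear and that a submodule of $\polRing^\nbeq/\module$ is a fortiori a $\field$-subspace. One should also verify that the rank hypothesis of \cref{lem:vsdim_colredbasis} holds for $\modRelGen$, which is precisely what the sandwiching $\det(\modMat)\polRing^\nbun \subseteq \modRelGen \subseteq \polRing^\nbun$ guarantees.
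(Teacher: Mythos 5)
Your proof is correct and takes essentially the same route as the paper's: both identify $\polRing^\nbun/\modRelGen$ with a submodule of $\polRing^\nbeq/\module$ via the first isomorphism theorem applied to $\morphism$, and both invoke \cref{lem:vsdim_colredbasis} to convert the two determinant degrees into $\field$-vector space dimensions before comparing them. The extra care you take---checking the rank hypothesis via $\det(\modMat)\polRing^\nbun \subseteq \modRelGen \subseteq \polRing^\nbun$, and spelling out the equality $\sumVec{\cdeg{\relbas}} = \deg(\det(\relbas))$ for column reduced matrices---only makes explicit what the paper leaves implicit.
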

\begin{proof}
  Let $\module$ be the row space of $\modMat$. By definition, $\modRelGen$ is
  the kernel of $\morphism$ (see \cref{sec:intro}), hence $\polRing^\nbun /
  \modRelGen$ is isomorphic to a submodule of $\polRing^\nbeq / \module$.
  Since, by \cref{lem:vsdim_colredbasis}, the dimensions of $\polRing^\nbun /
  \modRelGen$ and $\polRing^\nbun / \module$ are $\deg(\det(\relbas))$ and
  $\deg(\det(\modMat))$, we obtain $\deg(\det(\relbas)) \le
  \deg(\det(\modMat))$.
\end{proof}

\myparagraph{Properties of relation bases}

We now formalize the facts that $\modRelGen$ is not changed if $\modMat$ is
replaced by another basis of the module generated by its rows; or if $\sys$ and
$\modMat$ are right-multiplied by the same nonsingular matrix; or yet if $\sys$
is considered modulo $\modMat$.

\begin{lemma}
  \label{lem:rel_operations}
  Let $\sys \in \sysSpace$, and let $\modMat \in \modSpace$ be nonsingular.
  Then, for any nonsingular $\mat{A} \in \modSpace$, any matrix $\mat{B} \in
  \sysSpace$, and any unimodular $\mat{U} \in \relbasSpace$, we have
  \[
    \modRelGen =
    \modRelCustom{\mat{U}\modMat}{\sys} =
    \modRelCustom{\modMat\mat{A}}{\sys\mat{A}} =
    \modRelCustom{\modMat}{\sys+\mat{B}\modMat}.
  \]
\end{lemma}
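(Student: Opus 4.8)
The plan is to unfold the definition $\modRelGen = \{\rel \in \relSpace \mid \rel\sys = \matz \bmod \modMat\}$ and to prove the three stated identities separately, each having the common left-hand side $\modRelGen$. In every case I would argue at the level of a single row vector $\rel$, showing that it satisfies the defining membership condition of $\modRelGen$ if and only if it satisfies the one of the set on the right. The recurring tool is the reading of $\rel\sys = \matz \bmod \modMat$ as the statement that the row $\rel\sys$ lies in the $\polRing$-module $\module$ spanned by the rows of $\modMat$, equivalently $\rel\sys = \mat{Q}\modMat$ for some polynomial matrix $\mat{Q}$.

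The first identity $\modRelGen = \modRelCustom{\mat{U}\modMat}{\sys}$ follows because left-multiplication by a unimodular matrix preserves the row space: as $\mat{U}$ is invertible over $\polRing$, the rows of $\mat{U}\modMat$ and those of $\modMat$ generate the same module $\module$, so the condition $\rel\sys \in \module$ is unchanged and the two relation sets are literally equal. For the third identity $\modRelGen = \modRelCustom{\modMat}{\sys + \mat{B}\modMat}$, I would expand $\rel(\sys + \mat{B}\modMat) = \rel\sys + (\rel\mat{B})\modMat$; the summand $(\rel\mat{B})\modMat$ always belongs to $\module$, so $\rel(\sys + \mat{B}\modMat) \in \module$ holds exactly when $\rel\sys \in \module$, which is the required equivalence.

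The second identity $\modRelGen = \modRelCustom{\modMat\mat{A}}{\sys\mat{A}}$ is the step I expect to require the most care, since here one must cancel the right factor $\mat{A}$. The inclusion from left to right is immediate: $\rel\sys = \mat{Q}\modMat$ gives $\rel(\sys\mat{A}) = \mat{Q}(\modMat\mat{A})$, so $\rel$ lies in the right-hand set. For the converse, from $\rel(\sys\mat{A}) = \mat{Q}(\modMat\mat{A})$ I would deduce $(\rel\sys - \mat{Q}\modMat)\mat{A} = \matz$; because $\mat{A}$ is nonsingular, right-multiplication by $\mat{A}$ is injective on polynomial matrices (it is invertible over the field of fractions $\field(\var)$), hence $\rel\sys = \mat{Q}\modMat$ and $\rel \in \modRelGen$. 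This cancellation, justified by the nonsingularity of $\mat{A}$ rather than by any unimodularity, is the only genuinely delicate point; the rest is direct manipulation of the defining conditions, and combining the three identities proves the lemma.
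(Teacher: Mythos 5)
Your proof is correct. The paper states this lemma without any proof, treating it as routine; your argument --- unfolding the membership condition $\rel\sys = \mat{Q}\modMat$ row by row, using unimodularity of $\mat{U}$ to preserve the row module, cancelling the nonsingular factor $\mat{A}$ via its invertibility over $\field(\var)$, and absorbing $(\rel\mat{B})\modMat$ into the quotient --- is exactly the standard verification the paper leaves implicit, with the cancellation step (nonsingularity, not unimodularity, of $\mat{A}$) correctly identified as the only point needing care.
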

\noindent
A first consequence is that we may discard identity columns in $\modMat$.

\begin{corollary}
  \label{cor:cleaning}
  Let $\sys \in \sysSpace$, and let $\modMat \in \modSpace$ be nonsingular.
  Suppose that $\modMat$ has at least $k\in\ZZp$ identity columns, and that the
  corresponding columns of $\sys$ are zero. Then, let $\pi_1,\pi_2$
  be $\nbeq\times\nbeq$ permutation matrices such that
  \[
    \pi_1 \modMat \pi_2 = \begin{bmatrix} \idMat[k] & \mat{B} \\ \matz & \mat{N} \end{bmatrix}
    \;\text{ and }\;\;
    \sys \pi_2 = \begin{bmatrix} \matz & \mat{G} \end{bmatrix},
  \]
  where $\mat{G} \in \polMatSpace[\nbun][(\nbeq-k)]$. Then, $\modRelGen =
  \modRelCustom{\mat{N}}{\mat{G}}$.
\end{corollary}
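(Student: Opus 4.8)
The plan is to use the module-preserving operations of \cref{lem:rel_operations} to pass from $\modRelGen$ to the relations of the permuted block decomposition, and then to read off the equivalence of the two defining conditions directly from the block-triangular shape of $\pi_1\modMat\pi_2$.

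First I would note that both permutation matrices $\pi_1$ and $\pi_2$ are unimodular, hence nonsingular. Applying the first equality of \cref{lem:rel_operations} with the unimodular matrix $\pi_1$ gives $\modRelGen = \modRelCustom{\pi_1\modMat}{\sys}$. Since $\pi_1\modMat$ is again nonsingular, applying the second equality of \cref{lem:rel_operations} (to the pair $\pi_1\modMat,\sys$) with the nonsingular matrix $\pi_2$ gives $\modRelCustom{\pi_1\modMat}{\sys} = \modRelCustom{\pi_1\modMat\pi_2}{\sys\pi_2}$. By hypothesis the right-hand side is exactly $\modRelCustom{\modMat'}{\sys'}$ with $\modMat' = \begin{bmatrix}\idMat[k]&\mat{B}\\\matz&\mat{N}\end{bmatrix}$ and $\sys' = \begin{bmatrix}\matz&\mat{G}\end{bmatrix}$, so it remains to prove $\modRelCustom{\modMat'}{\sys'} = \modRelCustom{\mat{N}}{\mat{G}}$.

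For this last equality I would unfold the definition. A row vector $\rel \in \relSpace$ lies in $\modRelCustom{\modMat'}{\sys'}$ precisely when $\rel\sys' = \mat{Q}\modMat'$ for some $\mat{Q}$; writing $\mat{Q} = \begin{bmatrix}\mat{Q}_1 & \mat{Q}_2\end{bmatrix}$ with $\mat{Q}_1$ of width $k$, and using that the first $k$ columns of $\sys'$ vanish, this reads $\begin{bmatrix}\matz & \rel\mat{G}\end{bmatrix} = \begin{bmatrix}\mat{Q}_1 & \mat{Q}_1\mat{B}+\mat{Q}_2\mat{N}\end{bmatrix}$. The first block forces $\mat{Q}_1 = \matz$, and then the second block becomes $\rel\mat{G} = \mat{Q}_2\mat{N}$, i.e. $\rel\mat{G} = \matz \bmod \mat{N}$. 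Conversely, any $\rel$ with $\rel\mat{G} = \mat{Q}_2\mat{N}$ admits the witness $\mat{Q} = \begin{bmatrix}\matz & \mat{Q}_2\end{bmatrix}$, so the two conditions are equivalent. Combining with the previous paragraph then yields $\modRelGen = \modRelCustom{\mat{N}}{\mat{G}}$.

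I expect no real obstacle: the only points needing care are verifying that $\pi_1,\pi_2$ meet the unimodularity and nonsingularity hypotheses of \cref{lem:rel_operations}, and handling the block partition of $\mat{Q}$ so that the vanishing of the first $k$ columns of $\sys'$ correctly decouples the $\idMat[k]$ block from $\mat{N}$. Alternatively, one could first clear $\mat{B}$ by right-multiplying $\modMat'$ by the unimodular matrix $\begin{bmatrix}\idMat[k]&-\mat{B}\\\matz&\idMat[\nbeq-k]\end{bmatrix}$ (which leaves $\sys'$ unchanged, its first $k$ columns being zero), reducing to a block-diagonal $\modMat'$ for which the decoupling is immediate.
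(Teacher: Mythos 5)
Your proof is correct and takes essentially the approach the paper intends: \cref{cor:cleaning} is stated there without an explicit proof, as a direct consequence of \cref{lem:rel_operations}, and your chain $\modRelGen = \modRelCustom{\pi_1\modMat}{\sys} = \modRelCustom{\pi_1\modMat\pi_2}{\sys\pi_2}$ followed by the block computation decoupling $\idMat[k]$ from $\mat{N}$ is precisely that derivation. The only detail worth stating explicitly is that $\mat{N}$ is nonsingular (since $\det(\mat{N}) = \det(\pi_1\modMat\pi_2) = \pm\det(\modMat) \neq 0$), so that $\modRelCustom{\mat{N}}{\mat{G}}$ is itself a well-posed instance of \cref{pbm:relbas}.
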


Another consequence concerns the transformation of a matrix into shifted Popov
form. Indeed, \cref{lem:rel_operations} together with the next lemma imply in
particular that the $\shifts$-Popov form of $\mat{M}$ is the $\shifts$-Popov
relation basis for $\modRelCustom{\hermite}{\idMat[\nbeq]}$, where $\hermite$
is the Hermite form of $\mat{M}$.

\begin{lemma}
  \label{lem:relbas_popovform}
  Let $\mat{M} \in \modSpace$ be nonsingular. Then, $\mat{M}$ is a relation
  basis for $\modRelCustom{\mat{M}}{\idMat[\nbeq]}$. It follows that the
  $\shifts$-Popov form of $\mat{M}$ is the $\shifts$-Popov relation basis for
  $\modRelCustom{\mat{M}}{\idMat[\nbun]}$, for any $\shifts \in \ZZ^\nbeq$.
\end{lemma}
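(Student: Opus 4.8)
The plan is to observe that the relation module $\modRelCustom{\mat{M}}{\idMat[\nbeq]}$ is nothing but the $\polRing$-module generated by the rows of $\mat{M}$, after which both assertions follow from the definitions together with \cref{lem:rel_operations} and the canonicity of the shifted Popov form.

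For the first statement, I would simply unfold the definition of the relation module. Since the system here is the identity $\idMat[\nbeq]$, a row vector $\rel \in \relSpace$ satisfies $\rel \idMat[\nbeq] = \rel$, so the defining condition $\rel \idMat[\nbeq] = \matz \bmod \mat{M}$ reduces to $\rel = \matz \bmod \mat{M}$; that is, $\rel$ lies in the row space $\module$ of $\mat{M}$. Hence $\modRelCustom{\mat{M}}{\idMat[\nbeq]} = \module$. Because $\mat{M}$ is nonsingular, its rows form a basis of $\module$, so $\mat{M}$ is by definition a relation basis for $\modRelCustom{\mat{M}}{\idMat[\nbeq]}$.

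For the second statement, let $\popov$ denote the $\shifts$-Popov form of $\mat{M}$, so that $\popov = \mat{U}\mat{M}$ for some unimodular $\mat{U}$ and $\popov$ is in $\shifts$-Popov form. By \cref{lem:rel_operations} (invariance under left-multiplication by a unimodular matrix), $\modRelCustom{\mat{M}}{\idMat[\nbeq]} = \modRelCustom{\mat{U}\mat{M}}{\idMat[\nbeq]} = \modRelCustom{\popov}{\idMat[\nbeq]}$. Since $\popov$ is nonsingular, being unimodularly equivalent to the nonsingular $\mat{M}$, the first statement applied to $\popov$ shows that $\popov$ is a relation basis for $\modRelCustom{\popov}{\idMat[\nbeq]} = \modRelCustom{\mat{M}}{\idMat[\nbeq]}$. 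Thus $\popov$ is a relation basis for $\modRelCustom{\mat{M}}{\idMat[\nbeq]}$ that is moreover in $\shifts$-Popov form.

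To conclude, I would invoke the canonicity of the shifted Popov form: among all bases of a fixed module there is a unique one in $\shifts$-Popov form, and this is precisely the $\shifts$-Popov relation basis for that module. As $\popov$ is such a basis, it coincides with the $\shifts$-Popov relation basis for $\modRelCustom{\mat{M}}{\idMat[\nbeq]}$, which is the claim (with $\nbun = \nbeq$ in this square instance). I expect no serious obstacle here; the only point requiring care is this appeal to uniqueness, namely checking that the $\shifts$-Popov form of $\mat{M}$ and the $\shifts$-Popov relation basis for $\modRelCustom{\mat{M}}{\idMat[\nbeq]}$ are characterized by the same property — being the unique $\shifts$-Popov basis of the common module $\module$ generated by the rows of $\mat{M}$ — which is exactly the canonicity recalled after \cref{dfn:spopov}.
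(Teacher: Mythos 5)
Your proof is correct, and it takes a genuinely more direct route than the paper's. You identify $\modRelCustom{\mat{M}}{\idMat[\nbeq]}$ outright as the row space $\module$ of $\mat{M}$: since $\rel\,\idMat[\nbeq] = \rel$, the defining condition $\rel = \matz \bmod \mat{M}$ means precisely $\rel = \row{q}\mat{M}$ for some $\row{q} \in \polMatSpace[1][\nbeq]$, and nonsingularity of $\mat{M}$ gives $\polRing$-linear independence of its rows, so they form a basis of that module. The paper argues differently: it takes an \emph{arbitrary} relation basis $\relbas$ for $\modRelCustom{\mat{M}}{\idMat[\nbeq]}$ (whose existence rests on the freeness argument recalled in \cref{sec:intro}), writes $\relbas = \mat{Q}\mat{M}$ because the rows of $\relbas$ are relations, writes $\mat{M} = \mat{R}\relbas$ because the rows of $\mat{M}$ are relations and $\relbas$ is a basis, and then deduces from $\relbas = \mat{Q}\mat{R}\relbas$ and the nonsingularity of $\relbas$ that $\mat{Q}\mat{R} = \idMat[\nbeq]$, hence $\mat{R}$ is unimodular and $\mat{M} = \mat{R}\relbas$ is itself a basis. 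Your route buys economy and transparency: it needs neither the prior existence of a relation basis nor the unimodularity computation, and it makes plain \emph{why} the statement holds (the module of relations against the identity system is nothing but the row space of $\mat{M}$). The paper's route buys a reusable verification pattern --- rows lying in the module plus two-sided divisibility plus nonsingularity forces a unimodular transition matrix --- which is the same style of argument that reappears in the proof of \cref{thm:basis_transitivity}. For the second claim the paper gives no explicit argument, whereas you spell it out via \cref{lem:rel_operations} and the uniqueness of the $\shifts$-Popov basis of a fixed module; that is exactly the intended reasoning, and your care on the uniqueness point is well placed, since it is what makes ``the'' $\shifts$-Popov relation basis well defined in the first place.
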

\begin{proof}
  Let $\relbas \in \modSpace$ be a relation basis for
  $\modRelCustom{\mat{M}}{\idMat[\nbeq]}$. Then, $\relbas \idMat[\nbeq] =
  \mat{Q}\mat{M}$ for some $\mat{Q} \in \modSpace$; since the rows of $\mat{M}$
  belong to $\modRelCustom{\mat{M}}{\idMat[\nbeq]}$, we also have $\mat{M} =
  \mat{R}\relbas$ for some $\mat{R} \in \modSpace$. Since $\relbas$ is
  nonsingular, $\relbas = \mat{Q}\mat{R}\relbas$ implies that $\mat{Q}\mat{R} =
  {\idMat[\nbeq]}$, and therefore $\mat{R}$ is unimodular. Thus, $\mat{M} =
  \mat{R}\relbas$ is a relation basis for
  $\modRelCustom{\mat{M}}{\idMat[\nbeq]}$.
\end{proof}

\myparagraph{Divide and conquer approach}

Here we give properties in the case of a block triangular matrix $\modMat$.
They imply, if $\modMat$ is in Hermite form, that \cref{pbm:relbas} can be
solved recursively by splitting the instance in dimension $\nbeq$ into two
instances in dimension $\nbeq/2$.

\begin{lemma}
  \label{lem:triangular_rem}
  Let $\modMat_1\in\polMatSpace[\nbeq_1]$, $\modMat_2\in\polMatSpace[\nbeq_2]$,
  and $\mat{A}\in\polMatSpace[\nbeq_1][\nbeq_2]$ be such that
  $\modMat=\big[\begin{smallmatrix} \modMat_1 & \mat{A} \\ \matz & \modMat_2
  \end{smallmatrix}\big]$ is column reduced. For any $\sys_1 \in
  \polMatSpace[\nbun][\nbeq_1]$ and $\sys_2 \in \polMatSpace[\nbun][\nbeq_2]$,
  we have $\rem{[\sys_1\;\;\sys_2]}{\modMat} = [ \rem{\sys_1}{\modMat_1} \;\;
  \rem{\sys_2 - \quo{\sys_1}{\modMat_1}\mat{A}}{\modMat_2}]$.
\end{lemma}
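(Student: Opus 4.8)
The plan is to reduce everything to the uniqueness half of \cref{thm:quo_rem}. Since that theorem guarantees a \emph{unique} quotient--remainder pair for the dividend $[\sys_1\;\;\sys_2]$ and the column reduced divisor $\modMat$, it suffices to produce \emph{one} such pair whose remainder coincides with the claimed right-hand side. Accordingly I would set $\quoMat_1 = \quo{\sys_1}{\modMat_1}$, $\remMat_1 = \rem{\sys_1}{\modMat_1}$, and then $\quoMat_2 = \quo{\sys_2 - \quoMat_1\mat{A}}{\modMat_2}$, $\remMat_2 = \rem{\sys_2 - \quoMat_1\mat{A}}{\modMat_2}$, so that by construction $\sys_1 = \quoMat_1\modMat_1 + \remMat_1$ and $\sys_2 - \quoMat_1\mat{A} = \quoMat_2\modMat_2 + \remMat_2$. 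The candidate quotient is $[\quoMat_1\;\;\quoMat_2]$ and the candidate remainder is $[\remMat_1\;\;\remMat_2]$, the latter being exactly the asserted value of $\rem{[\sys_1\;\;\sys_2]}{\modMat}$.

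The decomposition is then verified by a single block multiplication exploiting the triangular shape of $\modMat$:
\[
  [\quoMat_1\;\;\quoMat_2]\,\modMat + [\remMat_1\;\;\remMat_2]
  = [\,\quoMat_1\modMat_1 + \remMat_1 \;\;\; \quoMat_1\mat{A} + \quoMat_2\modMat_2 + \remMat_2\,].
\]
Here the first block is $\sys_1$, while the cross term $\quoMat_1\mat{A}$ appearing in the second block is precisely what the definition of $\quoMat_2,\remMat_2$ absorbs, since $\quoMat_1\mat{A} + \quoMat_2\modMat_2 + \remMat_2 = \quoMat_1\mat{A} + (\sys_2 - \quoMat_1\mat{A}) = \sys_2$. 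This part is routine.

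The step requiring real care, and the main obstacle, is twofold: the right-hand side is only meaningful if $\modMat_1$ and $\modMat_2$ are themselves column reduced, and the pair above is the genuine quotient and remainder only if it satisfies the degree constraint $\cdeg{[\remMat_1\;\;\remMat_2]} < \cdeg{\modMat}$ of \cref{thm:quo_rem}. Both facts follow from a single determinant-degree argument. Using $\det(\modMat) = \det(\modMat_1)\det(\modMat_2)$, the general bound $\deg(\det(\cdot)) \le \sumVec{\cdeg{\cdot}}$ (an equality exactly for column reduced matrices \cite{Kailath80}), and the observations that the first $\nbeq_1$ columns of $\modMat$ carry the degrees $\cdeg{\modMat_1}$ while each of its last $\nbeq_2$ columns has degree at least that of the corresponding column of $\modMat_2$, I would chain
\[
  \sumVec{\cdeg{\modMat}} = \deg(\det(\modMat)) = \deg(\det(\modMat_1)) + \deg(\det(\modMat_2)) \le \sumVec{\cdeg{\modMat_1}} + \sumVec{\cdeg{\modMat_2}} \le \sumVec{\cdeg{\modMat}}
\]
and conclude that all inequalities are equalities. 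This forces $\modMat_1$ and $\modMat_2$ to be column reduced and shows that the entries of $\cdeg{\modMat}$ are those of $\cdeg{\modMat_1}$ followed by those of $\cdeg{\modMat_2}$.

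With this degree splitting established, the constraint is immediate: $\cdeg{\remMat_1} < \cdeg{\modMat_1}$ and $\cdeg{\remMat_2} < \cdeg{\modMat_2}$ hold by construction, hence $\cdeg{[\remMat_1\;\;\remMat_2]} < \cdeg{\modMat}$ componentwise. By the uniqueness in \cref{thm:quo_rem}, $[\remMat_1\;\;\remMat_2] = \rem{[\sys_1\;\;\sys_2]}{\modMat}$, which is the claim.
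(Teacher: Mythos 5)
Your proof is correct, and at its core it runs on the same mechanism as the paper's: the existence-and-uniqueness statement of \cref{thm:quo_rem} combined with block triangular multiplication. The difference is the direction. The paper argues top-down: it takes the unique decomposition $[\sys_1\;\;\sys_2] = [\quoMat_1\;\;\quoMat_2]\modMat + [\remMat_1\;\;\remMat_2]$ and reads off, block by block, that $(\quoMat_1,\remMat_1)$ and $(\quoMat_2,\remMat_2)$ satisfy the defining properties of the divisions by $\modMat_1$ and $\modMat_2$, invoking uniqueness for the two small divisions. You argue bottom-up: you assemble the two small divisions into a candidate quotient--remainder pair for $\modMat$ and invoke uniqueness for the large division. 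Both routes are equally valid. What you add, and what the paper leaves entirely implicit, is the justification that $\modMat_1$ and $\modMat_2$ are themselves column reduced and that $\cdeg{\modMat}$ consists of $\cdeg{\modMat_1}$ followed by $\cdeg{\modMat_2}$. These facts are needed for the statement to be meaningful at all, since the notation $\rem{\cdot}{\modMat_i}$ presupposes a column reduced divisor, and in the paper's direction they are also needed to upgrade the bound on $\cdeg{\remMat_2}$ given by the last $\nbeq_2$ entries of $\cdeg{\modMat}$ into the required bound $\cdeg{\remMat_2} < \cdeg{\modMat_2}$. Your determinant-degree chain establishes both claims at once and is sound, noting that all matrices involved are nonsingular so that the equality case of $\deg(\det(\cdot)) \le \sumVec{\cdeg{\cdot}}$ characterizes column reducedness. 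A more structural alternative reaching the same conclusion: column reducedness of $\modMat$ forces $\cdeg{\mat{A}} \le \cdeg{\modMat_2}$ componentwise, since otherwise more than $\nbeq_1$ columns of the leading matrix of $\modMat$ would be supported in its first $\nbeq_1$ rows and hence be linearly dependent; then the leading matrix of $\modMat$ is block upper triangular with the leading matrices of $\modMat_1$ and $\modMat_2$ as diagonal blocks, and invertibility passes to those blocks.
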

\begin{proof}
  Writing $[\sys_1\;\;\sys_2] = [\quoMat_1\;\;\quoMat_2]\modMat +
  [\remMat_1\;\;\remMat_2]$ where
  $\cdeg{[\remMat_1\;\;\remMat_2]}<\cdeg{\modMat}$, we obtain $\sys_1 =
  \quoMat_1 \modMat_1 + \remMat_1$ as well as
  $\cdeg{\remMat_1}<\cdeg{\modMat_1}$, and therefore $\remMat_1 =
  \rem{\sys_1}{\modMat_1}$ and $\quoMat_1=\quo{\sys_1}{\modMat_1}$. The result
  follows from $\sys_2 = \quoMat_1 \mat{A} + \quoMat_2 \modMat_2 +
  \remMat_2$.
\end{proof}

\begin{theorem}
  \label{thm:basis_transitivity}
  Let $\modMat=\big[\begin{smallmatrix} \modMat_1 & \anyMat \\ \matz &
    \modMat_2 \end{smallmatrix}\big]$ be column reduced, where
  $\modMat_1\in\polMatSpace[\nbeq_1]$ and $\modMat_2\in\polMatSpace[\nbeq_2]$,
  and let $\sys_1 \in \polMatSpace[\nbun][\nbeq_1]$ and $\sys_2 \in
  \polMatSpace[\nbun][\nbeq_2]$. If $\relbas_1$ is a basis for
  $\modRelCustom{\modMat_1}{\sys_1}$, then
  $\rem{\relbas_1[\sys_1\;\;\sys_2]}{\modMat}$ has the form $[\matz\;\;\res]$
  for some $\res\in\polMatSpace[\nbun][\nbeq_2]$; if furthermore $\relbas_2$ is
  a basis for $\modRelCustom{\modMat_2}{\res}$, then $\relbas_2\relbas_1$ is a
  basis for $\modRelCustom{\modMat}{[\sys_1\;\;\sys_2]}$.
\end{theorem}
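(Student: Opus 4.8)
The plan is to establish the two assertions in turn, using \cref{lem:triangular_rem} and \cref{lem:quo_rem_properties} throughout to manipulate block-triangular remainders. For the first assertion, I would apply \cref{lem:triangular_rem} to $\relbas_1[\sys_1\;\;\sys_2] = [\relbas_1\sys_1 \;\; \relbas_1\sys_2]$, obtaining
\[
  \rem{\relbas_1[\sys_1\;\;\sys_2]}{\modMat} = [\,\rem{\relbas_1\sys_1}{\modMat_1} \;\; \res\,],
\]
where $\res = \rem{\relbas_1\sys_2 - \quo{\relbas_1\sys_1}{\modMat_1}\anyMat}{\modMat_2}$. Since $\relbas_1$ is a basis for $\modRelCustom{\modMat_1}{\sys_1}$, each of its rows is a relation, so $\relbas_1\sys_1 = \matz \bmod \modMat_1$ and hence $\rem{\relbas_1\sys_1}{\modMat_1} = \matz$. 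This gives the claimed shape $[\matz\;\;\res]$ and pins down the residual $\res$.

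For the second assertion, write $\relbas = \relbas_2\relbas_1$; I would first verify that every row of $\relbas$ is a relation, and then that the rows of $\relbas$ generate the whole of $\modRelCustom{\modMat}{[\sys_1\;\;\sys_2]}$. For the first point, \cref{lem:quo_rem_properties} reduces the remainder of $\relbas[\sys_1\;\;\sys_2]$ to that of $\relbas_2\,\rem{\relbas_1[\sys_1\;\;\sys_2]}{\modMat} = \relbas_2[\matz\;\;\res] = [\matz\;\;\relbas_2\res]$; applying \cref{lem:triangular_rem} once more and using $\rem{\relbas_2\res}{\modMat_2} = \matz$ (because $\relbas_2$ is a basis for $\modRelCustom{\modMat_2}{\res}$) shows $\rem{\relbas[\sys_1\;\;\sys_2]}{\modMat} = \matz$. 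Hence the row space of $\relbas$ is contained in $\modRelCustom{\modMat}{[\sys_1\;\;\sys_2]}$.

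The reverse inclusion is where the real work lies, and I expect it to be the main obstacle. Given any relation $\rel$, i.e.\ $\rel[\sys_1\;\;\sys_2] = \matz \bmod \modMat$, I would read off the first block of its remainder via \cref{lem:triangular_rem}: it equals $\rem{\rel\sys_1}{\modMat_1}$, which must vanish, so $\rel \in \modRelCustom{\modMat_1}{\sys_1}$ and therefore $\rel = \rowgrk{v}\relbas_1$ for some $\rowgrk{v} \in \polMatSpace[1][\nbun]$. Substituting the identity $\relbas_1[\sys_1\;\;\sys_2] = \quoMat\modMat + [\matz\;\;\res]$ from the first assertion yields $[\matz\;\;\rowgrk{v}\res] = \matz \bmod \modMat$. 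The delicate step is to extract from this that $\rowgrk{v}\res = \matz \bmod \modMat_2$: writing the congruence as $[\matz\;\;\rowgrk{v}\res] = [\rowgrk{q}_1\;\;\rowgrk{q}_2]\modMat$, the top-left block forces $\rowgrk{q}_1\modMat_1 = \matz$, and the nonsingularity of $\modMat_1$ (a diagonal block of the nonsingular, since column reduced, matrix $\modMat$) gives $\rowgrk{q}_1 = \matz$, so the right block reads $\rowgrk{v}\res = \rowgrk{q}_2\modMat_2$. Thus $\rowgrk{v} \in \modRelCustom{\modMat_2}{\res}$, whence $\rowgrk{v} = \rowgrk{w}\relbas_2$ and $\rel = \rowgrk{w}\relbas_2\relbas_1$ lies in the row space of $\relbas$.

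Combining the two inclusions, the rows of $\relbas = \relbas_2\relbas_1$ generate $\modRelCustom{\modMat}{[\sys_1\;\;\sys_2]}$. Finally, since $\relbas_1$ and $\relbas_2$ are nonsingular their product is too, and a nonsingular matrix whose rows generate a module which is free of rank $\nbun$ forms a basis of that module; this concludes that $\relbas_2\relbas_1$ is a relation basis. The only genuinely nonroutine point is the extraction $\rowgrk{v}\res = \matz \bmod \modMat_2$ above, which is precisely where the upper block-triangular shape of $\modMat$ and the nonsingularity of $\modMat_1$ are used.
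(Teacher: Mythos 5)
Your proof is correct and follows essentially the same route as the paper's: both parts rest on \cref{lem:triangular_rem} and \cref{lem:quo_rem_properties}, with the reverse inclusion obtained by factoring a relation $\rel$ first through $\relbas_1$ and then through $\relbas_2$. The only local difference is that where the paper deduces $\rowgrk{\lambda}\in\modRelCustom{\modMat_2}{\res}$ by applying the first identity of \cref{lem:quo_rem_properties} to $\rem{\rowgrk{\lambda}[\matz\;\;\res]}{\modMat}$, you argue directly on the block equation $[\matz\;\;\rowgrk{v}\res]=[\rowgrk{q}_1\;\;\rowgrk{q}_2]\modMat$ and use the nonsingularity of $\modMat_1$ to force $\rowgrk{q}_1=\matz$ --- an equally valid, slightly more hands-on computation; you also make explicit the closing step (nonsingularity of $\relbas_2\relbas_1$ plus generation implies basis) that the paper leaves implicit.
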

\begin{proof}
  It follows from \cref{lem:triangular_rem} that the first $\nbeq_1$ columns of
  $\rem{\relbas_1[\sys_1\;\;\sys_2]}{\modMat}$ are
  $\rem{\relbas_1\sys_1}{\modMat_1}$, which is zero, and that
  $\rem{[\matz\;\;\res]}{\modMat} = [\matz\;\;\rem{\res}{\modMat_2}]$. Then,
  the first identity in \cref{lem:quo_rem_properties} implies both that
  $\modRelCustom{\modMat}{[\matz\;\;\res]} = \modRelCustom{\modMat_2}{\res}$
  and that the rows of $\relbas_2\relbas_1$ are in
  $\modRelCustom{\modMat}{[\sys_1\;\;\sys_2]}$. Now let
  $\rel\in\modRelCustom{\modMat}{[\sys_1\;\;\sys_2]}$.
  \cref{lem:triangular_rem} implies that
  $\rel\in\modRelCustom{\modMat_1}{\sys_1}$, hence $\rel =
  \rowgrk{\lambda}\relbas_1$ for some $\rowgrk{\lambda}$. Then, the first
  identity in \cref{lem:quo_rem_properties} shows that $\matz =
  \rem{\rowgrk{\lambda}\relbas_1[\sys_1\;\;\sys_2]}{\modMat} =
  \rem{\rowgrk{\lambda}[\matz\;\;\res]}{\modMat}$, and therefore
  $\rowgrk{\lambda} \in \modRelCustom{\modMat_2}{\res}$. Thus $\rowgrk{\lambda}
  = \rowgrk{\mu}\relbas_2$ for some $\rowgrk{\mu}$, and
  $\rel=\rowgrk{\mu}\relbas_2\relbas_1$.
\end{proof}

\section{Computing modular products}
\label{sec:division}

In this section, we aim at designing a fast algorithm for the modular products
that arise in our relation basis algorithm.

\subsection{Fast division with remainder}
\label{subsec:division:uniform}

For univariate polynomials, fast Euclidean division can be achieved by first
computing the reversed quotient via Newton iteration, and then deducing the
remainder \cite[Chap.\,9]{vzGathen13}. This directly translates into the
context of polynomial matrices, as was noted for example in the proof of
\cite[Lem.\,3.4]{GiJeVi03} or in \cite[Chap.\,10]{Zhou12}.

In the latter reference, it is showed how to efficiently compute remainders
$\rem{\expandMat}{\modMat}$ for a matrix $\expandMat$ as in
\cref{eqn:expandMat} below; this is not general enough for our purpose.
Algorithms for the general case have been studied
\cite{FavLot91,ZhaChe83,Wolovich84,CodLot89,WanZho86}, but we are not aware of
any that achieves the speed we desire. Thus, as a preliminary to the
computation of residuals in \cref{subsec:division:non_uniform}, we now detail
this extension of fast polynomial division to fast polynomial matrix division.

As mentioned above, we will start by computing the quotient. The degrees of its
entries are controlled thanks to the reducedness of the divisor, which ensures
that no high-degree cancellation can occur when multiplying the quotient and
the divisor.

\begin{lemma}
  \label{lem:quotient_degree}
  Let $\modMat \in \modSpace$, $\sys \in \sysSpace$, and $\maxDeg \in \ZZp$ be
  such that $\modMat$ is column reduced and $\cdeg{\sys} < \cdeg{\modMat} +
  (\maxDeg,\ldots,\maxDeg)$. Then, $\deg(\quo{\sys}{\modMat}) < \maxDeg$.
\end{lemma}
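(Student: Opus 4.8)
The plan is to derive the bound by contradiction, using the column reducedness of $\modMat$ to rule out cancellation of the leading coefficient of the quotient in the product $\quoMat\modMat$. Write $\quoMat = \quo{\sys}{\modMat}$ and $\remMat = \rem{\sys}{\modMat}$, so that $\sys = \quoMat\modMat + \remMat$ with $\cdeg{\remMat} < \cdeg{\modMat}$ by \cref{thm:quo_rem}; set $(d_1,\ldots,d_\nbeq) = \cdeg{\modMat}$. First I would bound the column degrees of $\quoMat\modMat = \sys - \remMat$: from $\cdeg{\sys} < \cdeg{\modMat} + (\maxDeg,\ldots,\maxDeg)$ and $\cdeg{\remMat} < \cdeg{\modMat} \le \cdeg{\modMat} + (\maxDeg,\ldots,\maxDeg)$ (the latter since $\maxDeg \ge 1$), taking the componentwise maximum yields $\cdeg{\quoMat\modMat} < \cdeg{\modMat} + (\maxDeg,\ldots,\maxDeg)$, i.e.\ the $j$th column of $\quoMat\modMat$ has degree at most $d_j + \maxDeg - 1$.

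The key step is then to assume $q := \deg(\quoMat) \ge \maxDeg$ and reach a contradiction. Let $\quoMat_q$ be the coefficient of $x^q$ in $\quoMat$, a nonzero matrix over $\field$, and let $\mat{L} = \trsp{(\leadingMat[\unishift]{\trsp{\modMat}})}$ be the matrix whose $j$th column is the coefficient of $x^{d_j}$ in the $j$th column of $\modMat$; since $\modMat$ is column reduced, $\mat{L}$ is invertible. I would then observe that in the $j$th column of $\quoMat\modMat$, which is $\quoMat\matcol{\modMat}{j}$, the coefficient of $x^{q+d_j}$ equals $\quoMat_q\matcol{\mat{L}}{j}$, because every other product of a coefficient of $\quoMat$ with a coefficient of the $j$th column of $\modMat$ contributes to a strictly smaller degree. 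Ranging over $j$, these top coefficients assemble into the matrix $\quoMat_q\mat{L}$, which is nonzero since $\mat{L}$ is invertible and $\quoMat_q \ne \matz$. Hence $\quoMat_q\matcol{\mat{L}}{j} \ne \matz$ for some $j$, so the $j$th column of $\quoMat\modMat$ has degree at least $q + d_j \ge d_j + \maxDeg$, contradicting the bound $d_j + \maxDeg - 1$ obtained above. Therefore $q < \maxDeg$.

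The step I expect to require the most care is justifying that no cancellation occurs at degree $q + d_j$, so that the leading coefficient of the $j$th column of the product is exactly $\quoMat_q\matcol{\mat{L}}{j}$ and not a lower-degree term. This is precisely the point at which column reducedness is indispensable, through the invertibility of $\mat{L}$, which guarantees that the nonzero matrix $\quoMat_q$ cannot be annihilated simultaneously by all leading column coefficients of $\modMat$; it is the analogue in this setting of the predictable degree property \cite{Forney75}.
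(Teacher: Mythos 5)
Your proof is correct, and its first half coincides with the paper's: both start from $\sys = \quoMat\modMat + \remMat$ and bound $\cdeg{\quoMat\modMat} = \cdeg{\sys-\remMat} < \cdeg{\modMat} + (\maxDeg,\ldots,\maxDeg)$ using $\cdeg{\remMat}<\cdeg{\modMat}$. The difference lies in how column reducedness is then exploited. The paper observes that the column leading matrix of $\modMat$ equals its $-\orders$-row leading matrix $\leadingMat[{-\orders}]{\modMat}$, where $\orders = \cdeg{\modMat}$, so that $\modMat$ is $-\orders$-row reduced with $\rdeg[{-\orders}]{\modMat} = \unishift$; the shifted predictable degree property \cite[Thm.\,6.3-13]{Kailath80} then yields $\rdeg[{-\orders}]{\quoMat\modMat} = \rdeg[\unishift]{\quoMat}$, and $\deg(\quoMat)<\maxDeg$ follows at once, with no contradiction argument. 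You instead prove the needed no-cancellation statement from scratch: assuming $q = \deg(\quoMat) \ge \maxDeg$, you identify the coefficient of $\var^{q+d_j}$ in column $j$ of $\quoMat\modMat$ as $\quoMat_q \matcol{\mat{L}}{j}$, and the invertibility of $\mat{L}$ --- which is exactly the matrix $\leadingMat[{-\orders}]{\modMat}$ the paper works with --- guarantees a column in which this top coefficient survives, contradicting the degree bound from the first half. What each route buys: yours is self-contained, replacing the citation of the shifted predictable degree property by an inline proof of precisely the instance required; the paper's is shorter and direct, at the price of the shift trick ($\unishift$-column reduced $\Leftrightarrow$ $-\orders$-row reduced) and the external reference. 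Two minor points: your contradiction hypothesis tacitly assumes $\quoMat \neq \matz$ (the zero quotient case is trivially fine), and the ``cancellation at degree $q+d_j$'' you flag as delicate is in fact automatic, since $i+l = q+d_j$ with $i \le q$ and $l \le d_j$ forces $i=q$ and $l=d_j$; the genuine content is, as you correctly isolate, that $\quoMat_q\mat{L} \neq \matz$.
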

\begin{proof}
  First, $\trsp{\leadingMat[\unishift]{\trsp{\modMat}}} =
  \leadingMat[{-\orders}]{\modMat}$ where $\orders=\cdeg{\modMat}\in\NN^\nbeq$:
  the $\unishift$-column leading matrix of $\modMat$ is equal to its
  $-\orders$-row leading matrix. Since $\modMat$ is $\unishift$-column reduced,
  it is also $-\orders$-row reduced.

  Thus, by the predictable degree property \cite[Thm.\,6.3-13]{Kailath80} and
  since since $\rdeg[-\orders]{\modMat} = \unishift$, we have
  $\rdeg[-\orders]{\quoMat \modMat} = \rdeg[\unishift]{\quoMat}$. Here, we
  write $\quoMat = \quo{\sys}{\modMat}$ and $\remMat=\rem{\sys}{\modMat}$.

  Now, our assumption $\cdeg{\sys} < \orders + (\maxDeg,\ldots,\maxDeg)$ and
  the fact that $\cdeg{\remMat} < \orders$ imply that $\cdeg{\sys - \remMat} <
  \orders + (d,\ldots,d)$, and thus $\rdeg[-\orders]{\sys-\remMat} <
  (\maxDeg,\ldots,\maxDeg)$. Since $\sys - \remMat = \quoMat \modMat$, from the
  previous paragraph we obtain $\rdeg[\unishift]{\quoMat} <
  (\maxDeg,\ldots,\maxDeg)$, hence $\deg(\quoMat) < \maxDeg$.
\end{proof}

\begin{corollary}
  \label{cor:quotient_degree}
  Let $\modMat \in \modSpace$ and $\sys \in \sysSpace$ be such that $\modMat$
  is column reduced and $\cdeg{\sys} < \cdeg{\modMat}$, and let $\mat{P} \in
  \polMatSpace[k][\nbun]$. Then, $\rdeg{\quo{\mat{P} \sys}{\modMat}} <
  \rdeg{\mat{P}}$.
\end{corollary}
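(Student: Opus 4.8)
The plan is to reduce \cref{cor:quotient_degree} to the quotient degree bound already established in \cref{lem:quotient_degree}. The statement to prove concerns $\rdeg{\quo{\mat{P}\sys}{\modMat}}$, i.e.\ the standard ($\unishift$-)row degree of the quotient of $\mat{P}\sys$ by $\modMat$; since $\rdeg{\cdot}$ is computed row by row, it suffices to fix a single row of $\mat{P}$, say $\rel$ with $\deg(\rel) = \rdeg{\rel} = \maxDeg_i$, and to bound $\deg(\quo{\rel\sys}{\modMat}) < \maxDeg_i$. So the heart of the matter is the case $k=1$.

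For a single row, I would first control the column degrees of the product $\rel\sys$. Writing $\rel = [p_1,\ldots,p_\nbun]$, the $j$th column of $\rel\sys$ is $\sum_\ell p_\ell \matCoeff{f}_{\ell,j}$, whose degree is at most $\deg(\rel) + \cdeg{\sys}_j$. Using the hypothesis $\cdeg{\sys} < \cdeg{\modMat}$, this gives $\cdeg{\rel\sys}_j \le \maxDeg_i + \cdeg{\sys}_j < \maxDeg_i + \cdeg{\modMat}_j$, so that $\cdeg{\rel\sys} < \cdeg{\modMat} + (\maxDeg_i,\ldots,\maxDeg_i)$, which is exactly the hypothesis required to invoke \cref{lem:quotient_degree} with $\sys$ replaced by $\rel\sys$ and $\maxDeg$ replaced by $\maxDeg_i$. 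That lemma then yields $\deg(\quo{\rel\sys}{\modMat}) < \maxDeg_i$. Reassembling over all rows, and noting that the $i$th row of $\quo{\mat{P}\sys}{\modMat}$ is $\quo{\rel\sys}{\modMat}$ by uniqueness of the quotient in \cref{thm:quo_rem} (so that block division acts row-wise), we conclude $\rdeg{\quo{\mat{P}\sys}{\modMat}}_i < \maxDeg_i = \rdeg{\mat{P}}_i$ for every $i$, which is the claimed inequality $\rdeg{\quo{\mat{P}\sys}{\modMat}} < \rdeg{\mat{P}}$.

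The one point requiring a little care — and the closest thing to an obstacle — is the degenerate case where a row $\rel$ of $\mat{P}$ is zero, so $\maxDeg_i = \rdeg{\rel} = -\infty$ (or $0$ under the convention $\deg(\matz) = 0$); then $\rel\sys = \matz$, its quotient is $\matz$, and the strict inequality must be read with the appropriate degree convention so that it still holds. One should simply adopt the same convention as in \cref{lem:quotient_degree} and check the bound is vacuously or trivially satisfied there. Apart from this bookkeeping, the argument is a direct specialization: the substantive content is entirely in \cref{lem:quotient_degree}, and the corollary only propagates the scalar bound through a matrix product via the column-degree estimate $\cdeg{\rel\sys} \le \deg(\rel) + \cdeg{\sys}$ componentwise.
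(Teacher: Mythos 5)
Your proof is correct and is essentially the paper's own argument: both reduce to the single-row case via the componentwise bound $\cdeg{\rel\sys} < \cdeg{\modMat} + (\maxDeg,\ldots,\maxDeg)$ with $\maxDeg = \deg(\rel)$, invoke \cref{lem:quotient_degree}, and then treat general $k$ by applying this row by row. Your only addition is the remark about zero rows, a degree-convention detail the paper leaves implicit.
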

\begin{proof}
  For the case $k=1$, the inequality follows from \cref{lem:quotient_degree}
  since $\cdeg{\mat{P} \sys} \le (\maxDeg,\ldots,\maxDeg) + \cdeg{\sys} <
  (\maxDeg,\ldots,\maxDeg) + \cdeg{\modMat}$, where $\maxDeg = \deg(\mat{P})$.
  Then, the general case $k \in \ZZp$ follows by considering separately each
  row of $\mat{P}$.
\end{proof}

Going back to the division $\sys = \quoMat \modMat + \remMat$, to obtain the
reversed quotient we will right-multiply the reversed $\sys$ by an expansion of
the inverse of the reversed $\modMat$. This operation is performed efficiently
by means of high-order lifting; we will use the next result.

\begin{lemma}
  \label{lem:hol}
  Let $\modMat \in \modSpace$ with $\modMat(0)$ nonsingular, and let $\sys \in
  \sysSpace$. Then, defining $\order = \lceil \sumVec{\cdeg{\modMat}} / \nbeq
  \rceil$, the truncated $\var$-adic expansion $\sys \modMat^{-1} \bmod \var^{k
  \order}$ can be computed deterministically using $\softO{\lceil \nbun k /
    \nbeq \rceil \nbeq^\expmatmul \order}$ operations in $\field$.
\end{lemma}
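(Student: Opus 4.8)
The statement is essentially Storjohann's high-order lifting \cite{Storjohann03}, applied after the possibly unbalanced column degrees of $\modMat$ have been flattened to the uniform value $\order$. The plan is therefore to proceed in two stages: a partial linearization of $\modMat$ that brings its degree down to $\order$ while keeping its dimension in $\bigO{\nbeq}$, followed by high-order lifting on the resulting uniform-degree system, with a careful cost accounting that yields the factor $\lceil\nbun k/\nbeq\rceil$ and not the weaker $k\lceil\nbun/\nbeq\rceil$.

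\emph{Reduction to uniform degree.} Since $\sumVec{\cdeg{\modMat}}\le\nbeq\order$, covering all columns of $\modMat$ by chunks of degree $<\order$ needs only $\bigO{\nbeq}$ chunks in total. Transposing, so that computing $\sys\modMat^{-1}$ amounts to solving $\trsp{\modMat}\trsp{\mat{Y}}=\trsp{\sys}$ over power series, I would split each high-degree row of $\trsp{\modMat}$ into pieces of degree $<\order$ and add the usual auxiliary rows and columns carrying $-\var^{\order}$ links, exactly as in the partial linearization of \cite{Storjohann03,GuSaStVa12}. This produces $\tilde{\modMat}\in\polRing^{\bar{\nbeq}\times\bar{\nbeq}}$ with $\bar{\nbeq}\le 2\nbeq$ and $\deg(\tilde{\modMat})\le\order$, together with a padded right-hand side $\tilde{\sys}=[\sys\;\;\matz]$, from which $\sys\modMat^{-1}\bmod\var^{k\order}$ is read off as a block of $\tilde{\sys}\tilde{\modMat}^{-1}\bmod\var^{k\order}$. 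Two points need checking: that $\tilde{\modMat}(0)$ is nonsingular, which holds because the $-\var^{\order}$ links vanish at $\var=0$ and leave a block-triangular matrix with diagonal blocks $\modMat(0)$ and identity matrices; and that the solution block is recovered correctly, which is the standard correctness of partial linearization. The coefficient reshuffling costs only $\softO{(\nbun+\nbeq)\nbeq\order}$.

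\emph{High-order lifting.} On the uniform-degree instance I would apply \cite{Storjohann03}. First precompute the $\bigO{\log k}$ high-order components of $\tilde{\modMat}^{-1}$, namely the length-$2\order$ windows of its $\var$-adic expansion at positions $2^{j}\order$; each follows from the previous one through a constant number of products of $\Theta(\nbeq)\times\Theta(\nbeq)$ matrices of degree $<2\order$, for a total of $\softO{\nbeq^{\expmatmul}\order\log k}$. As $k$ is bounded by the target cost, this $\log k$ is absorbed into $\softO{\lceil\nbun k/\nbeq\rceil\nbeq^{\expmatmul}\order}$. Then $\tilde{\sys}\tilde{\modMat}^{-1}\bmod\var^{k\order}$ is reconstructed window by window by the residue recursion of \cite{Storjohann03}: once the components are known, each length-$\order$ window of the product is a fixed bilinear expression in a degree-$<\order$ residue extracted from $\tilde{\sys}$ and one component, up to carries between adjacent windows.

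\emph{Main obstacle.} The delicate part is the cost of this reconstruction when both the number $\nbun$ of right-hand-side rows and the number $k$ of windows are small relative to $\nbeq$. Treating the $k$ windows one at a time would involve $k$ products of an $\nbun\times\Theta(\nbeq)$ matrix by a $\Theta(\nbeq)\times\Theta(\nbeq)$ component, i.e.\ a total of $k\lceil\nbun/\nbeq\rceil$ full products, which is too much. The fix is to batch the two thin dimensions together: stacking the residues of $\Theta(\nbeq/\nbun)$ windows as the rows of one tall matrix lets each multiplication by a component fill the $\Theta(\nbeq)$ inner dimension of a single fast rectangular product, so that the entire reconstruction collapses to $\lceil\nbun k/\nbeq\rceil$ products of $\Theta(\nbeq)\times\Theta(\nbeq)$ matrices of degree $\order$, each costing $\softO{\nbeq^{\expmatmul}\order}$. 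Making this batching mesh with the doubling schedule of the high-order components, and with the inter-window carries, is the real work; undoing the linearization and selecting the relevant columns afterwards is immediate and gives the claimed $\softO{\lceil\nbun k/\nbeq\rceil\nbeq^{\expmatmul}\order}$.
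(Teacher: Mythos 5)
Your proposal takes essentially the same route as the paper's proof: flatten the unbalanced column degrees of $\modMat$ by partial linearization into a matrix of dimension less than $2\nbeq$ and degree at most $\order$ whose inverse contains $\modMat^{-1}$ as its principal $\nbeq\times\nbeq$ submatrix (the paper uses the column linearization of \cite[Sec.\,6]{GuSaStVa12}, which is your construction transposed), and then run high-order lifting with $X=\var^\order$ on the zero-padded right-hand side $[\sys\;\;\matz]$. The only difference is that the step you flag as ``the real work''---batching the thin dimensions $\nbun$ and $k$ so that the reconstruction takes $\lceil\nbun k/\nbeq\rceil$ rather than $k\lceil\nbun/\nbeq\rceil$ large products---is precisely what the cited result \cite[Alg.\,4 and Prop.\,15]{Storjohann03} already delivers, and the paper invokes it as a black box, so no further argument is needed there.
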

\begin{proof}
  This is a minor extension of \cite[Prop.\,15]{Storjohann03}, incorporating the
  average column degree of the matrix $\modMat$ instead of the largest degree of
  its entries. This can be done by means of partial column linearization
  \cite[Sec.\,6]{GuSaStVa12}, as follows. One first expands the high-degree
  columns of $\modMat$ and inserts elementary rows to obtain a matrix
  $\expand{\modMat} \in \polMatSpace[\expand{\nbeq}]$ such that $\nbeq \le
  \expand{\nbeq} < 2\nbeq$, $\deg(\expand{\modMat}) \le \order$, and
  $\modMat^{-1}$ is the $\nbeq\times\nbeq$ principal leading submatrix of
  $\expand{\modMat}{ }^{-1}$ \cite[Thm.\,10 and Cor.\,2]{GuSaStVa12}. Then,
  defining $\expand{\sys} = [\sys \;\; \mat{0}] \in
  \polMatSpace[\nbun][\expand{\nbeq}]$, we have that $\sys \modMat^{-1}$ is the
  submatrix of $\expand{\sys} \, \expand{\modMat}{ }^{-1}$ formed by its first
  $\nbeq$ columns. Thus, the sought truncated expansion is obtained by computing
  $\expand{\sys} \, \expand{\modMat}{ }^{-1} \bmod \var^{k \order}$, which is
  done efficiently by \cite[Alg.\,4]{Storjohann03} with the choice $X =
  \var^\order$; this is valid since this polynomial is coprime to
  $\det(\expand{\modMat}) = \det(\modMat)$ and its degree is at least the degree
  of $\expand{\modMat}$.
\end{proof}

\begin{algobox}
  \algoInfo
  {PM-QuoRem}
  {algo:division}

  \dataInfos{Input}{
    \item $\modMat \in \modSpace$ column reduced,
    \item $\sys \in \sysSpace$,
    \item $\maxDeg \in \ZZp$ such that $\cdeg{\sys} < \cdeg{\modMat} + (\maxDeg,\ldots,\maxDeg)$.
  }

  \dataInfo{Output}{
    the quotient $\quo{\sys}{\modMat}$, the remainder $\rem{\sys}{\modMat}$.
  }

  \algoSteps{
    \item \inlcomment{reverse order of coefficients} \\
      $(\order_1,\ldots,\order_\nbeq) \assign \cdeg{\modMat}$ \\
      $\rev{\modMat} = \modMat(\var^{-1}) \, \diag{\var^{\order_1},\ldots,\var^{\order_\nbeq}}$ \\
      $\rev{\sys} = \sys(\var^{-1}) \, \diag{\var^{\maxDeg+\order_1-1},\ldots,\var^{\maxDeg+\order_\nbeq-1}}$
    \item \inlcomment{compute quotient via expansion} \\
      $\rev{\quoMat} \assign \rev{\sys} \rev{\modMat}^{-1} \bmod \var^{\maxDeg}$ \\
      $\quoMat \assign \var^{\maxDeg-1}\rev{\quoMat}(\var^{-1})$
    \item \algoword{Return} $(\quoMat,\sys - \quoMat \modMat)$
  }
\end{algobox}

\begin{proposition}
  \label{prop:algo:quo_rem}
  \cref{algo:division} is correct. Assuming that both $\nbun\degExp$ and
  $\nbeq$ are in $\bigO{\vsdim}$, where $\vsdim = \sumVec{\cdeg{\modMat}}$,
  this algorithm uses $\softO{\lceil\nbun/\nbeq\rceil \nbeq^{\expmatmul-1}
\vsdim}$ operations in $\field$.
\end{proposition}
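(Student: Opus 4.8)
The plan is to establish correctness from the reversal identity underlying fast division, then to bound the three steps separately; the genuinely delicate point is the matrix product in Step~3.

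\emph{Correctness.} Let $\sys = \quoMat\modMat + \remMat$ be the unique decomposition of \cref{thm:quo_rem}, with $\cdeg{\remMat} < \cdeg{\modMat}$; the goal is to show that Step~2 recovers this $\quoMat$. Since $\modMat$ is column reduced, each column attains its degree $\order_j = \cdeg_j{\modMat}$, so the constant coefficient $\rev{\modMat}(0) = \trsp{\leadingMat[\unishift]{\trsp{\modMat}}}$ is invertible and $\rev{\modMat}^{-1}$ is a well-defined power series; moreover \cref{lem:quotient_degree} gives $\deg(\quoMat) < \maxDeg$, so $\rev{\quoMat} := \var^{\maxDeg-1}\quoMat(\var^{-1})$ is a polynomial of degree below $\maxDeg$. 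Substituting $\var^{-1}$ for $\var$ in $\sys = \quoMat\modMat + \remMat$ and multiplying on the right by $\diag{\var^{\maxDeg+\order_1-1},\ldots,\var^{\maxDeg+\order_\nbeq-1}}$, I would use $\modMat(\var^{-1})\diag{\var^{\order_1},\ldots,\var^{\order_\nbeq}} = \rev{\modMat}$ to rewrite the left-hand side as $\rev{\sys}$ and the first term as $\rev{\quoMat}\,\rev{\modMat}$; since $\cdeg_j{\remMat} < \order_j$, the reversed-remainder term is divisible by $\var^{\maxDeg}$. Hence $\rev{\sys} \equiv \rev{\quoMat}\,\rev{\modMat} \pmod{\var^{\maxDeg}}$, so that $\rev{\sys}\,\rev{\modMat}^{-1} \bmod \var^{\maxDeg}$ computed in Step~2 equals $\rev{\quoMat}$ exactly (its degree being below $\maxDeg$). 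Reversing back recovers $\quoMat$, and $\sys - \quoMat\modMat$ is $\remMat$ by definition.

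\emph{Cost of Steps~1 and~2.} Step~1 only reindexes coefficients and costs no more than reading the input. Step~2 is an instance of \cref{lem:hol} applied to $\rev{\modMat} \in \modSpace$ (with $\rev{\modMat}(0)$ nonsingular) and the dividend $\rev{\sys}$: with $\order = \lceil\sumVec{\cdeg{\rev{\modMat}}}/\nbeq\rceil \le \lceil\vsdim/\nbeq\rceil$ and $k = \lceil\maxDeg/\order\rceil$, so that $k\order \ge \maxDeg$, the expansion modulo $\var^{\maxDeg}$ costs $\softO{\lceil\nbun k/\nbeq\rceil\nbeq^{\expmatmul}\order}$. Expanding $\lceil\nbun k/\nbeq\rceil \le \nbun k/\nbeq + 1$ and $k\order < \maxDeg + \order$, the dominant term is $\softO{\nbun\nbeq^{\expmatmul-1}\maxDeg} = \softO{\nbeq^{\expmatmul-1}\vsdim}$ by the hypothesis $\nbun\maxDeg \in \bigO{\vsdim}$, while the remaining terms reduce to $\softO{\lceil\nbun/\nbeq\rceil\nbeq^{\expmatmul-1}\vsdim}$ once $\nbeq \in \bigO{\vsdim}$ is used to absorb the stray factors of the form $\nbeq^{\expmatmul}$ and $\nbun\nbeq^{\expmatmul-1}$.

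\emph{Cost of Step~3 (the main obstacle).} The product $\quoMat\modMat$ is the crux: $\quoMat$ has uniform degree below $\maxDeg$, but the columns of $\modMat$ carry the highly unbalanced degrees $\order_1,\ldots,\order_\nbeq$ summing to $\vsdim$, so forming it densely at degree $\max_j\order_j$ would be far too costly. I would instead apply partial column linearization to $\modMat$ (exactly as in the proof of \cref{lem:hol}), replacing it by a matrix of balanced degree $\lceil\vsdim/\nbeq\rceil$ with $\bigO{\nbeq}$ columns from which $\quoMat\modMat$ is reconstructed by a cost-free recombination; and I would exploit that, because $\cdeg{\remMat} < \cdeg{\modMat}$, only the coefficients of $\quoMat\modMat$ of degree below $\order_j$ in column $j$ are ever needed, so the high-order part is never formed. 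The hard part is the cost accounting for this truncated, degree-balanced product: one must combine the balanced degree $\lceil\vsdim/\nbeq\rceil$, the bound $\maxDeg$ on $\deg(\quoMat)$, and the two hypotheses $\nbun\maxDeg, \nbeq \in \bigO{\vsdim}$ to fit it into $\softO{\lceil\nbun/\nbeq\rceil\nbeq^{\expmatmul-1}\vsdim}$. When $\nbun \ge \nbeq$ the hypothesis $\nbun\maxDeg \in \bigO{\vsdim}$ already forces $\maxDeg \in \bigO{\lceil\vsdim/\nbeq\rceil}$, so the product is balanced and the bound is immediate; tracking how the factor $\maxDeg$ interacts with the block structure of the multiplication in the opposite regime $\nbun < \nbeq$, where $\maxDeg$ need not be bounded by $\lceil\vsdim/\nbeq\rceil$, is where the estimate demands the most care.
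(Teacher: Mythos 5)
Your correctness argument and your treatment of Steps~1 and~2 coincide with the paper's proof: the reversal identity $\rev{\sys} = \rev{\quoMat}\,\rev{\modMat} + \var^{\maxDeg}\rev{\remMat}$, the invertibility of $\rev{\modMat}(0)$ (the column leading matrix of $\modMat$) coming from column reducedness, the degree bound $\deg(\quoMat)<\maxDeg$ from \cref{lem:quotient_degree}, and the application of \cref{lem:hol} with the bookkeeping you carry out are all exactly the paper's route, and your accounting there is sound. Your Step~3 analysis in the regime $\nbun \ge \nbeq$ also matches the paper and is fine.

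The genuine gap is the regime you yourself flag as ``demanding the most care,'' namely $\nbun < \nbeq$ in Step~3: you leave the estimate open, and the tools you assemble cannot close it. Column linearization of $\modMat$ only balances the column degrees $\order_1,\ldots,\order_\nbeq$; it does nothing about the other source of imbalance, which is that $\deg(\quoMat)$ may be as large as $\maxDeg \approx \vsdim/\nbun$, far exceeding $\vsdim/\nbeq$ when $\nbun < \nbeq$. Your truncation idea does not rescue this: exploiting truncation orders that differ from column to column pushes you toward (batches of) column-wise products, forgoing fast matrix multiplication and costing on the order of $\softO{\nbun\nbeq\vsdim}$, which already for $\nbun = \nbeq/2$ exceeds the target $\softO{\nbeq^{\expmatmul-1}\vsdim}$; and multiplying the unexpanded $\quoMat$ against the column-linearized $\modMat$ slice by slice costs $\softO{\nbeq^{2}\nbun^{\expmatmul-2}\maxDeg}$, which is a factor $(\nbeq/\nbun)^{3-\expmatmul}$ over budget. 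The missing idea --- and what the paper does --- is a partial \emph{row} linearization of the quotient itself: since $\nbun\maxDeg \in \bigO{\vsdim}$, each of the $\nbun$ rows of $\quoMat$ splits into chunks of degree $\bigO{\lceil \nbun\maxDeg/\nbeq\rceil} \subseteq \bigO{\vsdim/\nbeq}$, yielding a matrix $\expand{\quoMat}$ of dimensions $\bigO{\nbeq}\times\nbeq$ and degree $\bigO{\vsdim/\nbeq}$ from which $\quoMat\modMat$ is recovered by cost-free recombination. Combined with the column expansion of $\modMat$, Step~3 then reduces to a single balanced product of $\bigO{\nbeq}\times\bigO{\nbeq}$ matrices of degree $\bigO{\vsdim/\nbeq}$, i.e.\ $\softO{\nbeq^{\expmatmul-1}\vsdim}$ operations, with no truncation needed. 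This row expansion of $\quoMat$ is precisely how the hypothesis $\nbun\maxDeg \in \bigO{\vsdim}$ earns its keep in the hard regime, and without it the proposition's cost bound is not established.
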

\begin{proof}
  Let $\quoMat=\quo{\sys}{\modMat}$, $\remMat=\rem{\sys}{\modMat}$, and
  $(\order_1,\ldots,\order_\nbeq)=\cdeg{\modMat}$. We have the bounds
  $\cdeg{\sys} < (\maxDeg+\order_1,\ldots,\maxDeg+\order_\nbeq)$,
  $\cdeg{\remMat} < (\order_1,\ldots,\order_\nbeq)$, and
  \cref{lem:quotient_degree} gives $\deg(\quoMat) < \maxDeg$. Thus, we can
  define the reversals of these polynomial matrices as
  \begin{align*}
      \rev{\modMat} & = \modMat(\var^{-1}) \, \diag{\var^{\order_1},\ldots,\var^{\order_\nbeq}}, \\
      \rev{\sys} & = \sys(\var^{-1}) \, \diag{\var^{\maxDeg+\order_1-1},\ldots,\var^{\maxDeg+\order_\nbeq-1}}, \\
      \rev{\quoMat} & = \var^{\maxDeg-1} \quoMat(\var^{-1}), \\
      \rev{\remMat} & = \remMat(\var^{-1}) \, \diag{\var^{\order_1-1},\ldots,\var^{\order_\nbeq-1}},
  \end{align*}
  for which the same degree bounds hold. Then, right-multiplying both sides of
  the identity $\sys(\var^{-1}) = \quoMat(\var^{-1}) \modMat(\var^{-1}) +
  \remMat(\var^{-1})$ by
  $\diag{\var^{\maxDeg+\order_1-1},\ldots,\var^{\maxDeg+\order_\nbeq-1}}$, we
  obtain $\rev{\sys} = \rev{\quoMat} \rev{\modMat} + \var^\maxDeg
  \rev{\remMat}$.

  Now, note that the constant term $\rev{\modMat}(0) \in \matSpace[\nbeq]$ is
  equal to the column leading matrix of $\modMat$, which is invertible since
  $\modMat$ is column reduced, hence $\rev{\modMat}$ is invertible (over the
  fractions). Thus, since $\deg(\rev{\quoMat}) < \maxDeg$, this reversed
  quotient matrix can be determined as the truncated expansion $\rev{\quoMat} =
  \rev{\sys} \rev{\modMat}^{-1} \bmod \var^{\maxDeg}$. This proves the
  correctness of the algorithm.

  Concerning the cost bound, Step\,\textbf{2} uses $\softO{\lceil
  (\nbun\maxDeg) / (\nbeq \order)\rceil \nbeq^\expmatmul \order}$ operations
  according to \cref{lem:hol}, where $\order = \lceil\vsdim /\nbeq\rceil$. We
  have by assumption $\order \in \Theta(\vsdim / \nbeq)$ as well as
  $\nbun\maxDeg/(\nbeq\order) \in \bigO{1}$, so that this cost bound is in
  $\softO{\nbeq^{\expmatmul-1} \vsdim}$.

  In Step\,\textbf{3}, we multiply the $\nbun\times\nbeq$ matrix $\quoMat$ of
  degree less than $\maxDeg$ with the $\nbeq\times\nbeq$ matrix $\modMat$ such
  that $\sumVec{\cdeg{\modMat}} = \vsdim$. First consider the case $\nbun \le
  \nbeq$. To perform this product efficiently, we expand the rows of $\quoMat$
  so as to obtain a $\bigO{\nbeq}\times\nbeq$ matrix $\expand{\quoMat}$ of
  degree in $\bigO{\lceil \nbun\degExp / \nbeq\rceil}$ and such that $\quoMat
  \modMat$ is easily retrieved from $\expand{\quoMat} \modMat$ (see
  \cref{subsec:division:non_uniform} for more details about how such row
  expansions are carried out). Thus, this product is done in
  $\softO{\nbeq^{\expmatmul-1} \vsdim}$, since $\lceil \nbun\degExp /
  \nbeq\rceil \in \bigO{\vsdim/\nbeq}$. On the other hand, if $\nbun > \nbeq$,
  we have $\degExp \in \bigO{\vsdim/\nbun} \subseteq \bigO{\vsdim/\nbeq}$.
  Then, we can compute the product $\quoMat \modMat$ via
  $\lceil\nbun/\nbeq\rceil$ products of $\nbeq\times \nbeq$ matrices of degree
  $\bigO{\vsdim/\nbeq}$, which cost each $\softO{\nbeq^{\expmatmul-1} \vsdim}$
  operations; hence the total cost $\softO{\nbun \nbeq^{\expmatmul-2} \vsdim}$
  when $\nbun > \nbeq$.
\end{proof}

\subsection{Fast residual computation}
\label{subsec:division:non_uniform}

Here, we focus on performing modular products $\rem{\relbas\sys}{\modMat}$,
where $\sys \in \sysSpace$ and $\relbas \in \relbasSpace$ are such that
$\cdeg{\sys}<\cdeg{\modMat}$ and $\sumVec{\cdeg{\relbas}} \le
\sumVec{\cdeg{\modMat}}$, and $\modMat \in \modSpace$ is column reduced. The
difficulty in designing a fast algorithm for this operation comes from the
non-uniformity of $\cdeg{\relbas}$: in particular, the product $\relbas \sys$
cannot be computed within the target cost bound.

To start with, we use the same strategy as in \cite{JeNeScVi16,Neiger16}: we
make the column degrees of $\relbas$ uniform, at the price of introducing
another, simpler matrix $\expandMat$ for which we want to compute
$\rem{\expandMat \sys}{\modMat}$.

Let $(\minDeg_1,\ldots,\minDeg_\nbun) = \cdeg{\relbas}$, $\degExp =
\lceil(\minDeg_1 + \cdots + \minDeg_\nbun) / \nbun \rceil \ge 1$, and for
$i\in\{1,\ldots,\nbun\}$ write $\minDeg_i = (\quoExp_i -1) \degExp + \remExp_i$
with $\quoExp_i = \lceil \minDeg_i / \degExp \rceil$ and $1 \le \remExp_i \le
\degExp$ if $\minDeg_i > 0$, and with $\quoExp_i = 1$ and $\remExp_i = 0$ if
$\minDeg_i = 0$. Then, let $\expand{\nbun} = \quoExp_1 + \cdots +
\quoExp_\nbun$, and define $\expandMat \in
\polMatSpace[\expand{\nbun}][\nbun]$ as the transpose of
\begin{equation}
  \label{eqn:expandMat}
  \trsp{\expandMat} = 
  \begin{bmatrix}
    1\! & \!\var^\degExp\! & \!\cdots\! & \!\var^{(\quoExp_1-1)\degExp} \\
      & & & & \!\ddots\! \\
      & & & &        & \!1\! & \!\var^\degExp\! & \!\cdots\! & \!\var^{(\quoExp_\nbun-1)\degExp}
  \end{bmatrix}.
\end{equation}
Define also the expanded column degrees $\expand{\minDegs}\in
\NN^{\expand{\nbun}}$ as
\begin{equation}
  \label{eqn:expandMinDegs}
  \expand{\minDegs} = ( \underbrace{\degExp, \ldots, \degExp,
  \remExp_1}_{\quoExp_1}, \ldots, \underbrace{\degExp, \ldots, \degExp,
  \remExp_\nbun}_{\quoExp_\nbun} ).
\end{equation}
\noindent
Then, we expand the columns of $\relbas$ by considering $\expand{\relbas} \in
\polMatSpace[\nbun][\expand{\nbun}]$ such that $\relbas = \expand{\relbas}
\expandMat$ and $\deg(\expand{\relbas}) \le \degExp$. (Note that
$\expand{\relbas}$ can be made unique by specifying more constraints on
$\cdeg{\expand{\relbas}}$.) The aim of this construction is that the dimension
is at most doubled while the degree of the expanded matrix becomes the average
column degree of $\relbas$.  Precisely, $\nbun \le \expand{\nbun} < 2\nbun$ and
$\max(\expand{\minDegs}) = \degExp = \lceil
\sumVec{\cdeg{\relbas}}/\nbun\rceil$.

Now, we have $\rem{\relbas\sys}{\modMat} =
\rem{\expand{\relbas}\expandMat\sys}{\modMat} =
\rem{\expand{\relbas}\,\expand{\sys}}{\modMat}$ by
\cref{lem:quo_rem_properties}, where $\expand{\sys} =
\rem{\expandMat\sys}{\modMat}$. Thus, $\rem{\relbas\sys}{\modMat}$ can be
obtained by computing first $\expand{\sys}$ and then
$\rem{\expand{\relbas}\,\expand{\sys}}{\modMat}$. For the latter, since
$\expand{\relbas}$ has small degree, one can compute the product and then
perform the division (Steps\,\textbf{3} and \textbf{4} of
\cref{algo:residual}). Step\,\textbf{2} of \cref{algo:residual} efficiently
computes $\expand{\sys}$, relying on \cref{algo:rem_of_shifts}.

\begin{algobox}
  \algoInfo
  {RemOfShifts}
  {algo:rem_of_shifts}

  \dataInfos{Input}{
    \item $\modMat \in \modSpace$ column reduced,
    \item $\sys \in \sysSpace$ such that $\cdeg{\sys}<\cdeg{\modMat}$,
    \item $\degExp \in \ZZp$ and $k \in \NN$.
  }

  \dataInfo{Output}{
    the list of remainders $(\rem{\var^{\xpnt \degExp} \sys}{\modMat})_{0 \le \xpnt<2^k}$.
  }

  \algoSteps{
    \item \algoword{If} $k = 0$ \algoword{then} \algoword{Return} $\sys$
    \item \algoword{Else}
      \begin{enumerate}[{\bf a.}]
        \item $(\anyMat\,,\mat{G}) \assign \algoname{PM-QuoRem}(\modMat,\var^{2^{k-1}\degExp} \sys,2^{k-1}\degExp)$
        \item $\left(\begin{bmatrix} \remMat_{\xpnt 0} \\ \remMat_{\xpnt 1} \end{bmatrix}\right)_{\! 0\le \xpnt<2^{k-1}} \!\!\assign \algoname{RemOfShifts}\!\left(\modMat,\begin{bmatrix} \sys \\ \mat{G} \end{bmatrix},\degExp,k-1\right)$
        \item \algoword{Return} $(\remMat_{\xpnt 0})_{0\le \xpnt <2^{k-1}} \cup (\remMat_{\xpnt 1})_{0\le \xpnt<2^{k-1}}$
      \end{enumerate}
  }
\end{algobox}

\begin{proposition}
  \label{prop:algo:rem_of_shifts}
  \cref{algo:rem_of_shifts} is correct. Assuming that both $2^{k}\nbun\degExp$
  and $\nbeq$ are in $\bigO{\vsdim}$, where $\vsdim = \sumVec{\cdeg{\modMat}}$,
  this algorithm uses $\softO{(2^k\nbun\nbeq^{\expmatmul-2} + k
  \nbeq^{\expmatmul-1}) \vsdim}$ operations in $\field$.
\end{proposition}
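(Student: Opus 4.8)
The plan is to establish correctness by induction on $k$ and then bound the cost by unrolling the recursion into a geometric sum. The guiding observation is that multiplying by a power $\var^{\xpnt\degExp}$ commutes with taking remainders modulo $\modMat$, in the precise sense of \cref{lem:quo_rem_properties}; this is what lets us split the requested range $0 \le \xpnt < 2^k$ into a lower half read off directly from $\sys$ and an upper half obtained by first shifting to $\mat{G} = \rem{\var^{2^{k-1}\degExp}\sys}{\modMat}$ and recursing on the doubled input.

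For the correctness induction, first I would dispatch the base case $k=0$: since $\cdeg{\sys} < \cdeg{\modMat}$, uniqueness in \cref{thm:quo_rem} gives $\rem{\sys}{\modMat} = \sys$, which is what the algorithm returns. For the inductive step, Step \textbf{2a} produces $\mat{G} = \rem{\var^{2^{k-1}\degExp}\sys}{\modMat}$ by correctness of \algoname{PM-QuoRem} (\cref{prop:algo:quo_rem}); its column-degree precondition holds because $\cdeg{\sys} < \cdeg{\modMat}$, and consequently $\cdeg{\mat{G}} < \cdeg{\modMat}$ as well, so the stacked input $\big[\begin{smallmatrix}\sys\\\mat{G}\end{smallmatrix}\big]$ still meets the input specification for the recursive call. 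Applying the induction hypothesis together with the block identity of \cref{lem:quo_rem_properties} yields $\remMat_{\xpnt 0} = \rem{\var^{\xpnt\degExp}\sys}{\modMat}$ and $\remMat_{\xpnt 1} = \rem{\var^{\xpnt\degExp}\mat{G}}{\modMat}$ for $0 \le \xpnt < 2^{k-1}$; the first identity of \cref{lem:quo_rem_properties}, applied with left factor $\var^{\xpnt\degExp}\idMat[\nbun]$, then rewrites $\remMat_{\xpnt 1}$ as $\rem{\var^{(\xpnt + 2^{k-1})\degExp}\sys}{\modMat}$. Concatenating the two families as in Step \textbf{2c} therefore covers all shifts $0 \le \xpnt < 2^k$.

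For the cost, I would unroll the recursion into its $k$ levels, observing that at level $j$ the row dimension has doubled to $2^j\nbun$ while the shift exponent has halved to $2^{k-1-j}\degExp$, so the one \algoname{PM-QuoRem} call at that level runs on $2^j\nbun$ rows with degree parameter $2^{k-1-j}\degExp$. The assumptions of \cref{prop:algo:quo_rem} hold uniformly, since $(2^j\nbun)(2^{k-1-j}\degExp) = 2^{k-1}\nbun\degExp \le 2^k\nbun\degExp \in \bigO{\vsdim}$ and $\nbeq \in \bigO{\vsdim}$; hence level $j$ costs $\softO{\lceil 2^j\nbun/\nbeq\rceil\nbeq^{\expmatmul-1}\vsdim}$, with every other operation (shifting by powers of $\var$, stacking, splitting the output list) being arithmetic-free. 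Bounding $\lceil 2^j\nbun/\nbeq\rceil \le 2^j\nbun/\nbeq + 1$ and summing the resulting geometric series over $j$ gives a total in $\softO{(2^k\nbun\nbeq^{\expmatmul-2} + k\nbeq^{\expmatmul-1})\vsdim}$, as claimed.

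The main obstacle is the bookkeeping in the correctness step: one must track exactly which shift each returned block corresponds to and verify that the two distinct identities of \cref{lem:quo_rem_properties} — block-stacking and ``remainder of a remainder'' — combine to reindex the upper half as the high shifts $2^{k-1} \le \xpnt < 2^k$, while also checking that the column-degree precondition for \algoname{PM-QuoRem} is preserved at every recursive call. The cost analysis is comparatively routine once the level-by-level parameters $(2^j\nbun,\,2^{k-1-j}\degExp)$ are identified and their product is seen to be constant across levels.
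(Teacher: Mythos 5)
Your proof is correct and follows essentially the same route as the paper's (much terser) argument: correctness via the two identities of \cref{lem:quo_rem_properties} applied along the recursion, and the cost via \cref{prop:algo:quo_rem}, noting that the row dimension doubles while the shift exponent halves so that their product stays in $\bigO{\vsdim}$, then summing $\sum_{j<k}\lceil 2^j\nbun/\nbeq\rceil$. Your write-up simply makes explicit the induction and the reindexing of the upper half of the shifts, which the paper leaves implicit.
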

\begin{proof}
  The correctness is a consequence of the two properties in
  \cref{lem:quo_rem_properties}.
  Now, if $2^{k}\nbun\degExp$ and $\nbeq$ are in $\bigO{\vsdim}$, the
  assumptions in \cref{prop:algo:quo_rem} about the input parameters for
  \algoname{PM-QuoRem} are always satisfied in recursive calls, since the row
  dimension $\nbun$ is doubled while the exponent $2^k \degExp$ is halved. From
  the same proposition, we deduce the cost bound $\softO{(\sum_{0\le \xpnt\le
  k-1} \lceil 2^r \nbun/\nbeq \rceil ) \nbeq^{\expmatmul-1}\vsdim}$.
\end{proof}

\begin{algobox}
  \algoInfo
  {Residual}
  {algo:residual}

  \dataInfos{Input}{
    \item $\modMat \in \modSpace$ column reduced,
    \item $\sys \in \sysSpace$ such that $\cdeg{\sys}<\cdeg{\modMat}$,
    \item $\relbas \in \relbasSpace$.
  }

  \dataInfo{Output}{
    the remainder $\rem{\relbas\sys}{\modMat}$.
  }

  \algoSteps{
    \item \inlcomment{expand high-degree columns of $\relbas$} \\
      $(\minDeg_i)_{1\le i \le \nbun} \assign \cdeg{\relbas}$ \\
      $\degExp \assign \lceil (\minDeg_1 + \cdots + \minDeg_\nbun) / \nbun \rceil$ \\
      $\quoExp_i \assign \max(1,\lceil \minDeg_i / \degExp \rceil)$ for $1 \le i \le \nbun$  \\
      $\expand{\nbun} \assign \quoExp_1 + \cdots + \quoExp_\nbun$ \\
      $\expand{\relbas} \in \polMatSpace[\expand{\nbun}][\nbun] \assign$
      matrix such that $\relbas = \expand{\relbas} \expandMat$ and $\deg(\expand{\relbas}) \le \degExp$ \\
      \phantom{hfill} \hfill for $\expandMat$ as in~\cref{eqn:expandMat}
    \item \inlcomment{compute $\expand{\sys} = \rem{\expandMat\sys}{\modMat}$} \\
      \algoword{For} $1 \le i \le \nbun$ such that $\quoExp_i=1$ \algoword{do}
      \begin{enumerate}[ ]
        \item $\expand{\sys}_i \in \polMatSpace[\quoExp_i][\nbeq] \assign$ row $i$ of $\sys$
      \end{enumerate}
      \algoword{For} $1 \le k \le \lceil \log_2(\max_i(\quoExp_i)) \rceil$ \algoword{do}
      \begin{enumerate}[ ] 
        \item $(i_1,\ldots,i_\ell) \assign \{ i \in \{1,\ldots,\nbun\} \mid 2^{k-1} < \quoExp_i \le 2^k\}$
        \item $\res \assign $ submatrix of $\sys$ formed by its rows $i_1,\ldots,i_\ell$
        \item $(\remMat_\xpnt)_{0\le \xpnt<2^k} \assign \algoname{RemOfShifts}(\modMat,\res,\degExp,k)$
        \item \algoword{For} $1\le j\le\ell$ \algoword{do}
          \begin{enumerate}[ ]
            \item $\expand{\sys}_{i_j} \in \polMatSpace[\quoExp_{i_j}][\nbeq]
              \assign$ stack the rows $j$ of $(\remMat_\xpnt)_{0\le \xpnt<\quoExp_{i_j}}$
          \end{enumerate}
      \end{enumerate}
      $\expand{\sys} \assign \trsp{\begin{bmatrix} \trsp{\expand{\sys}_1} & \cdots & \trsp{\expand{\sys}_\nbun} \end{bmatrix}} \in \polMatSpace[\expand{\nbun}][\nbeq]$
    \item \inlcomment{left-multiply by the expanded $\relbas$} \\
      $\res \assign \expand{\relbas} \, \expand{\sys}$
    \item \inlcomment{complete the remainder computation} \\
      $(\anyMat\,,\remMat) \assign \algoname{PM-QuoRem}(\modMat,\res,\degExp)$ \\
      \algoword{Return} $\remMat$
  }
\end{algobox}

\begin{proposition}
  \label{prop:algo:residual}
  \cref{algo:residual} is correct. Assuming that all of
  $\sumVec{\cdeg{\relbas}}$, $\nbun$, and $\nbeq$ are in $\bigO{\vsdim}$, where
  $\vsdim = \sumVec{\cdeg{\modMat}}$, this algorithm uses
  $\softO{(\nbun^{\expmatmul-1} + \nbeq^{\expmatmul-1}) \vsdim} $ operations in
  $\field$.
\end{proposition}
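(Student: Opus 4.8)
The plan is to prove both claims of \cref{prop:algo:residual} --- correctness and the cost bound --- by carefully tracking the construction in the three preparatory paragraphs preceding \cref{algo:residual} and then accounting for the work in each step.

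For correctness, I would first invoke the identity established just before the algorithm box: by \cref{lem:quo_rem_properties}, $\rem{\relbas\sys}{\modMat} = \rem{\expand{\relbas}\expandMat\sys}{\modMat} = \rem{\expand{\relbas}\,\expand{\sys}}{\modMat}$, where $\expand{\sys} = \rem{\expandMat\sys}{\modMat}$ and $\relbas = \expand{\relbas}\expandMat$ with $\deg(\expand{\relbas})\le\degExp$. So it suffices to check that Step~\textbf{2} correctly computes $\expand{\sys} = \rem{\expandMat\sys}{\modMat}$ and that Steps~\textbf{3}--\textbf{4} then return $\rem{\expand{\relbas}\,\expand{\sys}}{\modMat}$. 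The latter is immediate: Step~\textbf{3} forms $\res = \expand{\relbas}\,\expand{\sys}$ and Step~\textbf{4} calls \algoname{PM-QuoRem} to take its remainder modulo $\modMat$ (the degree hypothesis on this call holds because $\deg(\expand{\relbas})\le\degExp$ bounds the column degrees of $\res$ appropriately, via \cref{cor:quotient_degree} applied with $\expand{\sys}$ reduced modulo $\modMat$). For Step~\textbf{2}, I would verify that each row of $\expandMat$ is a monomial shift $\var^{r\degExp}$ applied to a row of $\sys$, following the block structure in \cref{eqn:expandMat}; thus $\expand{\sys}_i$ must stack the remainders $\rem{\var^{r\degExp}\matrow{\sys}{i}}{\modMat}$ for $0\le r<\quoExp_i$. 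Grouping the rows by the dyadic range of $\quoExp_i$ and calling \algoname{RemOfShifts}, whose correctness is \cref{prop:algo:rem_of_shifts}, produces exactly these remainders.

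For the cost bound, I would analyze the three expensive steps separately. Step~\textbf{2} consists of at most $\lceil\log_2(\max_i\quoExp_i)\rceil$ calls to \algoname{RemOfShifts}; the call at level $k$ handles the $\ell$ rows with $2^{k-1}<\quoExp_i\le 2^k$, and since $\sum_i\quoExp_i = \expand{\nbun} < 2\nbun$, these row counts $\ell$ satisfy $\ell\,2^{k-1} \le \sum_i\quoExp_i \in \bigO{\nbun}$, so $2^k\ell \in \bigO{\nbun}$ at every level. Feeding this into the cost of \cref{prop:algo:rem_of_shifts}, namely $\softO{(2^k\ell\,\nbeq^{\expmatmul-2}+k\,\nbeq^{\expmatmul-1})\vsdim}$, and summing over the $\bigO{\log\nbun}$ levels gives $\softO{(\nbun\nbeq^{\expmatmul-2}+\nbeq^{\expmatmul-1})\vsdim}$, which is within the target (the hypotheses $2^k\ell\degExp\in\bigO{\vsdim}$ and $\nbeq\in\bigO{\vsdim}$ follow from $\sumVec{\cdeg{\relbas}},\nbun,\nbeq\in\bigO{\vsdim}$ together with $2^k\ell\degExp \in \bigO{\nbun\degExp} = \bigO{\sumVec{\cdeg{\relbas}}}$). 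Step~\textbf{3} multiplies the $\nbun\times\expand{\nbun}$ matrix $\expand{\relbas}$ of degree $\le\degExp$ by the $\expand{\nbun}\times\nbeq$ matrix $\expand{\sys}$, whose column degrees are bounded by $\cdeg{\modMat}$; treating this as a product of matrices of the stated dimensions with $\expand{\nbun}\in\bigO{\nbun}$ and total degree governed by $\nbun\degExp = \bigO{\vsdim}$ and $\vsdim$, a partial-linearization / balancing argument like the one in \cref{prop:algo:quo_rem} yields $\softO{\nbun^{\expmatmul-1}\vsdim}$. Finally, the \algoname{PM-QuoRem} call in Step~\textbf{4} costs $\softO{\lceil\nbun/\nbeq\rceil\nbeq^{\expmatmul-1}\vsdim}\subseteq\softO{(\nbun^{\expmatmul-1}+\nbeq^{\expmatmul-1})\vsdim}$ by \cref{prop:algo:quo_rem}. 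Adding the three contributions gives the claimed $\softO{(\nbun^{\expmatmul-1}+\nbeq^{\expmatmul-1})\vsdim}$.

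I expect the main obstacle to be the Step~\textbf{3} analysis: the product $\expand{\relbas}\,\expand{\sys}$ is structurally unbalanced because $\expand{\sys}$ inherits the column degrees of $\modMat$, which sum to $\vsdim$ but may be highly non-uniform. The naive multiplication cost $\bigO{\nbun\nbeq\degExp\max_j\order_j}$ can exceed the target, so the careful part is arguing that one may partially linearize the high-degree columns of $\expand{\sys}$ (doubling $\nbeq$ at worst and flattening its degree to the average $\vsdim/\nbeq$), exactly as done for the Step~\textbf{3} product in the proof of \cref{prop:algo:quo_rem}, and then bound the resulting balanced product by $\softO{\nbun^{\expmatmul-1}\vsdim}$. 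Getting the bookkeeping of these two simultaneous expansions --- the row expansion already baked into $\expand{\relbas}$ and the column expansion needed for $\expand{\sys}$ --- to interact cleanly, while respecting the degree hypotheses required by the final \algoname{PM-QuoRem} call, is where I would spend the most care.
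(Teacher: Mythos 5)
Your correctness argument and your cost analyses of Steps \textbf{2} and \textbf{4} match the paper's proof (same identities from \cref{lem:quo_rem_properties}, same per-level accounting for \algoname{RemOfShifts}, same invocation of \cref{prop:algo:quo_rem}), so those parts are fine. The genuine gap is in Step \textbf{3} --- exactly the place you flagged as delicate --- and the concrete balancing you propose does not close it. You expand the columns of $\expand{\sys}$ into at most $2\nbeq$ columns of degree about $\vsdim/\nbeq$ (``doubling $\nbeq$ at worst'') and claim the resulting product costs $\softO{\nbun^{\expmatmul-1}\vsdim}$. In the regime $\nbun \gg \nbeq$ this fails: the two factors are then an $\nbun\times\expand{\nbun}$ matrix of degree at most $\degExp \in \bigO{\vsdim/\nbun}$ and an $\expand{\nbun}\times\bigO{\nbeq}$ matrix of degree about $\vsdim/\nbeq \gg \degExp$, so the degrees are still unbalanced and fast square matrix multiplication cannot be applied directly; blocking into $\nbeq\times\nbeq$ blocks gives on the order of $(\nbun/\nbeq)^2$ products of $\nbeq\times\nbeq$ matrices of degree $\vsdim/\nbeq$, i.e.\ $\softO{\nbun^{2}\nbeq^{\expmatmul-3}\vsdim}$ operations, which exceeds the target $\softO{\nbun^{\expmatmul-1}\vsdim}$ by the unbounded factor $(\nbun/\nbeq)^{3-\expmatmul}$ (recall $\expmatmul<3$). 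The paper's proof avoids this by a case distinction: when $\nbun\ge\nbeq$ it expands the columns of $\expand{\sys}$ \emph{further}, into $\bigO{\nbun}$ columns of degree $\bigO{\vsdim/\nbun}$ --- the column count is allowed to grow to $\bigO{\nbun}$ rather than being capped at $2\nbeq$ --- after which both factors are $\bigO{\nbun}\times\bigO{\nbun}$ matrices of degree $\bigO{\vsdim/\nbun}$ and a single fast multiplication costs $\softO{\nbun^{\expmatmul-1}\vsdim}$. The missing idea, in one sentence: the linearization granularity must be the degree $\degExp \approx \vsdim/\nbun$ of the left factor (equivalently, the average degree relative to the larger dimension $\nbun$), not $\vsdim/\nbeq$.

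A secondary slip in the opposite regime $\nbeq\ge\nbun$: there your expansion is the right one, but the bound you assert for the product is not $\softO{\nbun^{\expmatmul-1}\vsdim}$; it is $\softO{\nbun^{\expmatmul-2}\nbeq\,\vsdim}$ (as the paper states), which can exceed $\nbun^{\expmatmul-1}\vsdim$ when $\nbeq\gg\nbun$. This does not break your final sum, since $\nbun^{\expmatmul-2}\nbeq \le \nbeq^{\expmatmul-1}$ for $\nbun\le\nbeq$, so the contribution stays within $\softO{(\nbun^{\expmatmul-1}+\nbeq^{\expmatmul-1})\vsdim}$; but it confirms that Step \textbf{3} requires a two-case analysis with regime-dependent expansions and bounds rather than a single uniform estimate.
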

\begin{proof}
  Let us consider $\expandMat \in \polMatSpace[\expand{\nbun}][\nbun]$ defined
  as in \cref{eqn:expandMat} from the parameters $\degExp$ and
  $\quoExp_1,\ldots,\quoExp_\nbun$ in Step\,\textbf{1}. We claim that the
  matrix $\expand{\sys}$ computed at Step\,\textbf{2} is equal to
  $\rem{\expandMat\sys}{\modMat}$. Then, having
  $\cdeg{\expand{\relbas}\,\expand{\sys}} < \cdeg{\modMat} +
  (\degExp,\ldots,\degExp)$, the correctness of \algoname{PM-QuoRem} implies
  $\remMat = \rem{\expand{\relbas}\, \expand{\sys}}{\modMat}$, which is
  $\rem{\relbas \sys}{\modMat}$ by \cref{lem:quo_rem_properties}.

  To prove our claim, it is enough to show that, for $1 \le i \le \nbun$, the
  $i$th block $\expand{\sys}_i$ of $\expand{\sys}$ is the matrix formed by
  stacking the remainders involving the row $i$ of $\sys$, that is,
  $(\rem{x^{\xpnt\degExp} \matrow{\sys}{i}}{\modMat})_{0 \le \xpnt <
  \quoExp_i}$. This is clear from the first \algoword{For} loop if $\quoExp_i
  =1$. Otherwise, let $k \in \ZZp$ be such that $2^{k-1} < \quoExp_i \le 2^k$.
  Then, at the $k$th iteration of the second loop, we have $i_j = i$ for some
  $1 \le j\le \ell$. Thus, the correctness of \algoname{RemOfShifts} implies
  that, for $0 \le \xpnt < 2^k$, the row $j$ of $\remMat_\xpnt$ is
  $\rem{x^{\xpnt\degExp} \matrow{\res}{j}}{\modMat} = \rem{x^{\xpnt\degExp}
  \matrow{\sys}{i}}{\modMat}$. Since $2^k \ge \quoExp_i$, this contains the
  wanted remainders and the claim follows.

  Let us show the cost bound, assuming that $\sumVec{\cdeg{\relbas}}$, $\nbun$,
  and $\nbeq$ are in $\bigO{\vsdim}$. Note that this implies $\nbun\degExp \in
  \bigO{\vsdim}$.
  
  We first study the cost of the iteration $k$ of the second loop of
  Step\,\textbf{2}. We have that $2^{k-1} \ell \le \quoExp_1 + \cdots +
  \quoExp_\nbun = \expand{\nbun} \le 2\nbun$, the row dimension of $\res$ is
  $\ell$, and $k \le \lceil\log(\max_i(\quoExp_i))\rceil \in
  \bigO{\log(\nbun)}$. Thus, the call to \algoname{RemOfShifts} costs
  $\softO{(\nbun\nbeq^{\expmatmul-2} + \nbeq^{\expmatmul-1})\vsdim}$ operations
  according to \cref{prop:algo:rem_of_shifts}, and the same cost bound holds
  for the whole Step\,\textbf{2}. Concerning Step\,\textbf{4}, the cost bound
  $\softO{\lceil\nbun/\nbeq\rceil \nbeq^{\expmatmul-1}\vsdim}$ follows
  directly from \cref{prop:algo:quo_rem}.

  The product at Step\,\textbf{3} involves the $\nbun \times \expand{\nbun}$
  matrix $\expand{\relbas}$ whose degree is at most $\degExp$ and the
  $\expand{\nbun} \times \nbeq$ matrix $\expand{\sys}$ such that
  $\cdeg{\expand{\sys}} < \cdeg{\modMat}$; we recall that $\expand{\nbun} \le
  2\nbun$. If $\nbeq \ge \nbun$, we expand the columns of $\expand{\sys}$
  similarly to how $\expand{\relbas}$ was obtained from $\relbas$: this yields
  a $\expand{\nbun} \times (\le 2\nbeq)$ matrix of degree at most
  $\lceil\vsdim/\nbeq\rceil$, whose left-multiplication by $\expand{\relbas}$
  directly yields $\expand{\relbas} \, \expand{\sys}$ by compressing back the
  columns. Thus, this product is done in $\softO{\nbun^{\expmatmul-2} \nbeq
\vsdim}$ operations since both $\degExp$ and $\vsdim/\nbeq$ are in
$\bigO{\vsdim/\nbun}$ when $\nbeq\ge\nbun$. If $\nbun \ge \nbeq$, we do a
similar column expansion of $\expand{\sys}$, yet into a matrix with
$\bigO{\nbun}$ columns and degree $\bigO{\vsdim / \nbun}$; thus, the product
can be performed in $\softO{\nbun^{\expmatmul-1} \vsdim}$ operations in this
case.
\end{proof}

\section{Fast algorithms in specific cases}
\label{sec:building_blocks}

Here, we discuss fast solutions to specific instances of \cref{pbm:relbas}.
This will be important ingredients of our main algorithm for relations modulo
Hermite forms (\cref{algo:relbas}).

\subsection{When the input module is an ideal}
\label{subsec:building_blocks:rk_one}

We first focus on \cref{pbm:relbas} when $\nbeq=1$; this is one of the two base
cases of the recursion in \cref{algo:relbas} (Step\,\textbf{2}). In this case,
the input matrix $\modMat$ is a nonzero polynomial $M\in\polRing$. In other
words, the input module is the ideal $\idealGen{M}$ of $\polRing$, and we are
looking for the $\shifts$-Popov basis for the set of relations between $\nbun$
elements of $\polRing/\idealGen{M}$. A fast algorithm for this task was given
in \cite[Sec.\,2.2]{Neiger16}; precisely, the following result is achieved by
running \cite[Alg.\,2]{Neiger16} on input $\modMat,\sys,\shifts,2\vsdim$.

\begin{proposition}
  \label{prop:algo:rk_one}
  Assuming $\nbeq = 1$ and $\deg(\sys) < \vsdim = \deg(\modMat)$, there is an
  algorithm which solves \cref{pbm:relbas} using
  $\softO{\nbun^{\expmatmul-1}\vsdim}$ operations in $\field$.
\end{proposition}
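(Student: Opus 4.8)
The statement is a deferred reduction: it claims that the single-equation case $\nbeq=1$ of \cref{pbm:relbas} can be solved in $\softO{\nbun^{\expmatmul-1}\vsdim}$ operations, and it attributes this to an external algorithm from \cite{Neiger16}. So the plan is not to prove a fresh result from scratch, but to \emph{invoke and adapt} the cited algorithm, checking that its input hypotheses match our setting and that its cost bound instantiates to the one claimed.

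First I would make the problem concrete. When $\nbeq=1$, the matrix $\modMat$ is a single nonzero polynomial $M\in\polRing$ of degree $\vsdim=\deg(M)$, the input $\sys$ is a column of $\nbun$ polynomials $f_1,\dots,f_\nbun$, and by hypothesis $\deg(\sys)<\vsdim$, i.e.\ each $f_i$ is already reduced modulo $M$. The sought object is the $\shifts$-Popov basis of the syzygy module $\modRelGen=\{\rel\in\relSpace\mid \rel\sys=0\bmod M\}$, which by the discussion following \cref{lem:degdetbound} is free of rank $\nbun$ and admits a unique $\shifts$-Popov basis $\relbas\in\relbasSpace$. My task is to certify that computing this $\relbas$ costs $\softO{\nbun^{\expmatmul-1}\vsdim}$.

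The key step is simply to \emph{cite} \cite[Alg.\,2 and its analysis, Sec.\,2.2]{Neiger16}, which is precisely the algorithm designed for univariate relations modulo a single polynomial, and run it on the tuple $(\modMat,\sys,\shifts,2\vsdim)$ (the last argument being a degree bound the cited algorithm expects; $2\vsdim$ safely dominates the relevant degrees in view of \cref{lem:degdetbound}, which gives $\sumVec{\cdeg{\relbas}}\le\vsdim$). I would then observe two things: (i) our input hypotheses — $M\neq0$, $\deg(\sys)<\deg(M)$ — are exactly the normalization that algorithm assumes, and (ii) under these hypotheses its stated complexity specializes to $\softO{\nbun^{\expmatmul-1}\vsdim}$. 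Correctness of the \emph{output} (that it is genuinely the $\shifts$-Popov form, and that it is a basis of $\modRelGen$) is guaranteed by the correctness proof in \cite{Neiger16}; nothing new needs to be argued here.

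The main obstacle, and the only real content, is verifying the \emph{interface compatibility} rather than re-deriving anything. The one subtlety to check carefully is the choice of the degree-bound argument $2\vsdim$: I would confirm that $\deg(\det(\relbas))\le\vsdim$ (immediate from \cref{lem:degdetbound} with $\nbeq=1$, since $\deg(\det(M))=\vsdim$) forces every column degree of the Popov basis, hence every pivot degree, to be at most $\vsdim$, so a bound of $2\vsdim$ is comfortably valid and does not inflate the asymptotic cost. Beyond that check, the proof is a one-line appeal to the cited result, and I expect the author's proof to consist of exactly that sentence.
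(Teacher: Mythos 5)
Your proposal is correct and takes essentially the same route as the paper: the paper's entire justification is that the result "is achieved by running \cite[Alg.\,2]{Neiger16} on input $\modMat,\sys,\shifts,2\vsdim$", which is precisely the invocation you describe, with correctness and cost delegated to \cite[Sec.\,2.2]{Neiger16}. Your additional sanity check that the degree-bound argument $2\vsdim$ is valid via \cref{lem:degdetbound} is not spelled out in the paper but is consistent with it.
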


\subsection{When the $\shifts$-minimal degree is known}
\label{subsec:building_blocks:knowndeg}

Now, we consider \cref{pbm:relbas} with an additional input: the
$\shifts$-minimal degree of $\modRelGen$, which is the column degree of its
$\shifts$-Popov basis. This is motivated by a technique from \cite{JeNeScVi16}
and used in \cref{algo:relbas} to control the degrees of all the bases computed
in the process. Namely, we find this $\shifts$-minimal degree recursively, and
then we compute the $\shifts$-Popov relation basis using this knowledge.

The same question was tackled in \cite[Sec.\,3]{GupSto11} and
\cite[Sec.\,2.1]{Neiger16} for a diagonal matrix $\modMat$. Here, we extend
this to the case of a column reduced $\modMat$, relying in particular on the
fast computation of $\rem{\expandMat \sys}{\modMat}$ designed in
\cref{subsec:division:non_uniform}. We first extend \cite[Lem.\,2.1]{Neiger16}
to this more general setting (\cref{lem:kernel_relbas}), and then we give the
slightly modified version of \cite[Alg.\,1]{Neiger16}
(\cref{algo:knowndeg_relbas}).

\begin{lemma}
  \label{lem:kernel_relbas}
  Let $\modMat \in \modSpace$ be column reduced, let $\sys \in \sysSpace$ be
  such that $\cdeg{\sys} < \cdeg{\modMat}$, let $\shifts\in\shiftSpace$.
  Furthermore, let $\relbas \in \relbasSpace$, and let $\shifts[w] \in
  \shiftSpace[\nbeq]$ be such that $\max(\shifts[w]) \le \min(\shifts)$. Then,
  $\relbas$ is the $\shifts$-Popov relation basis for $\modRelGen$ if and only if
  $[\relbas \;\; \mat{Q}]$ is the $\shifts[u]$-Popov kernel basis of $\trsp{ [
    \trsp{\sys} \;\; \modMat ] }$ for some $\mat{Q} \in
    \polMatSpace[\nbun][\nbeq]$ and $\shifts[u] = (\shifts,\shifts[w]) \in
    \shiftSpace[\nbun+\nbeq]$. In this case, $\deg(\mat{Q}) < \deg(\relbas)$
    and $[ \relbas \;\; \mat{Q} ]$ has $\shifts[u]$-pivot index
    $(1,2,\ldots,\nbun)$.
\end{lemma}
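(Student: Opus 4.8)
The plan is to pass through the left kernel $K$ of the stacked matrix $\trsp{[\trsp{\sys}\;\;\modMat]}$ (whose rows are $\sys$ on top of $\modMat$), and to use the hypothesis $\max(\shifts[w])\le\min(\shifts)$ to confine the entire $\shifts[u]$-Popov structure to the first $\nbun$ columns. First I would record the module correspondence. Since $\modMat$ is nonsingular the stacked matrix has full column rank $\nbeq$, so $K$ is free of rank $\nbun$. A row $[\row{a}\;\;\row{b}]$ with $\row{a}\in\polMatSpace[1][\nbun]$ and $\row{b}\in\polMatSpace[1][\nbeq]$ lies in $K$ exactly when $\row{a}\sys = \matz \bmod \modMat$, that is $\row{a}\in\modRelGen$, in which case $\row{b} = -\quo{\row{a}\sys}{\modMat}$ is uniquely determined. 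Hence the projection $[\row{a}\;\;\row{b}]\mapsto\row{a}$ is an isomorphism $K\cong\modRelGen$, and $\relbas$ is a relation basis for $\modRelGen$ if and only if $[\relbas\;\;\mat{Q}]$ is a basis of $K$ for the unique completion $\mat{Q} = -\quo{\relbas\sys}{\modMat}$.

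Next I would prove the degree bound that drives everything. As $\modMat$ is column reduced and $\cdeg{\sys}<\cdeg{\modMat}$, applying \cref{cor:quotient_degree} to $\relbas$ gives $\rdeg{\mat{Q}}<\rdeg{\relbas}$ componentwise, hence $\deg(\matrow{\mat{Q}}{i}) < \deg(\matrow{\relbas}{i})$ for every $i$ and in particular $\deg(\mat{Q})<\deg(\relbas)$. Passing to the shifted degrees and using $\max(\shifts[w])\le\min(\shifts)$, each row satisfies
\[
  \rdeg[{\shifts[w]}]{\matrow{\mat{Q}}{i}}
    \le \deg(\matrow{\mat{Q}}{i}) + \max(\shifts[w])
    < \deg(\matrow{\relbas}{i}) + \min(\shifts)
    \le \rdeg[\shifts]{\matrow{\relbas}{i}}.
\]
Therefore the $\shifts[u]$-degree of row $i$ of $[\relbas\;\;\mat{Q}]$ equals $\rdeg[\shifts]{\matrow{\relbas}{i}}$ and is attained only among the first $\nbun$ columns. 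It follows that the last $\nbeq$ columns of $\leadingMat[{\shifts[u]}]{[\relbas\;\;\mat{Q}]}$ vanish and that $\leadingMat[{\shifts[u]}]{[\relbas\;\;\mat{Q}]} = [\leadingMat[\shifts]{\relbas}\;\;\matz]$. Consequently $[\relbas\;\;\mat{Q}]$ is $\shifts[u]$-reduced iff $\relbas$ is $\shifts$-reduced, and all $\shifts[u]$-pivots of $[\relbas\;\;\mat{Q}]$ lie in columns $\le\nbun$.

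Finally I would transfer the normalization. Having $\nbun$ pivots among the first $\nbun$ columns, they occupy precisely those columns; and when $\relbas$ is in $\shifts$-Popov form its pivots are diagonal, giving pivot index $(1,\ldots,\nbun)$. Each clause of the $\shifts[u]$-Popov definition for $[\relbas\;\;\mat{Q}]$—monic pivots and unit lower triangular leading matrix (the $\leadingMat[\shifts]{\cdot}$ condition of \cref{dfn:spopov}), together with columnwise dominance of the pivot degrees (the $\leadingMat[\unishift]{\trsp{\cdot}}$ condition)—is a statement about its pivot columns, i.e. about $\leadingMat[\shifts]{\relbas}$ and the first $\nbun$ columns, so it is identical to the corresponding clause for $\relbas$; the entries of $\mat{Q}$ sit in non-pivot columns and are automatically reduced by the strict inequality above, imposing nothing further. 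Hence $\relbas$ is in $\shifts$-Popov form iff $[\relbas\;\;\mat{Q}]$ is in $\shifts[u]$-Popov form, and combined with the module correspondence and the uniqueness of canonical bases this yields the equivalence together with the two auxiliary claims. The main obstacle I anticipate is exactly this last bookkeeping: verifying that no clause of the rectangular $\shifts[u]$-Popov definition secretly constrains $\mat{Q}$, which is precisely what $\max(\shifts[w])\le\min(\shifts)$ guarantees.
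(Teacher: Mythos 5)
Your proposal is correct and takes essentially the same route as the paper's proof: the correspondence between relation bases for $\modRelGen$ and kernel bases of $\trsp{[\trsp{\sys}\;\;\modMat]}$, the bound $\rdeg{\mat{Q}}<\rdeg{\relbas}$ from \cref{cor:quotient_degree} combined with $\max(\shifts[w])\le\min(\shifts)$ to obtain $\leadingMat[{\shifts[u]}]{[\relbas\;\;\mat{Q}]}=[\leadingMat[\shifts]{\relbas}\;\;\matz]$, and the resulting transfer of the Popov conditions and pivot index $(1,\ldots,\nbun)$. You merely spell out the details (the explicit completion $\mat{Q}=-\quo{\relbas\sys}{\modMat}$, the row-wise inequality chain, the clause-by-clause normalization check) that the paper compresses into ``easily verified'' and one displayed identity.
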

\begin{proof}
  Let $\kerbas = \trsp{[ \trsp{\sys} \;\; \modMat ]}$. It is easily verified
  that $\relbas$ is a relation basis for $\modRelGen$ if and only if there is
  some $\mat{Q}\in \polMatSpace[\nbun][\nbeq]$ such that $[\relbas \;\;
  \mat{Q}]$ is a kernel basis of $\kerbas$.

  Then, for any matrix $[\mat{P} \;\; \mat{Q}] \in
  \polMatSpace[\nbun][(\nbun+\nbeq)]$ in the kernel of $\kerbas$, we have
  $\mat{P} \sys = -\mat{Q} \modMat$ and therefore \cref{cor:quotient_degree}
  shows that $\rdeg{\mat{Q}} < \rdeg{\mat{P}}$; since $\max(\shifts[w]) \le
  \min(\shifts)$, this implies $\rdeg[{\shifts[w]}]{\mat{Q}} <
  \rdeg[\shifts]{\mat{P}}$. Thus, we have
  $\leadingMat[{\shifts[u]}]{[\mat{P}\;\;\mat{Q}]} =
  [\leadingMat[\shifts]{\mat{P}} \;\; \matz]$, and therefore $\mat{P}$ is in
  $\shifts$-Popov form if and only if $[\mat{P} \;\; \mat{Q}]$ is in
  $\shifts[u]$-Popov form with $\shifts[u]$-pivot index $(1,\ldots,\nbun)$.
\end{proof}

\begin{algobox}
  \algoInfo
  {KnownDegreeRelations}
  {algo:knowndeg_relbas}

  \dataInfos{Input}{
    \item $\modMat \in \modSpace$ column reduced,
    \item $\sys \in \sysSpace$ such that $\cdeg{\sys} < \cdeg{\modMat}$,
    \item $\shifts \in \shiftSpace$,
    \item $\minDegs = (\minDeg_1,\ldots,\minDeg_\nbun)$ the $\shifts$-minimal
      degree of $\modRelGen$.
  }

  \dataInfo{Output}{
    the $\shifts$-Popov relation basis for $\modRelGen$.
  }

  \algoSteps{
    \item \inlcomment{define partial linearization parameters} \\
      $\degExp \assign \lceil (\minDeg_1 + \cdots + \minDeg_\nbun) / \nbun \rceil$, \\
      $\quoExp_i \assign \max(1,\lceil \minDeg_i / \degExp \rceil)$ for $1 \le i \le \nbun$,  \\
      $\expand{\nbun} \assign \quoExp_1 + \cdots + \quoExp_\nbun$, \\
      $\expand{\minDegs} \assign$ tuple as in~\cref{eqn:expandMinDegs}
    \item \inlcomment{for $\expandMat$ as in \cref{eqn:expandMat}, compute $\expand{\sys} = \rem{\expandMat\sys}{\modMat}$} \\
      $\expand{\sys} \assign$ follow Step\,\textbf{2} of \cref{algo:residual} (\algoname{Residual})
    \item \inlcomment{compute the kernel basis} \\
      $\shifts[u] \assign (-\expand{\minDegs},-\degExp,\ldots,-\degExp)
      \in \shiftSpace[\expand{\nbun}+\nbeq]$ \\
      $\boldsymbol{\tau} \assign (\cdeg{\matcol{\modMat}{j}} + \degExp+1)_{1 \le j \le \nbeq}$ \\
      $\expand{\relbas} \assign$ $\shifts[u]$-Popov approximant basis for
      $\begin{bmatrix} \expand{\sys} \\ \modMat\end{bmatrix}$ and orders $\boldsymbol{\tau}$
    \item \inlcomment{retrieve the relation basis} \\
      $\mat{P}$ $\assign$ the principal $\expand{\nbun} \times \expand{\nbun}$ submatrix of $\expand{\relbas}$ \\
      \algoword{Return} the submatrix of $\relbas \expandMat$ formed by the
      rows at indices $\quoExp_1+\cdots+\quoExp_i$ for $1\le i\le \nbun$
  }
\end{algobox}

\begin{proposition}
  \label{prop:algo:knowndeg}
  \cref{algo:knowndeg_relbas} is correct, and assuming that $\nbun$ and $\nbeq$
  are in $\bigO{\vsdim}$, where $\vsdim = \sumVec{\cdeg{\modMat}}$, it uses
  $\softO{\nbun^{\expmatmul-1} \vsdim + \nbeq^\expmatmul \vsdim/\nbun}$ operations in
  $\field$.
\end{proposition}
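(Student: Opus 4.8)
The plan is to adapt the analysis of the diagonal case \cite[Alg.\,1]{Neiger16}, isolating the two places where column-reducedness of $\modMat$ (rather than diagonality) matters: the computation of $\expand{\sys}$ and the degree bounds behind the order $\boldsymbol{\tau}$. First I would record that Step~\textbf{2} returns $\expand{\sys}=\rem{\expandMat\sys}{\modMat}$, for $\expandMat$ as in \cref{eqn:expandMat}, by invoking the correctness of Step~\textbf{2} of \cref{algo:residual} proved in \cref{prop:algo:residual}. The parameters of Step~\textbf{1} are set so that $\expand{\minDegs}$ of \cref{eqn:expandMinDegs} is the minimal degree of the linearized module; here I would cite \cite{JeNeScVi16,Neiger16} both for the fact that this partial linearization preserves minimal degrees and for the fact that, since $\minDegs$ is the $\shifts$-minimal degree of $\modRelGen$, the $\shifts$-Popov relation basis coincides with the $(-\minDegs)$-Popov one.

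For correctness, the first identity of \cref{lem:quo_rem_properties} gives $\rem{\expand{\rel}\expand{\sys}}{\modMat}=\rem{\expand{\rel}\expandMat\sys}{\modMat}$, so $\expand{\rel}$ is a relation for $(\modMat,\expand{\sys})$ if and only if $\expand{\rel}\expandMat$ is one for $\modRelGen$; recovering the $\shifts$-Popov basis then reduces to recovering the $(-\expand{\minDegs})$-Popov relation basis of the linearized instance and compressing it with $\expandMat$, as in Step~\textbf{4}. The point to re-examine for column-reduced $\modMat$ is that $\boldsymbol{\tau}$ is large enough for order-$\boldsymbol{\tau}$ minimal approximants to be exact relations. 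Writing a pivot row of the $\shifts[u]$-Popov approximant basis as $[\mat{P}_1\;\;\mat{P}_2]$, the shift $\shifts[u]=(-\expand{\minDegs},-\degExp,\ldots,-\degExp)$ together with the pivot degrees $\expand{\minDegs}$ forces $\deg(\mat{P}_1)\le\degExp$ and $\deg(\mat{P}_2)\le\degExp$; combined with $\cdeg{\expand{\sys}}<\cdeg{\modMat}$ this yields $\cdeg{\mat{P}_1\expand{\sys}+\mat{P}_2\modMat}_j<\cdeg{\matcol{\modMat}{j}}+\degExp+1=\tau_j$, so the congruence $\mat{P}_1\expand{\sys}+\mat{P}_2\modMat\equiv\matz\bmod\var^{\boldsymbol{\tau}}$ is an exact identity. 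Then \cref{lem:kernel_relbas}, applied with $\shifts[w]=(-\degExp,\ldots,-\degExp)$ (admissible since $\max(\shifts[w])=-\degExp=\min(-\expand{\minDegs})$), identifies these rows with the $(-\expand{\minDegs})$-Popov relation basis and gives pivot index $(1,\ldots,\expand{\nbun})$, so the principal $\expand{\nbun}\times\expand{\nbun}$ block $\mat{P}$ of Step~\textbf{4} is exactly that basis.

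For the cost I would treat Steps~\textbf{2}--\textbf{4} separately. Step~\textbf{2} costs $\softO{(\nbun\nbeq^{\expmatmul-2}+\nbeq^{\expmatmul-1})\vsdim}$ by \cref{prop:algo:residual}, and a short case analysis on whether $\nbun\le\nbeq$ places this inside $\softO{\nbun^{\expmatmul-1}\vsdim+\nbeq^\expmatmul\vsdim/\nbun}$. Step~\textbf{3} is a $\shifts[u]$-Popov approximant basis of the $(\expand{\nbun}+\nbeq)\times\nbeq$ matrix at order $\boldsymbol{\tau}$; since $\expand{\nbun}<2\nbun$, $\nbun\degExp\le\sumVec{\minDegs}+\nbun$, and $\sumVec{\boldsymbol{\tau}}=\vsdim+\nbeq(\degExp+1)\in\bigO{\vsdim+\nbeq\vsdim/\nbun}$, these parameters match the diagonal case of \cite{Neiger16}, whose cost analysis then applies verbatim and gives $\softO{\nbun^{\expmatmul-1}\vsdim+\nbeq^\expmatmul\vsdim/\nbun}$. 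Step~\textbf{4} multiplies $\mat{P}$, of degree at most $\degExp$, by the structured $\expandMat$, which only recombines column blocks via the shifts $\var^{j\degExp}$ at cost $\bigO{\nbun^2\degExp}\subseteq\bigO{\nbun\vsdim}$; summing, Step~\textbf{3} dominates.

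The main obstacle I anticipate is the two-sided constraint on $\boldsymbol{\tau}$: it must be high enough that minimal approximants collapse to exact kernel relations (the degree inequality above, which uses both $\cdeg{\expand{\sys}}<\cdeg{\modMat}$ and the degree control imposed by $\shifts[u]$), yet have small enough sum to keep Step~\textbf{3} within budget. The other delicate point, which I would settle by appeal to \cite{JeNeScVi16,Neiger16} rather than reprove, is the partial-linearization bookkeeping: that $\expand{\minDegs}$ is genuinely the minimal degree after linearization and that compression by $\expandMat$ with the row selection of Step~\textbf{4} returns the $\shifts$-Popov form. Passing from a diagonal $\modMat$ to a column-reduced one leaves this bookkeeping untouched, which is why the bulk of the new work concentrates on $\expand{\sys}$ and on the order $\boldsymbol{\tau}$.
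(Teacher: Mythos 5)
Your proposal is correct and takes essentially the same approach as the paper: the paper likewise defers the partial-linearization bookkeeping to \cite[Sec.\,2.1]{Neiger16} and \cite[Sec.\,4]{JeNeScVi16}, bounds Step\,\textbf{2} via \cref{prop:algo:residual}, and bounds Step\,\textbf{3} by the approximant-basis cost of \cite{JeNeScVi16} using $\sumVec{\boldsymbol{\tau}} \le (1+\nbeq/\nbun)\vsdim$, so your explicit exactness argument for $\boldsymbol{\tau}$ via \cref{lem:kernel_relbas} is simply a fleshed-out version of that cited material. The one fact you use implicitly but should state is $\minDeg_1+\cdots+\minDeg_\nbun \le \vsdim$, which follows from \cref{lem:degdetbound} and is what makes $\degExp \in \bigO{\vsdim/\nbun}$, hence your bound on $\sumVec{\boldsymbol{\tau}}$ and the applicability of \cref{prop:algo:residual}, valid.
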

\begin{proof}
  The correctness follows from the material in \cite[Sec.\,2.1]{Neiger16} and
  \cite[Sec.\,4]{JeNeScVi16}. Concerning the cost bound, we first note that we
  have $\minDeg_1+\cdots+\minDeg_\nbun \le \vsdim$ according to
  \cref{lem:degdetbound}. Thus, the cost analysis in \cref{prop:algo:residual}
  shows that Step\,\textbf{2} uses $\softO{(\nbun\nbeq^{\expmatmul-2} +
  \nbeq^{\expmatmul-1})\vsdim}$ operations. \cite[Thm.\,1.4]{JeNeScVi16} states
  that the approximant basis computation at Step\,\textbf{3} uses
  $\softO{(\nbun+\nbeq)^{\expmatmul-1} (1 + \nbeq/\nbun) \vsdim}$ operations,
  since the row dimension of the input matrix is $\expand{\nbun}+\nbeq \le
  2\nbun+\nbeq$ and the sum of the orders is $\sumVec{\boldsymbol{\tau}} =
  \sumVec{\cdeg{\modMat}} + \nbeq(\degExp+1) \le (1 + \nbeq/\nbun)\vsdim$.
\end{proof}

\subsection{Solution based on fast linear algebra}
\label{subsec:linalg}

Here, we detail how previous work can be used to handle a base case of the
recursion in \cref{algo:relbas} (Step\,\textbf{1}): when the vector space
dimension $\deg(\det(\modMat))$ of the input module is small compared to the
number $\nbun$ of input elements. Then, we rely on an interpretation of
\cref{pbm:relbas} as a question of dense linear algebra over $\field$, which is
solved efficiently by \cite[Alg.\,9]{JeNeScVi17}.  This yields the following
result.

\begin{proposition}
  \label{prop:relbas_linalg}
  Assuming that $\modMat$ is in shifted Popov form, and that $\cdeg{\sys} <
  \cdeg{\modMat}$, there is an algorithm which solves \cref{pbm:relbas} using
  $\softO{\vsdim^{\expmatmul} \lceil \nbun/\vsdim \rceil}$ operations in
  $\field$, where $\vsdim = \deg(\det(\modMat))$.
\end{proposition}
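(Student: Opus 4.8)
The plan is to linearize the problem over $\field$ and delegate the core computation to \cite[Alg.\,9]{JeNeScVi17}. Since $\modMat$ is in $\shifts$-Popov form it is column reduced; write $(\order_1,\ldots,\order_\nbeq)=\cdeg{\modMat}$. A column $j$ with $\order_j=0$ consists only of constants, hence by the Popov structure equals a standard basis vector, while the corresponding column of $\sys$ vanishes because $\cdeg{\sys}<\cdeg{\modMat}$; thus \cref{cor:cleaning} lets us discard all such columns without altering $\modRelGen$, and we may assume $\order_j\ge1$ for every $j$, so that $\nbeq\le\vsdim$. By \cref{thm:quo_rem}, the classes of the monomial vectors $\var^k\vecc{e}_j$ for $1\le j\le\nbeq$ and $0\le k<\order_j$ then form a $\field$-basis of $\polRing^\nbeq/\module$, of cardinality $\order_1+\cdots+\order_\nbeq=\vsdim$ (see \cref{lem:vsdim_colredbasis}).

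Having fixed this basis, I would encode the input as two matrices over $\field$: the matrix $\evMat\in\evMatSpace$ whose $i$th row $\evRow_i$ collects the coordinates of $f_i$ (these are simply the coefficients of the $i$th row of $\sys$, which is already reduced modulo $\modMat$), and the matrix $\mulmats\in\mulmatSpace$ of multiplication by $\var$ in this basis. The map $\mulmats$ sends $\var^k\vecc{e}_j$ to $\var^{k+1}\vecc{e}_j$ when $k<\order_j-1$, and to $\rem{\var^{\order_j}\vecc{e}_j}{\modMat}$ when $k=\order_j-1$; these last $\nbeq$ vectors are exactly the rows of $\rem{\diag{\var^{\order_1},\ldots,\var^{\order_\nbeq}}}{\modMat}$, which one call to \algoname{PM-QuoRem} (\cref{algo:division}, with $\maxDeg=1$) computes in $\softO{\nbeq^{\expmatmul-1}\vsdim}\subseteq\softO{\vsdim^\expmatmul}$ operations by \cref{prop:algo:quo_rem}, using $\nbeq\le\vsdim$. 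With this data, a vector $\rel=[p_1,\ldots,p_\nbun]$ lies in $\modRelGen$ if and only if $\sum_{1\le i\le\nbun}\evRow_i\,p_i(\mulmats)=\matz$, since $\var^k f_i$ has coordinate vector $\evRow_i\mulmats^k$, so that $p_1 f_1+\cdots+p_\nbun f_\nbun=0$ in the quotient is equivalent to this identity over $\field$.

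This is precisely the $\field$-linear module-of-relations problem solved by \cite[Alg.\,9]{JeNeScVi17}, which, given $\mulmats$, $\evMat$, and the shift $\shifts\in\shiftSpace$, returns the $\shifts$-Popov relation basis within $\softO{\vsdim^\expmatmul\lceil\nbun/\vsdim\rceil}$ operations; this dominates the $\softO{\vsdim^\expmatmul}$ setup cost and yields the claimed bound. The main points to verify are that the linearization faithfully realizes $\modRelGen$ as the relation module of $(\evMat,\mulmats)$ and that it matches the input conventions of the cited algorithm; the genuine algorithmic work---computing a shifted Popov basis of these $\field$-linear relations by fast dense linear algebra, processing the $\nbun$ elements in blocks of size about $\vsdim$---is carried out there, and is where the $\lceil\nbun/\vsdim\rceil$ factor originates.
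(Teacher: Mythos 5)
Your proposal is correct and follows essentially the same route as the paper's own discussion in \cref{subsec:linalg}: the same monomial basis justified by \cref{lem:vsdim_colredbasis}, the same encoding of $\sys$ into $\evMat$ and of multiplication by $\var$ into $\mulmats$, and the same delegation of the core work and of the $\softO{\vsdim^{\expmatmul}\lceil\nbun/\vsdim\rceil}$ cost to \cite[Alg.\,9]{JeNeScVi17}, with the reduction to positive column degrees via \cref{cor:cleaning} made explicit. The only (harmless) deviation is your call to \algoname{PM-QuoRem} for the rows $\rem{\var^{\order_j}\vecc{e}_j}{\modMat}$: since $\modMat$ is in shifted Popov form one can write $\modMat = \diag{\var^{\order_1},\ldots,\var^{\order_\nbeq}} - \mat{A}$ with $\cdeg{\mat{A}} < \cdeg{\modMat}$, so these remainders are exactly the rows of $\mat{A}$ and the paper simply reads them off from the coefficients of $\modMat$ at no arithmetic cost, whereas your extra $\softO{\nbeq^{\expmatmul-1}\vsdim}$ step, though superfluous, stays within the stated bound.
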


This cost bound is $\softO{\vsdim^{\expmatmul-1} \nbun} \subseteq
\softO{\nbun^{\expmatmul-1} \vsdim}$ when $\vsdim \in \bigO{\nbun}$. To see why
relying on fast linear algebra is sufficient to obtain a fast algorithm when
$\vsdim \in \bigO{\nbun}$, we note that this implies that the average column
degree of the $\shifts$-Popov relation basis $\relbas$ is 
\[
  \sumVec{\cdeg{\relbas}}/\nbun = \deg(\det(\relbas))/\nbun \le
  \vsdim / \nbun \in
\bigO{1}.
\]

For example, if $\vsdim \le \nbun$, most entries in this basis have degree $0$:
we are essentially dealing with matrices over $\field$. On the other hand, when
$\nbun \in \bigO{\vsdim}$, this approach based on linear algebra uses
$\softO{\vsdim^\expmatmul}$ operations, which largely exceeds our target cost.

We now describe how to translate our problem into the $\field$-linear algebra
framework in \cite{JeNeScVi17}. Let $\module$ denote the row space of
$\modMat$; we assume that $\modMat$ has no identity column. In order to
compute in the quotient $\polRing^\nbeq / \module$, which has finite dimension
$\vsdim$, it is customary to make use of the \emph{multiplication matrix} of
$\var$ with respect to a given monomial basis.  Here, since the basis $\modMat$
of $\module$ is in shifted Popov form with column degree
$(\order_1,\ldots,\order_\nbeq) \in \ZZp^\nbeq$, \cref{lem:vsdim_colredbasis}
suggests to use the monomial basis
\[
  \{(\var^{i},0,\ldots,0), 0 \le i < \order_1\}
  \cup \cdots \cup
  \{(0,\ldots,0,\var^{i}), 0 \le i < \order_n\}.
\]
Above, we have represented an element in $\polRing^\nbeq / \module$ by a
polynomial vector $\row{f} \in \polMatSpace[1][\nbeq]$ such that
$\cdeg{\row{f}} < (\order_1,\ldots,\order_\nbeq)$. In the linear algebra
viewpoint, we rather represent it by a constant vector $\evRow \in
\matSpace[1][\vsdim]$, which is formed by the concatenations of the coefficient
vectors of the entries of $\row{f}$. Applying this to each row of the input
matrix $\sys$ yields a constant matrix $\evMat \in \evMatSpace$, which is
another representation of the same $\nbun$ elements in the quotient.

Besides, the multiplication matrix $\mulmats \in \mulmatSpace$ is the matrix
such that $\evRow \mulmats \in \matSpace[1][\vsdim]$ corresponds to the
remainder in the division of $\var \row{f}$ by $\modMat$. Since the basis
$\modMat$ is in shifted Popov form, the computation of $\mulmats$ is
straightforward. Indeed, writing $\modMat =
\diag{\var^{\order_1},\ldots,\var^{\order_\nbeq}} - \mat{A}$ where $\mat{A} \in
\modSpace$ is such that $\cdeg{\mat{A}} < (\order_1,\ldots,\order_\nbeq)$, then
\begin{itemize}
  \item the row $\order_1 + \cdots + \order_{i-1} + j$ of $\mulmats$ is the
    unit vector with $1$ at index $\order_1+\cdots+\order_{i-1} + j+1$, for $1
    \le j < \order_i$ and $1 \le i \le \nbeq$,
  \item the row $\order_1 + \cdots + \order_{i}$ of $\mulmats$ is the
    concatenation of the coefficient vectors of the row $i$ of $\mat{A}$, for
    $1 \le i \le \nbeq$.
\end{itemize}
That is, writing $\mat{A} = [a_{ij}]_{1 \le i,j \le \nbeq}$ and denoting by
$\{a_{ij}^{(k)}, 0 \le k < \order_j\}$ the coefficients of $a_{ij}$, the
multiplication matrix $\mulmats \in \matSpace[\vsdim]$ is
\[
    \begin{bmatrix}
      & 1                                    \\
      &   & \ddots                           \\
      &   &        & 1                       \\
      a_{11}^{(0)} & a_{11}^{(1)} & \;\cdots\; & a_{11}^{(\order_1-1)} & \cdots & a_{1\nbeq}^{(0)} & a_{1\nbeq}^{(1)} & \cdots & a_{1\nbeq}^{(\order_\nbeq-1)} \\[0.2cm]
      &   &        &   & \;\ddots\; \\[0.2cm]
      &   &        &   &            &   & 1               \\
      &   &        &   &            &   &   & \ddots      \\
      &   &        &   &            &   &   &        & 1  \\
      a_{\nbeq 1}^{(0)} & a_{\nbeq 1}^{(1)} & \;\cdots\; & a_{\nbeq 1}^{(\order_1-1)} & \cdots & a_{\nbeq\nbeq}^{(0)} & a_{\nbeq\nbeq}^{(1)} & \cdots & a_{\nbeq\nbeq}^{(\order_\nbeq-1)}
    \end{bmatrix}
  .
\]

\section{Relations modulo Hermite forms}
\label{sec:relbas}

In this section, we give a fast algorithm for solving \cref{pbm:relbas} when
$\modMat$ is in Hermite form; this matrix is denoted by $\modHer$ in what
follows. The cost bound is given under the assumption that $\modHer$ has no
identity column; how to reduce to this case by discarding columns of $\modHer$
and $\sys$ was discussed in \cref{cor:cleaning}. We recall that
Steps\,\textbf{1}, \textbf{2}, and \textbf{3.i} have been discussed in
\cref{sec:building_blocks}.

\begin{proposition}
  \label{prop:algo:relbas}
  \cref{algo:relbas} is correct and, assuming the entries $\cdeg{\modHer}$ are
  positive, it uses $\softO{\nbun^{\expmatmul-1} \vsdim + \nbeq^\expmatmul
\vsdim / \nbun}$ operations in $\field$, where $\vsdim =
\sumVec{\cdeg{\modHer}} = \deg(\det(\modHer))$. 
\end{proposition}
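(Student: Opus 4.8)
The plan is to establish correctness via the divide-and-conquer structure of \cref{thm:basis_transitivity}, and then to bound the cost by carefully tracking the work done at each node of the recursion tree. Since $\modHer$ is in Hermite form, it is upper triangular, hence column reduced, so I can write it in the block form $\modHer = \big[\begin{smallmatrix} \modHer_1 & \anyMat \\ \matz & \modHer_2 \end{smallmatrix}\big]$ with $\modHer_1, \modHer_2$ each Hermite forms of dimension about $\nbeq/2$, and split $\sys = [\sys_1\;\;\sys_2]$ accordingly. The recursion then proceeds as follows: recursively compute a relation basis $\relbas_1$ for $\modRelCustom{\modHer_1}{\sys_1}$; compute the residual $\res = \rem{\relbas_1[\sys_1\;\;\sys_2]}{\modHer}$, which by \cref{thm:basis_transitivity} has the form $[\matz\;\;\res]$; recursively compute a relation basis $\relbas_2$ for $\modRelCustom{\modHer_2}{\res}$; and return (a canonical form of) $\relbas_2\relbas_1$. \cref{thm:basis_transitivity} guarantees that this product is a relation basis for $\modRelGen$. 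The base cases are $\nbeq = 1$, handled by \cref{prop:algo:rk_one}, and the situation $\vsdim \le \nbun$, handled by \cref{prop:relbas_linalg}.

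\textbf{Correctness.} For correctness I would argue by induction on $\nbeq$. The base cases are immediate from \cref{prop:algo:rk_one} and \cref{prop:relbas_linalg}. For the inductive step, the key point is that all intermediate bases must be in $\shifts$-Popov form so that the degree-driven strategy of \cref{subsec:building_blocks:knowndeg} applies: I first recover the $\shifts$-minimal degree of $\modRelGen$ by a ``cheap'' pass through the recursion that computes only degree information, then invoke \cref{algo:knowndeg_relbas} (\cref{prop:algo:knowndeg}) to produce the $\shifts$-Popov basis once this minimal degree is known. The residual computation at the middle step is exactly the modular product handled by \cref{algo:residual} (\cref{prop:algo:residual}); here I must check that its input hypotheses hold, namely $\cdeg{\res} < \cdeg{\modHer_2}$, which follows since $\res = \rem{\cdots}{\modHer_2}$, and $\sumVec{\cdeg{\relbas_1}} \le \sumVec{\cdeg{\modHer_1}} \le \vsdim$ by \cref{lem:degdetbound}.

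\textbf{Cost.} The cost analysis is where the real work lies. Let $T(\nbeq, \vsdim)$ denote the cost on an instance of dimension $\nbeq$ with $\vsdim = \sumVec{\cdeg{\modHer}}$. Writing $\vsdim = \vsdim_1 + \vsdim_2$ with $\vsdim_i = \sumVec{\cdeg{\modHer_i}}$, the two recursive calls contribute $T(\nbeq/2,\vsdim_1) + T(\nbeq/2,\vsdim_2)$, and the work at the current node consists of the residual computation by \cref{algo:residual} and the known-degree basis computation by \cref{algo:knowndeg_relbas}. By \cref{prop:algo:residual} and \cref{prop:algo:knowndeg}, these cost $\softO{(\nbun^{\expmatmul-1} + \nbeq^{\expmatmul-1})\vsdim + \nbeq^\expmatmul \vsdim/\nbun}$ operations at the root, and since both $\nbun^{\expmatmul-1}\vsdim$ and $\nbeq^{\expmatmul-1}\vsdim$ are dominated by $\nbun^{\expmatmul-1}\vsdim + \nbeq^\expmatmul\vsdim/\nbun$, the per-node cost is $\softO{\nbun^{\expmatmul-1}\vsdim + \nbeq^\expmatmul\vsdim/\nbun}$. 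The subtle point is that summing this over a level of the recursion tree must not accumulate extra factors. At a fixed depth, the dimensions $\nbeq$ halve while the associated values of $\vsdim$ partition the original $\vsdim$; crucially $\nbun$ stays fixed across all nodes (only $\nbeq$ is being split). For the term $\nbun^{\expmatmul-1}\vsdim$, the $\vsdim$-values at one level sum to the global $\vsdim$, giving $\softO{\nbun^{\expmatmul-1}\vsdim}$ per level; for the term $\nbeq^\expmatmul\vsdim/\nbun$, at a level where each sub-instance has dimension $\nbeq/2^i$, superadditivity of $t \mapsto t^{\expmatmul}$ together with the concavity-type bound $\sum_j (\nbeq/2^i)^{\expmatmul-1}\vsdim_j \le \nbeq^{\expmatmul-1}\vsdim$ keeps the total at $\softO{\nbeq^\expmatmul\vsdim/\nbun}$ per level. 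With $O(\log \nbeq)$ levels, the logarithmic factors are absorbed into the $\softO{\cdot}$ notation, yielding the claimed total.

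\textbf{Main obstacle.} I expect the hardest part to be the cost accounting for the $\nbeq^\expmatmul \vsdim / \nbun$ term across the recursion, since naively bounding each $\nbeq$ by the root dimension would overcount: one must exploit that $\sum_j (\nbeq_j)^{\expmatmul-1} \vsdim_j$ is controlled when the $\nbeq_j$ sum to $\nbeq$ and the $\vsdim_j$ sum to $\vsdim$, and that the base-case switch to \cref{prop:relbas_linalg} when $\vsdim \le \nbun$ prevents the linear-algebra cost $\softO{\vsdim^\expmatmul}$ from dominating. Verifying that the input-size assumptions of \cref{prop:algo:residual}, \cref{prop:algo:knowndeg}, and the base-case propositions are preserved along every branch of the recursion — in particular that $\cdeg{\modHer_2}$ stays positive and $\cdeg{\res} < \cdeg{\modHer_2}$ — is the other place where care is needed.
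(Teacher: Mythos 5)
Your overall structure matches the paper's algorithm closely: the same splitting of $\modHer$ via \cref{thm:basis_transitivity}, the same two base cases, the residual via \cref{algo:residual}, the final call to \algoname{KnownDegreeRelations}, and a recursion-tree cost analysis that is consistent with the paper's. However, there is a genuine gap at the heart of the correctness argument: the shift used in the second recursive call. You recurse on $\modRelCustom{\modHer_2}{\res}$ without specifying the shift, and you assert that ``all intermediate bases must be in $\shifts$-Popov form.'' In the paper's algorithm the second call is made with the \emph{updated} shift $\shifts[t] = \shifts + \minDegs_1 = \rdeg[\shifts]{\relbas_1}$, so $\relbas_2$ is in $\shifts[t]$-Popov form, not $\shifts$-Popov form. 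This update is not cosmetic: the claim that makes the whole degree-driven strategy work is that the $\shifts$-minimal degree of $\modRel$ equals $\minDegs_1 + \minDegs_2$, and the paper proves it by invoking the result of \cite[Sec.\,3]{JeNeScVi16} on column degrees of shifted Popov forms of products, which applies precisely because $\relbas_1$ is in $\shifts$-Popov form and $\relbas_2$ is in $(\shifts+\minDegs_1)$-Popov form. If instead $\relbas_2$ were the $\shifts$-Popov basis of $\modRelCustom{\modHer_2}{\res}$, the product $\relbas_2\relbas_1$ would still be a relation basis by \cref{thm:basis_transitivity}, but its $\shifts$-Popov form would in general not have column degree $\minDegs_1 + \minDegs_2$, and the final call to \algoname{KnownDegreeRelations} would be fed a wrong minimal degree, invalidating the output.

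A related imprecision: you describe the recursion both as returning ``(a canonical form of) $\relbas_2\relbas_1$'' and as a ``cheap pass that computes only degree information.'' Neither is quite what happens. The recursive calls compute full shifted Popov bases of the two subproblems (there is no separate degree-only pass), the product $\relbas_2\relbas_1$ is never formed nor normalized (doing so is exactly what the known-degree technique is designed to avoid), and only the diagonal degrees $\minDegs_1,\minDegs_2$ are retained and passed to \algoname{KnownDegreeRelations}, which recomputes the basis from $\modHer$, $\sys$, and $\shifts$. The justification that $\minDegs_1+\minDegs_2$ is indeed the $\shifts$-minimal degree of $\modRel$ --- the missing lemma above --- is exactly what is needed to make this rigorous. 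Once that is in place, the rest of your argument (the hypothesis checking for \cref{prop:algo:residual,prop:algo:knowndeg}, the positivity of $\cdeg{\modHer_1}$ and $\cdeg{\modHer_2}$, and the per-level cost summation with the $\vsdim_j$ partitioning $\vsdim$) goes through essentially as in the paper.
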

\begin{proof}
  Following the recursion in the algorithm, our proof is by induction on
  $\nbeq$, with two base cases (Steps\,\textbf{1} and\,\textbf{2}).

  The correctness and the cost bound for Step\,\textbf{1} follows from the
  discussion in \cref{subsec:linalg}, as summarized in
  \cref{prop:relbas_linalg}. From \cref{subsec:building_blocks:rk_one},
  Step\,\textbf{2} correctly computes the $\shifts$-Popov relation basis and
  uses $\softO{\nbun^{\expmatmul-1} \vsdim}$ operations in $\field$.

  Now, we focus on the correctness of Step\,\textbf{3}, assuming that the two
  recursive calls at Steps\,\textbf{3.d} and\,\textbf{3.g} correctly compute
  the shifted Popov relation bases. Since \algoname{KnownDegreeRelations} is
  correct, it is enough to prove that the $\shifts$-minimal degree of $\modRel$
  is $\minDegs_1 + \minDegs_2$; for this, we will show that $\relbas_2
  \relbas_1$ is a relation basis for $\modRel$ whose $\shifts$-Popov form has
  column degree $\minDegs_1 + \minDegs_2$.

  From \cref{thm:basis_transitivity}, $\relbas_2 \relbas_1$ is a relation basis
  for $\modRel$. Furthermore, the fact that the $\shifts$-Popov form of
  $\relbas_2 \relbas_1$ has column degree $\minDegs_1 + \minDegs_2$ follows
  from \cite[Sec.\,3]{JeNeScVi16}, since $\relbas_1$ is in $\shifts$-Popov form
  and $\relbas_2$ is in $\shifts[t]$-Popov form, where $\shifts[t] = \shifts +
  \minDegs_1 = \rdeg[\shifts]{\relbas_1}$.

  Concerning the cost of Step\,\textbf{3}, we remark that $\nbun < \vsdim$,
  that $\nbeq \le \vsdim$ is ensured by $\cdeg{\modHer} > \unishift$, and that
  $\minDegs_1+\minDegs_2=\deg(\det(\relbas_2\relbas_1)) \le \vsdim$ according
  to \cref{lem:degdetbound}.  Furthermore, there are two recursive calls with
  dimension about $\nbeq/2$, and with $\modHer_1$ and $\modHer_2$ that are in
  Hermite form and have determinant degrees $\vsdim_1 = \deg(\det(\modHer_1))$
  and $\vsdim_2 = \deg(\det(\modHer_2))$ such that $\vsdim = \vsdim_1 +
  \vsdim_2$. Besides, the entries of both $\cdeg{\modHer_1}$ and
  $\cdeg{\modHer_2}$ are all positive.
  
  In particular, the assumptions on the parameters in
  \cref{prop:algo:residual,prop:algo:knowndeg}, concerning the computation of
  the residual at Step\,\textbf{3.f} and of the relation basis when the degrees
  are known at Step\,\textbf{3.i}, are satisfied. Thus, these steps use
  $\softO{(\nbun^{\expmatmul-1} + \nbeq^{\expmatmul-1}) \vsdim} $ and
  $\softO{\nbun^{\expmatmul-1} \vsdim + \nbeq^\expmatmul \vsdim/\nbun}$
  operations, respectively. The announced cost bound follows.
\end{proof}

\begin{algobox}
  \algoInfo
  {RelationsModHermite}
  {algo:relbas}

  \dataInfos{Input}{
    \item matrix $\modHer \in \modSpace$ in Hermite form,
    \item matrix $\sys \in \sysSpace$ such that $\cdeg{\sys} < \cdeg{\modHer}$,
    \item shift $\shifts \in \shiftSpace$.
  }

  \dataInfo{Output}{
    the $\shifts$-Popov relation basis for $\modRel$.\medskip
  }

  \algoSteps{
    \item \algoword{If} $\vsdim = \sumVec{\cdeg{\modHer}} \le \nbun$:
      \begin{enumerate}[{\bf a.}]
        \item build $\mulmats \in \matSpace[\vsdim]$ from $\modHer$ as in \cref{subsec:linalg}
        \item build $\evMat \in \matSpace[\nbun][\vsdim]$ from $\sys$ as in \cref{subsec:linalg}
        \item $\relbas \assign$ \cite[Alg.\,9]{JeNeScVi17} on input $(\evMat,\mulmats,\shifts,2^{\lceil\log_2(\vsdim)\rceil})$
        \item \algoword{Return} $\relbas$
      \end{enumerate}
    \item \algoword{Else if} $\nbeq = 1$ \algoword{then}
      \begin{enumerate}[{\bf a.}]
        \item $\relbas \assign$ \cite[Alg.\,2]{Neiger16} on input $(\modHer,\sys,\shifts,2\vsdim)$
        \item \algoword{Return} $\relbas$
      \end{enumerate}
    \item \algoword{Else}:
      \begin{enumerate}[{\bf a.}]
        \item $\nbeq_1 \assign \lfloor \nbeq / 2 \rfloor$; $\nbeq_2 \assign \lceil \nbeq / 2 \rceil$
        \item $\modHer_1$ and $\modHer_2$ $\assign$
          the $\nbeq_1 \times \nbeq_1$ leading and $\nbeq_2 \times
          \nbeq_2$ trailing principal submatrices of $\modHer$
        \item $\sys_1 \assign$ first $\nbeq_1$ columns of $\sys$
        \item $\relbas_1 \assign \algoname{RelationsModHermite}(\modHer_1,\sys_1,\shifts)$
        \item $\minDegs_1 \assign$ diagonal degrees of $\relbas_1$
        \item $\res \assign$ last $\nbeq_2$ columns of $\algoname{Residual}(\modHer,\relbas_1,\sys)$
        \item $\relbas_2 \assign \algoname{RelationsModHermite}(\modHer_2,\res,\shifts+\minDegs_1)$
        \item $\minDegs_2 \assign$ diagonal degrees of $\relbas_2$
        \item \algoword{Return} $\algoname{KnownDegreeRelations}(\modHer,\sys,\shifts,\minDegs_1+\minDegs_2)$ 
      \end{enumerate}
  }
\end{algobox}

\begin{acks}
  The authors thank Claude-Pierre Jeannerod for interesting discussions, Arne
  Storjohann for his helpful comments on high-order lifting, and the reviewers
  whose remarks helped to prepare the final version of this paper. The research
  leading to these results has received funding from the People Programme
  (Marie Curie Actions) of the European Union's Seventh Framework Programme
  (FP7/2007-2013) under REA grant agreement number 609405 (COFUNDPostdocDTU).
  Vu Thi Xuan acknowledges financial support provided by the scholarship
  \emph{Explora Doc} from \emph{R\'egion Rh\^one-Alpes, France}, and by the
  LABEX MILYON (ANR-10-LABX-0070) of Universit\'e de Lyon, within the program
  \emph{Investissements d'Avenir} (ANR-11-IDEX-0007) operated by the French
  National Research Agency.
\end{acks}

\bibliographystyle{ACM-Reference-Format}

\end{document}